\documentclass[11pt]{article}

\usepackage{amsmath,amssymb,amsthm,amsfonts,latexsym,bbm,xspace,graphicx,float,mathtools}
\usepackage[backref,colorlinks,citecolor=blue,bookmarks=true]{hyperref}
\usepackage[letterpaper,margin=1in]{geometry}
\usepackage{color}
\usepackage{comment}
\usepackage{algorithmic,algorithm}

\usepackage{tweaklist}

\usepackage{fullpage}
\usepackage{appendix}
\usepackage[nomessages]{fp}

\usepackage{bbm}

\newtheorem{theorem}{Theorem}[section]

\newtheorem{lemma}{Lemma}[section]

\newtheorem{claim}{Claim}

\newtheorem{fact}{Fact}
\newtheorem{definition}{Definition}[section]

\newtheorem{observation}{Observation}

\newtheorem{remark}{Remark}[section]

\newcommand{\R}{\ensuremath{\mathbb{R}}} 

\newcommand{\poly}{{\rm poly}}

\newcommand{\notshow}[1]{{}}

\newcommand{\1}{\mathbbm{1}}

\DeclareMathOperator{\E}{\mathbb{E}}

\def \SW  {{\cal SW^*}}

\def\Ber{{\mathcal{B}}}

\DeclareMathOperator{\argmax}{argmax}


\definecolor{MyGray}{rgb}{0.8,0.8,0.8}

\begin{document}
\title{On Simultaneous Two-player Combinatorial Auctions}

\author{
Mark Braverman \thanks{Department of Computer Science, Princeton University, email: mbraverm@cs.princeton.edu. Research supported in part by an NSF CAREER award (CCF-1149888), NSF CCF-1215990, NSF CCF-1525342, a Packard Fellowship in Science and Engineering, and the Simons Collaboration on Algorithms and Geometry.}
\and
Jieming Mao  \thanks{Department of Computer Science, Princeton University, email: jiemingm@cs.princeton.edu.}
\and 
S. Matthew Weinberg \thanks{Department of Computer Science, Princeton University, email: smweinberg@princeton.edu. Research completed in part while the author was a Research Fellow at the Simons Institute for the Theory of Computing. }
}
\addtocounter{page}{-1}
\maketitle
\begin{abstract} 
We consider the following communication problem: Alice and Bob each have some valuation functions $v_1(\cdot)$ and $v_2(\cdot)$ over subsets of $m$ items, and their goal is to partition the items into $S, \bar{S}$ in a way that maximizes the \emph{welfare}, $v_1(S) + v_2(\bar{S})$. We study both the \emph{allocation problem}, which asks for a welfare-maximizing partition and the \emph{decision problem}, which asks whether or not there exists a partition guaranteeing certain welfare, for binary XOS valuations. For interactive protocols with $\poly(m)$ communication, a tight 3/4-approximation is known for both~\cite{Feige06,DobzinskiS06}. 

For interactive protocols, the allocation problem is provably \emph{harder} than the decision problem: any solution to the allocation problem implies a solution to the decision problem with one additional round and $\log m$ additional bits of communication via a trivial reduction. Surprisingly, the allocation problem is provably \emph{easier} for simultaneous protocols. Specifically, we show:
\begin{itemize}
\item There exists a simultaneous, randomized protocol with polynomial communication that selects a partition whose expected welfare is at least $3/4$ of the optimum. This matches the guarantee of the best interactive, randomized protocol with polynomial communication. 
\item For all $\varepsilon > 0$, any simultaneous, randomized protocol that decides whether the welfare of the optimal partition is $\geq 1$ or $\leq 3/4 - 1/108+\varepsilon$ correctly with probability $> 1/2 + 1/ \poly(m)$ requires exponential communication. This provides a separation between the attainable approximation guarantees via interactive ($3/4$) versus simultaneous ($\leq 3/4-1/108$) protocols with polynomial communication.
\end{itemize}
In other words, this trivial reduction from decision to allocation problems provably requires the extra round of communication. We further discuss the implications of our results for the design of truthful combinatorial auctions in general, and extensions to general XOS valuations. In particular, our protocol for the allocation problem implies a new style of truthful mechanisms. 


\end{abstract}
\newpage


\section{Introduction}\label{sec:intro}
\vspace{-2mm}
Intuitively, search problems (find the optimal solution) are considered ``strictly harder'' than decision problems (does a solution with quality $\geq Q$ exist?) for the following (formal) reason: once you find the optimal solution, you can simply evaluate it and check whether its quality is $\geq Q$ or not. The same intuition carries over to approximation as well: once you find a solution whose quality is within a factor $\alpha$ of optimal, you can distinguish between cases where solutions with quality $\geq Q$ exist and those where all solutions have quality $\leq \alpha Q$. The easy conclusion one then draws is that the communication (resp. runtime) required for an $\alpha$-approximation to any decision problem is upper bounded by the communication (resp. runtime) required for an $\alpha$-approximation to the corresponding search problem plus the communication (resp. runtime) required to evaluate the quality of a proposed solution. 

Note though that for communication problems, in addition to the negligible increase in communication (due to evaluating the quality of the proposed solution), this simple reduction might also require (at least) an extra round of communication (because the parties can evaluate a solution's quality only after it is found). Still, it seems hard to imagine that this extra round is really necessary, and that somehow protocols exist that guarantee an (approximately) optimal solution without (approximately) learning their quality. The surprising high-level takeaway from our main results is that \emph{this extra round of communication is provably necessary}: Theorems~\ref{thm:mainupper} and~\ref{thm:mainlower} provide a natural communication problem (combinatorial auctions) such that a $3/4$-approximation for the search problem can be found by a simultaneous protocol\footnote{A simultaneous protocol has one round of communication: Alice and Bob each simultaneously send a message and then no further communication takes place.} with polynomial communication, but every simultaneous prootcol guaranteeing a ($3/4-1/108 + \varepsilon$)-approximation for the decision problem requires $\exp(m)$ communication.

At this point, we believe our results to have standalone interest, regardless of how we wound up at this specific communication problem. But there is a rich history related to the design of truthful combinatorial auctions motivating our specific question, which we overview below.
\vspace{-1mm}
\subsection{Combinatorial Auctions - how did we get here?}

In a combinatorial auction, a designer with $m$ items wishes to allocate them to $n$ bidders so as to maximize the \emph{social welfare}. That is, if bidder $i$ has a monotone valuation function $v_i: 2^{[m]} \rightarrow \mathbb{R}_+$,\footnote{By monotone, we mean that $v_i(S) \geq v_i(T)$ for all $T \subseteq S$.} the designer wishes to find disjoint sets $S_1,\ldots, S_n$ maximizing $\sum_i v_i(S_i)$. The history of combinatorial auctions is rich, and the problem has been considered with and without incentives, with and without Bayesian priors, and in various models of computation (see Appendix~\ref{sec:related} for brief overview). The overarching theme in all of these works is to try and answer the following core question: \emph{Are truthful mechanisms as powerful as (not necessarily truthful) algorithms}?

For many instantiations of the above question, the answer is surprisingly yes. For example, without concern for computational/communication complexity, the celebrated Vickrey-Clarke-Groves auction is a truthful mechanism that always selects the welfare-maximizing allocation (and therefore achieves welfare equal to that of the best algorithm)~\cite{Vickrey61, Clarke71, Groves73}. Of course, the welfare maximization problem is NP-hard and also requires exponential communication between the bidders, even to guarantee a $1/\sqrt{m}$-approximation. A poly-time algorithm (with polynomial communication) is known to match this guarantee~\cite{Raghavan1988, KolliopoulosS1998, BriestKV2005}, and interestingly, a poly-time truthful mechanism (with polynomial communication) was later discovered as well~\cite{LaviS05}. 

The state of affairs gets even more interesting if we restrict to proper subclasses of monotone valuations such as \emph{submodular} valuations.\footnote{A function is submodular if $v(S) + v(T) \geq v(S \cup T) + v(S \cap T)$.} Here, a very simple greedy algorithm is known to find a 1/2-approximation in both $\poly(n, m)$ black-box value queries to each $v_i(\cdot)$, and polynomial runtime (in $n, m$, and the description complexity of each $v_i(\cdot)$)~\cite{LehmannLN01}, and a series of improvements provides now a (1-1/e)-approximation, which is tight~\cite{Vondrak08, MirrokniSV08, DobzinskiV12}. Yet, another series of works also proves that any truthful mechanism that runs in polynomial time (in $n, m$, and the description complexity of each $v_i(\cdot)$), or makes only $\poly(n, m)$ black-box value queries to each $v_i(\cdot)$ achieves at best an $1/m^{\Omega(1)}$-approximation~\cite{PapadimitriouSS08, BuchfuhrerDFKMPSSU10, BuchfuhrerSS10,DanielySS15, Dobzinski11,DughmiV11, DobzinskiV12b}. So while poly-time algorithms, or algorithms making $\poly(n,m)$ black-box value queries can achieve constant-factor approximations, poly-time truthful mechanisms and truthful mechanisms making $\poly(n, m)$ black-box value queries can only guarantee an $1/m^{\Omega(1)}$-approximation, and there is a separation.

But this is far from the whole story: already ten years ago, quite natural truthful mechanisms were developed that achieved an $1/O(\log^2 m)$-approximation~\cite{DobzinskiNS06}, which were subsequently improved to $1/O(\sqrt{\log m})$~\cite{Dobzinski07, KrystaV13, Dobzinski16a}, and even hold for the much broader class of XOS valuations.\footnote{A valuation is XOS if there exists a matrix of item valuations $v_{ij}$ and $v_i(S) = \max_{j}\{\sum_{i \in S} v_{ij}\}$. XOS valuations are also called fractionally subadditive, and are a proper subclass of subadditive valuations (where $v(S \cup T) \leq v(S) + v(T)$).} As these approximation guarantees are better than the lower bounds referenced in the previous paragraph, it seems that perhaps there should be some kind of contradiction: any reasonable definition of ``natural'' should imply ``poly-time,'' right? The catch is that each of these mechanisms are essentially posted-price mechanisms: they (essentially) offer each bidder a price $p_j$ for item $j$, and let the buyer choose any subset of items they want to purchase. These prices can be computed in poly-time, but the barrier is that deciding which subset of items the bidder wishes to purchase, called a \emph{demand query}, is in general NP-hard (assuming a succinct representation of the valuation function is given), or requires exponentially many black-box value queries. So the only reason these mechanisms don't fall victim to the strong lower bounds of the previous paragraph is because they get to ask each bidder to compute a single demand query, and this query is used to select exactly the set of items that bidder receives. 

The point is that while these existing separations are major results, and rule out certain classes of natural truthful mechanisms from achieving desirable approximation ratios, they are perhaps not addressing ``the right'' model if posted-price mechanisms with poly-time computable prices provide approximation guarantees that significantly outperform known lower bounds. Therefore, it seems that communication is really the right complexity measure to consider, if one wants the resulting lower bounds to hold against all ``natural'' mechanisms. Unfortunately, the state-of-affairs for communication complexity of combinatorial auctions lags pretty far behind the aforementioned complexity measures. For instance, existing literature doesn't provide a single lower bound against truthful mechanisms that doesn't also hold against algorithms. That is, wherever it's known that no truthful mechanism with communication at most $C$ obtains an approximation ratio better than $\alpha$ when buyers have valuations in class $V$, it's because it's also known that no algorithm/protocol with communication at most $C$ obtains an approximation ratio better than $\alpha$ when buyers have valuations in class $V$. On the other hand, the best known truthful mechanisms with polynomial communication for (say) XOS bidders achieve an $1/O(\sqrt{\log m})$ approximation~\cite{Dobzinski16a}, while the best known algorithms with polynomial communication obtain a $(1-1/e)$-approximation ~\cite{DobzinskiS06,Feige06,FeigeV06}. Even for the case of just two bidders, the best known truthful mechanisms with polynomial communication achieve a $1/2$-approximation (which is trivial - just give the grand bundle of all items to whoever values it most), while the best known algorithms with polynomial communication achieve a $3/4$-approximation (which is tight). It's fair to say that determining whether or not there's a separation in what approximation guarantees are possible for algorithms with polynomial communication and truthful mechanisms with polynomial communication for any class of valuations between submodular and subadditive is one of the core, concrete open problems in Algorithmic Mechanism Design.

Progress on this front had largely been stalled until very recent work of Dobzinski provided a clear path to possibly proving a separation (and it seems to be an accepted conjecture that indeed a separation exists)~\cite{Dobzinski16b}. Without getting into details of the complete result, one implication is the following: if there exists a truthful mechanism with polynomial communication for 2-player combinatorial auctions with XOS (/submodular/subadditive) valuations that guarantees an approximation ratio of $\alpha$, then there exists a \emph{simultaneous} protocol with polynomial communication for 2-player combinatorial auctions with XOS (/submodular/subadditive) valuations that guarantees an approximation ratio of $\alpha$ as well. Let us emphasize this point again: in general, interactive protocols with polynomial communication \emph{do not} imply simultaneous protocols with polynomial communication, and numerous well-known problems have polynomial interactive protocols, but require exponential simultaneous communication~\cite{Papadimitriou:1982:CC:800070.802192, Duris:1984:LBC:800057.808668, Nisan:1993:RCC:152322.152386, DobzinskiNO14, AlonNRW15, Assadi17}. But, Dobzinski's result asserts that because of the extra conditions on \emph{truthful} (interactive) mechanisms, their existence indeed implies a simultaneous (not necessarily truthful) protocol of comparable communication complexity. So ``all'' one has to do to prove lower bounds against truthful mechanisms for 2-player combinatorial auctions is prove lower bounds against simultaneous protocols, motivating the study of simultaneous 2-player combinatorial auctions.

At first glance, it perhaps seems obvious that achieving strictly better than a $1/2$-approximation via a simultaneous protocol should be impossible, and it's just a matter of finding the right tools to prove it.\footnote{Indeed, that is what the authors conjectured at the onset of this work.} This is because quite strong lower bounds are known for ``sketching'' valuation functions, that is, finding a succinct representation of a function that allows for \emph{approximate} evaluation of value queries. For example, it's known that any sketching scheme for XOS valuations that allows for evaluation of value queries to be accurate within a $o(m)$-factor requires superpoly($m$) size~\cite{BadanidiyuruDFKNR12}. So if somehow a $1/(2-\varepsilon)$-approximation could be guaranteed with a $\poly (m)$-communication simultaneous protocol, it is \emph{not} because enough information is transmitted to evaluate value queries within any non-trivial error. At first glance, it perhaps seems unlikely that such a protocol can possibly exist. Surprisingly, our work shows not only that a $1/(2-\varepsilon)$-approximation is achievable with $\poly(m)$ simultaneous communication, but (depending on exactly the question asked) $\poly(m)$ simultaneous communication suffices to achieve the same approximation guarantees as the best possible interactive protocol with $\poly(m)$ communication. 
\vspace{-1mm}
\subsection{Simultaneous Protocols for Welfare Maximization}
In this work, we specifically study the welfare maximization problem for two bidders with \emph{binary XOS valuations}.\footnote{A function is binary XOS if all $v_{ij}$ in the matrix representation are $0$ or $1$.} Binary XOS valuations are a natural starting point since welfare maximization is especially natural when phrased as a communication problem. Depending on whether one wants to decide the quality of the welfare-optimal allocation, or actually find an allocation inducing the optimal welfare, welfare maximization for binary XOS bidders is equivalent to one of the following:\footnote{Equivalent definitions are given in Section~\ref{sec:prelim} which are stated more in the language of welfare maximization. We pose these statements here since these formulations make for an especially natural communication problem.}
\begin{definition}[BXOS Decision Problem] Alice is given as input $a$ subsets of $[m]$, $A_1,\ldots, A_a$. Bob is given as input $b$ subsets of $[m]$, $B_1,\ldots, B_b$, and both see input $X$. Determine whether or not there exists an $i, j$ such that $|A_i \cup B_j| \geq X$. A protocol is an $\alpha$-approximation if whenever there exists an $i, j$ such that $|A_i \cup B_j| \geq X$, it answers yes, and whenever $\max_{i, j} \{|A_i \cup B_j|\} < X/\alpha$ it answers no, but may have arbitrary behavior in between.
\end{definition}

\begin{definition}[BXOS Allocation Problem] Alice is given as input $a$ subsets of $[m]$, $A_1,\ldots, A_a$. Bob is given as input $b$ subsets of $[m]$, $B_1,\ldots, B_b$. Output a partition of items $S, \bar{S}$ maximizing $\max_{i, j}\{|A_i \cap S| + |B_j \cap \bar{S}|\}$ (over all partitions).\footnote{A protocol is an $\alpha$-approximation if it outputs a partition $S, \bar{S}$ guaranteeing $\alpha \cdot \max_{i, j}\{|A_i \cap S| + |B_j \cap \bar{S}|\} \geq \max_{i, j, T}\{|A_i \cap T| + |B_j \cap T|\}$.}
\end{definition}

Recall that typically we think of decision problems as being ``easier'' than allocation/search problems: certainly if you can find a welfare maximizing allocation, you can also determine its welfare (and this claim is formal for interactive protocols with $\poly(m)$ communication). Our main result asserts that this intuition breaks down for simultaneous protocols: the decision problem is strictly harder than the allocation/search problem. To the best of our knowledge, this is the first instance of such a separation. 

\begin{theorem}\label{thm:mainupper} There exists a randomized, simultaneous protocol with $\poly(m)$ communication that obtains a $3/4$-approximation for the BXOS allocation problem. This is the best possible, as even randomized, interactive protocols require $2^{\Omega(m)}$ communication to do better. 
\end{theorem}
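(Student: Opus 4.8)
The plan is to prove the two halves separately: an upper bound (a simultaneous $3/4$-approximation protocol for the allocation problem) and a matching lower bound (interactive protocols need $2^{\Omega(m)}$ communication to beat $3/4$).

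For the upper bound, I would start from the known fact that the optimal welfare $\opt$ can be characterized via the LP relaxation / configuration LP, and that the $3/4$-guarantee of Feige and Dobzinski--Schapira is achieved by a rounding of a fractional allocation. The obstacle is that in the simultaneous model neither party knows the other's valuations, so they cannot compute the fractional optimum jointly. The idea is to have each party send a succinct random ``sketch'' of their input: concretely, each party picks its own locally-best clause/set and also reports how that set interacts with a small random collection of subsets of $[m]$ (e.g., random subsets, or random partitions, of size $\poly(m)$ many), together with the cardinalities $|A_i|$, $|B_j|$ relevant to each sketched piece. After receiving both sketches, a referee (or Alice, since the output need not itself be communicated) considers a small family of candidate partitions determined by the sketches -- for instance, partitions obtained by splitting according to the sampled subsets and by the ``half-half'' type splits that arise in the tight $3/4$-analysis -- and outputs the best one. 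The analysis then shows that with high probability one of these candidate partitions achieves welfare $\geq (3/4)\opt$; this is exactly where the combinatorial structure of binary XOS (each valuation is a max of coverage functions $|A_i \cap S|$) is used, since for coverage-type objectives random sampling concentrates and the relevant quantities $|A_i \cap S| + |B_j \cap \bar S|$ can be estimated from the sketch. Note that because this is the allocation problem, the protocol never needs to \emph{certify} the value $\opt$ -- it only needs one good partition to exist among the candidates and to be identifiable -- which is precisely the asymmetry with the decision problem exploited in Theorem~\ref{thm:mainlower}.

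For the lower bound I would invoke the hardness of welfare maximization for (binary) XOS valuations beyond $3/4$ even with unbounded interaction, which follows from the tightness results cited (\cite{Feige06, DobzinskiS06}) together with the communication lower bounds of \cite{DobzinskiNS06}-style constructions: one embeds a hard set-disjointness-like instance so that welfare $\geq 1$ versus $\leq 3/4+o(1)$ cannot be distinguished with subexponential communication, and since an $\alpha$-approximate \emph{allocation} can be evaluated (its welfare computed) with $O(\log m)$ extra bits of interaction, an interactive protocol beating $3/4$ on allocation would beat $3/4$ on decision, contradicting the bound. I would make sure the hard instances are genuinely binary XOS (all $v_{ij}\in\{0,1\}$), which is the only non-routine point on this side.

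The main obstacle I anticipate is the design and analysis of the simultaneous sketch: one must argue that a $\poly(m)$-size message suffices to pin down, across all $i,j$ simultaneously, a partition within $3/4$ of optimum, despite the sketching lower bounds for XOS value queries~\cite{BadanidiyuruDFKNR12} which say the messages \emph{cannot} encode enough to evaluate arbitrary value queries. The resolution -- and the conceptual heart of the theorem -- is that we do not need to evaluate arbitrary queries; we only need the \emph{single} query corresponding to the chosen partition to be (approximately) correct, so the sketch can be tailored to a small, jointly-determined candidate set rather than being universally accurate. Getting the probabilistic argument right so that the best candidate partition is simultaneously (i) present in the sketched family and (ii) evaluated accurately enough to be selected is where the real work lies.
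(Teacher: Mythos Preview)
Your lower-bound half is fine: the $3/4$ tightness for interactive protocols is indeed inherited from the known hardness results of \cite{Feige06,DobzinskiS06}, and the paper treats it exactly that way (a citation, not a new argument).

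The upper-bound half, however, is missing both of the ideas that actually make the protocol work, and the approach you sketch does not obviously get to $3/4$. First, the paper's protocol is \emph{one-sided}: Bob sends nothing. The reason this is possible is Claim~\ref{claim:binaryadditive}: when Alice's valuation is binary \emph{additive} and Bob's is binary XOS, the allocation ``give Alice exactly her $1$-items'' is already welfare-optimal, with no input from Bob whatsoever. So once Alice commits to a single clause $b_j$, the referee can output the optimal allocation for $(b_j,v_2)$ blind to $v_2$. This is the structural fact that separates allocation from decision in the binary case, and your proposal (both parties sketch, referee estimates and compares candidate partitions) does not use it; without it you are effectively trying to solve the decision problem simultaneously, which Theorem~\ref{thm:mainlower} shows is stuck below $3/4$.

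Second, the ``sketch'' is not a random sample or a collection of interactions with random subsets; it is the deterministic $(k,\tfrac12)$-sketch of Definition~\ref{def:bsketch}: the $k$ clauses maximizing $\sum_i (x_i-\tfrac12 x_i^2)$ where $x_i$ is the fraction of chosen clauses containing item $i$. The quadratic penalty is a regularizer that trades off ``large clauses'' against ``diverse clauses,'' and Lemma~\ref{lem:bsketch} quantifies this tradeoff. Alice then sends a \emph{single} uniformly random clause $b_j$ from this sketch. The analysis combines Claim~\ref{claim:binaryadditive} (so expected welfare $=\tfrac1k\sum_j \SW(b_j,v_2)$) with Lemma~\ref{lem:bsketch} and a short completion-of-squares calculation to get $(3/4-1/k)\cdot\SW(v_1,v_2)$. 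None of this involves concentration, random partitions, or evaluating candidate allocations; your anticipated obstacle (``can we estimate the value of the chosen partition from the sketch?'') simply never arises, because the protocol never estimates any value.
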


\begin{theorem}\label{thm:mainlower} For all $\varepsilon > 0$, any randomized, simultaneous protocol that obtains a $(3/4-1/108+\varepsilon)$-approximation for the BXOS decision problem {with probability larger than $1/2+1/\poly(m)$} requires $2^{\Omega(m)}$ communication. 
\end{theorem}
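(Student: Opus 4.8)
The plan is to prove the statement as a distributional lower bound and pass to it via the simultaneous form of Yao's principle. Fix a threshold $X$ and a distribution $\mathcal{D}$ over pairs of set families $(\{A_i\}_i,\{B_j\}_j)$ that places roughly half its mass on \emph{yes} instances ($\max_{i,j}|A_i\cup B_j|\ge X$) and roughly half on \emph{no} instances ($\max_{i,j}|A_i\cup B_j| < (3/4-1/108+\varepsilon)X$). A randomized simultaneous protocol with communication $c$ and worst-case success $1/2+\delta$ succeeds with probability $\ge 1/2+\delta$ against $\mathcal{D}$, and fixing its coins optimally yields a \emph{deterministic} simultaneous protocol of cost $c$ that is correct on a $\ge 1/2+\delta$ fraction of $\mathcal{D}$. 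Any such deterministic protocol encodes Alice's input into one of $\le 2^{c}$ classes and Bob's into one of $\le 2^{c}$ classes and outputs a function of the pair of classes; hence its accept set is a union of at most $2^{O(c)}$ combinatorial rectangles $\mathcal{A}'\times\mathcal{B}'$. It therefore suffices to show that under $\mathcal{D}$ \emph{every} union of $2^{o(m)}$ rectangles is correct on at most a $1/2 + 2^{-\Omega(m)}$ fraction of the mass: since $2^{-\Omega(m)}$ is far below $1/\poly(m)$, this forces $c=\Omega(m)$.

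\textbf{The hard distribution.} I would build $\mathcal{D}$ from a constant-size \emph{base gadget} and a \emph{hiding layer}. The base gadget lives on a block of $O(1)$ items: it is a pair of partial binary XOS clause sets for which the allocation that ``knows'' a planted alignment between one of Alice's clauses and one of Bob's clauses attains welfare $1$ on the block, while \emph{every} allocation oblivious to this alignment — in particular anything a short simultaneous sketch can commit to — attains at most $3/4-1/108=20/27$ on the block. Since welfare is additive over the $\Theta(m)$ blocks, the per-block ratio is the ratio of the whole instance, so this gadget is what fixes the constant. The hiding layer makes the alignment undetectable from one side: Alice receives a highly random, structured clause family for which there are $2^{\Omega(m)}$ a priori equally plausible ``witness'' clause pairs, Bob receives a matching random family, and with probability $1/2$ one genuine aligned witness is planted (yes) while with probability $1/2$ none exists (no). The design goal — in the spirit of the known impossibility of sketching XOS valuations to sub-linear error — is that the welfare signal of the witness be diffused across blocks so that no short one-way summary of Alice's input (or of Bob's) can localize it; the natural vehicles are combinatorial designs with many ``locally sparse'' substructures, such as Ruzsa--Szemer\'edi-type set systems, or plain random sparse families controlled by a union bound.

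\textbf{The rectangle argument.} The heart is to show that a rectangle $R=\mathcal{A}'\times\mathcal{B}'$ of non-negligible $\mathcal{D}$-mass is \emph{nearly balanced} between yes and no. If $\mathcal{A}'$ contains more than a $2^{-\Omega(m)}$ fraction of Alice's possible inputs, then conditioned on Alice's input lying in $\mathcal{A}'$ her clause family still has high min-entropy, so the identity of the planted witness pair — and whether one was planted at all — is statistically close to its unconditional law; symmetrically for $\mathcal{B}'$. Hence, conditioned on $R$, the instance is yes with probability $1/2\pm 2^{-\Omega(m)}$, and whatever fixed answer the referee assigns to $R$ is correct with probability at most $1/2+2^{-\Omega(m)}$. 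Summing over the $\le 2^{O(c)}$ rectangles of the accept set, and bounding the combined mass of the (exponentially rare) rectangles of mass below $2^{-\Omega(m)}$ by $2^{O(c)}\cdot 2^{-\Omega(m)}$, the overall success probability is at most $1/2 + 2^{O(c)}\cdot 2^{-\Omega(m)}$, which is $\le 1/2 + 2^{-\Omega(m)}$ whenever $c=o(m)$.

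\textbf{Where the difficulty lies.} The main obstacle is designing $\mathcal{D}$ to meet two requirements that pull against each other: a genuine \emph{constant-factor} gap between the yes and no welfare, together with indistinguishability against \emph{arbitrary} short simultaneous messages rather than just value-query sketches. A large welfare gap is precisely what risks making the yes/no distinction visible, so the planted witness must be embedded so that recovering it from one player's input alone is as hard as decoding a random codeword — which is exactly the high-min-entropy property invoked above; establishing it robustly, including the careful treatment of low-mass rectangles and of the slight leakage introduced by conditioning, is the technical core. The secondary difficulty is calibrating the constant: one must verify both that no oblivious allocation beats $3/4-1/108$ on the base gadget and that composing it with the hiding layer creates no unintended high-welfare allocations exploiting cross-block structure — and it is this calibration that yields the bound $3/4-1/108+\varepsilon$ rather than the full interactive value $3/4$.
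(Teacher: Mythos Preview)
Your high-level framework --- Yao's principle, a hard distribution with a planted witness bit $M\in\{0,1\}$, and an indistinguishability argument bounding the advantage of any short simultaneous protocol --- is correct and is exactly what the paper does. But the substance of the proof is the construction of the hard distribution, and here your proposal both diverges from the paper and leaves the essential piece unbuilt.

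The paper does \emph{not} use constant-size block gadgets. It uses a single global random structure: draw $S,T\subseteq[m]$ uniformly with $|S|=|T|=m/2$ and $|S\cap T|=m/3$; each of Alice's $\ell=e^{\Theta(\varepsilon^2 m)}$ clauses is uniform among sets $A$ with $|A\cap S|=m/3$, $|A\cap\bar S|=m/6$, and Bob's clauses are symmetric with respect to $T$. The constant $3/4-1/108=20/27$ is not a gadget parameter; it is exactly $\mathbb{E}[|A\cup B|]/m$ for independent draws from these biased distributions, coming from per-item union probabilities $8/9,7/9,7/9,5/9$ on the four regions $S\cap T$, $S\setminus T$, $T\setminus S$, $\bar S\cap\bar T$ of sizes $m/3,m/6,m/6,m/3$. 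The planted witness is a single set $U_1$ inserted at a uniformly random index among Alice's $\ell$ clauses, together with either $U_2=[m]\setminus U_1$ (the $M=1$ case, giving welfare $m$) or $U_2=U_1$ (the $M=0$ case) inserted into Bob's list. The crucial design property is that \emph{conditioned on $S$ alone}, $U_1$ has exactly the marginal of every other $A_j$; Alice literally cannot tell which index is special, and symmetrically for Bob. This perfect one-sided hiding is what makes the information bound $I(\Pi_1\Pi_2;M)\le(|\Pi_1|+|\Pi_2|)/\ell$ go through via a direct-sum averaging over the $\ell$ positions, after which Fano's inequality and a Chernoff union bound over the $\ell^2$ clause pairs finish the proof.

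Your block-gadget plan does not deliver this, and it is not clear it can. If the gadget lives on $O(1)$ items, each block carries $O(1)$ bits of entropy, so any per-block yes/no structure is recoverable with $O(1)$ bits per block, hence $O(m)$ bits total --- far short of an exponential lower bound. You try to avoid this by saying the witness is a single global clause pair ``diffused across blocks,'' but then the blocks are no longer doing what you claim (fixing the constant via a local gadget), and you are back to needing a global construction with the perfect-hiding property above, which you do not supply. Moreover, your description slips between ``what an oblivious allocation can attain'' and ``what the optimal welfare is''; the decision problem concerns $\max_{i,j}|A_i\cup B_j|$, a property of the instance alone, so the $20/27$ must be the \emph{actual} maximum in the no case (established in the paper by Chernoff over all $\ell^2$ pairs), not merely what a sketch-limited referee can certify.

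On the analytic side, your rectangle argument and the paper's mutual-information argument are two dialects of the same idea; the paper's route via $I(\Pi_1;U_1\mid S)\le|\Pi_1|/\ell$ is cleaner here because the symmetry of the planted index among $\ell$ indistinguishable positions makes the averaging immediate. Your ``high min-entropy implies $M$ is nearly uniform conditioned on the rectangle'' step is precisely the lemma that requires the perfect-hiding construction to prove, so identifying it as the core difficulty is right --- but the proposal does not resolve it.
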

Future sections contain more precise versions (that reference the protocols achieving them) of Theorems~\ref{thm:mainupper} (Theorem~\ref{thm:main}) and~\ref{thm:mainlower} (Theorem~\ref{thm:lb}).

\subsection{Extensions and Implications for Truthful Combinatorial Auctions}
Part of the analysis of our protocols actually makes use of the binary assumption (as opposed to holding for general XOS). Part of the analysis, however, does not. In particular, our same protocols when applied to general XOS functions yield a deterministic, simultaneous $(3/4-1/32-\varepsilon)$-approximation for both problems, and a deterministic 2-round $(3/4-\varepsilon)$-approximation for both problems for general XOS functions. 

We are also able to show that a modification of our protocol yields a $1/2$-approximation for any number of binary XOS bidders, and that this protocol implies a \emph{strictly} truthful mechanism.\footnote{By this we mean it is a \emph{strongly} dominant strategy for bidders to follow the protocol, and not just that they are indifferent between following and not following.} The mechanism is quite different from existing approaches, and could inspire better truthful mechanisms in domains where previous molds (such as VCG-based/maximal-in-range) provably fail~\cite{PapadimitriouSS08, BuchfuhrerDFKMPSSU10,BuchfuhrerSS10, DanielySS15}. Essentially, the designer offers a menu of lotteries to each bidder and the cost of each lottery depends on how ``flexible'' the option is. So for instance, taking item one deterministically will be more expensive than taking a single item uniformly at random. The pricing scheme is designed exactly so that each bidder is strictly incentivized to follow our simultaneous protocol.  

Finally, while our results have standalone merit outside the scope of truthful combinatorial auctions, it is important to properly quantify their impact in this direction. Dobzinski's recent reduction shows that truthful combinatorial auctions with polynomial communication imply simultaneous algorithms for the \emph{allocation problem}. So Theorem~\ref{thm:mainlower} \emph{does not} rule out the possibility of a truthful mechanism for two XOS bidders that requires polynomial communication and guarantees a $3/4$-approximation (more on this in Section~\ref{sec:future}).


\subsection{Brief Preliminaries and Roadmap}
\label{sec:prelim}
Below we give some brief preliminaries. Section~\ref{sec:toy} provides a toy setting to help develop intuition for where the gap between allocation and decision problem comes from. Section~\ref{sec:warmup} provides a warmup for our protocols via a 2/3-approximation for the allocation problem and a 3/5-approximation for the decision problem. Sections~\ref{sec:ske} and~\ref{sec:ub} contain our positive results, and Section~\ref{sec:lb} contains details on our lower bound.

We'll use the terms ``protocol'' and ``mechanism'' to distinguish between cooperative players and strategic players. It will be clear from context exactly our model of communication, what we mean by approximation, truthfulness, etc., and a more formal treatment of this appears in Appendix~\ref{app:formalprelim}. We'll also use the notation $\SW(v_1,...,v_n) = \max_{S_1 \cap \cdots \cap S_n = \emptyset} \sum_{i=1}^n v_i(S_i)$ to denote the maximum social welfare attainable for the valuation profile $(v_1,\ldots,v_n)$.

Below are definitions of the valuation classes used in the paper. These are equivalent to the definitions used in Section~\ref{sec:intro}, but more apt for proofs and less apt for posing easy-to-parse communication problems.

\begin{definition}We consider the following classes of valuations:

\begin{itemize}
\item A valuation function $v$ is \textbf{additive} if for every bundle $S$, $v(S) = \sum_{i \in S} v(\{ i \})$. 
\item A valuation function $v$ is \textbf{XOS} if there exist additive valuations $a_1,...,a_t$ such that for every bundle $S$, $v(S) = \max_{i=1}^t a_i(S)$. Each $a_i$ is called a \textbf{clause} of $v$.
\item A valuation function $v$ is \textbf{binary additive} if $v$ is additive and for every item $i$, $v(\{i\})\in \{0,1\}$. We will sometimes refer to a binary additive valuation as a \textbf{set}, referring to $\{i | v(\{i\}) = 1\}$. 
\item A valuation function $v$ is \textbf{binary XOS} if $v$ is XOS and all $v$'s clauses are binary additive valuations. Again, we will sometimes refer to $v$'s clauses as \textbf{sets} to make it more natural to talk about unions/intersections/etc.
\end{itemize}
\end{definition}

\section{Intuition for the Gap: an Extremely Toy Setting}\label{sec:toy}
Consider the following very toy setting: Alice and Bob each have some valuation function $v(\cdot)$ such that $v([m]) \in [1,M]$, and $v(\cdot)$ is monotone (no other assumptions).\footnote{If one wishes, one could further restrict attention to submodular, XOS, etc., but this section is just supposed to be a toy model to provide some intuition, and we will not belabor this point.}

\begin{observation} In the very toy setting, Alice and Bob can guarantee the following tight approximation guarantees with zero communication:
\begin{itemize}
\item A $1/2$-approximation for the allocation problem with a randomized protocol: give all the items either to Alice or Bob uniformly at random. 
\item A $1/(M+1)$-approximation for the allocation problem with a deterministic protocol: give all the items to Alice.
\item A $1/(2M)$-approximation for the decision problem (decide if social welfare $\geq X$ or $\leq X/(2M)$, arbitrary behavior allowed in-between): If $X > 2M$ output ``$\leq X/(2M)$'' If $X \leq 2M$, ``$\geq X$.''
\end{itemize}
\end{observation}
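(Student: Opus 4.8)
The plan is to handle the three bullets uniformly: for each, exhibit the stated zero-communication protocol and then prove a matching impossibility. Everything rests on two elementary facts about the toy class: by monotonicity $\SW(v_1,v_2)\le v_1([m])+v_2([m])\le 2M$, and (taking the allocation $S=[m]$ or $S=\emptyset$, using nonnegativity) $\SW(v_1,v_2)\ge\max\{v_1([m]),v_2([m])\}\ge 1$. Throughout I take ``zero communication'' to mean both messages are empty, so the output of any (even randomized) protocol is independent of the two private valuations given the public input: for the allocation problem there is no public input, so the output partition is drawn from a fixed distribution, and for the decision problem the only public input is $X$, so the answer is a function of $X$ and the shared coins alone.

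\emph{Upper bounds.} The randomized allocation protocol hands all items to Alice or all to Bob, each with probability $1/2$, achieving expected welfare $\tfrac12\bigl(v_1([m])+v_2([m])\bigr)\ge\tfrac12\SW$. The deterministic allocation protocol hands all items to Alice, achieving welfare $v_1([m])$; since $\SW\le v_1([m])+M$ and $v_1([m])\ge 1$, this is at least $\tfrac{v_1([m])}{v_1([m])+M}\SW\ge\tfrac1{M+1}\SW$. The decision protocol answers ``no'' exactly when $X>2M$: if $X>2M$ then $\SW\le 2M<X$, so ``no'' is always safe; if $X\le 2M$ then $X/(2M)\le 1\le\SW$, so ``yes'' is always safe; any remaining input lies in the ``don't care'' window.

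\emph{Tightness.} Here I would exhibit hard instances (all with $m\ge 2$). For the randomized allocation bound, let $I_1$ be the additive profile in which Alice values only item $1$ at $M$ and Bob values only item $2$ at $M$, and let $I_2$ be its mirror image; both satisfy $\SW=2M$, yet for any fixed distribution over partitions a one-line computation shows the expected welfares on $I_1$ and on $I_2$ sum to exactly $2M$, so one of them is at most $M$ and the worst-case ratio is $\le 1/2$. For the deterministic allocation bound, fix the partition $S^*$ the protocol outputs. If $\emptyset\ne S^*\ne[m]$, choose $x\notin S^*$ and $y\in S^*$ and take $v_1(T)=M\cdot\1[x\in T]$, $v_2(T)=M\cdot\1[y\in T]$; then the protocol gets welfare $0$ while $\SW=2M$. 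If $S^*=[m]$ (the ``give everything to Alice'' protocol), take $v_1(T)=\1[1\in T]$ and $v_2$ additive with total value $M$ supported on $\{2,\dots,m\}$; then the protocol gets welfare $v_1([m])=1$ while $\SW=M+1$, and $S^*=\emptyset$ is symmetric. Hence $\emptyset$ and $[m]$ are the only partitions with nonzero worst-case ratio, and each attains exactly $1/(M+1)$, so no deterministic zero-communication protocol beats $1/(M+1)$. For the decision bound, fix any $\alpha>1/(2M)$ and set $X=1/\alpha<2M$: the profile with $v_1,v_2$ additive of total value $M$ on complementary subsets of the items is a ``yes'' instance ($\SW=2M\ge X$), while $v_1(T)=v_2(T)=\1[T=[m]]$ is a ``no'' instance ($\SW=1\le\alpha X$), and since the protocol's answer depends only on $X$ it must err on one of them.

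The arguments are elementary; the only care needed is in fixing the ``zero communication'' model so that ``tight'' has content, and in checking that the hard instances genuinely lie in the toy class --- that each $v_i$ is monotone and satisfies $v_i([m])\in[1,M]$ --- which for $\1[T=[m]]$ and for $M\cdot\1[x\in T]$ is immediate. The mildly non-obvious step is the deterministic allocation lower bound, where one must notice that \emph{every} partition other than $\emptyset$ and $[m]$ can be driven to ratio $0$, so the optimum over deterministic zero-communication protocols is attained at those two endpoints and equals $1/(M+1)$. Ultimately the point of the section is the contrast these bounds display --- a constant-factor randomized gap for the allocation problem versus a $\Theta(1/M)$ gap for the decision problem --- and nothing beyond the two estimates above is required to establish it.
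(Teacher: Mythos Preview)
Your proposal is correct and, for the upper bounds, follows exactly the paper's reasoning: $\SW\le v_1([m])+v_2([m])$ for the randomized bullet, $\SW\le v_1([m])+M$ together with $v_1([m])\ge 1$ for the deterministic bullet, and $1\le\SW\le 2M$ for the decision bullet. The paper itself says ``we omit a complete proof'' and only sketches tightness for the third bullet via the two endpoint instances; you go further and supply full tightness arguments for all three bullets, which the paper does not. Your constructions are fine: the symmetric pair $I_1,I_2$ with the welfare-sum observation is a clean way to handle the randomized lower bound, and your case split on $S^*\in\{\emptyset,[m]\}$ versus a proper partition nails down that $1/(M+1)$ is actually the optimum among all deterministic zero-communication protocols, not just the ratio of one particular choice. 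One cosmetic point: for the decision lower bound your ``no'' instance has $\SW=1=\alpha X$ exactly on the boundary, which is legal under the stated semantics ($\le$), but if you want to avoid any ambiguity you could take $X=1/\alpha+\delta$ for small $\delta>0$. Also, the paper's ``no'' instance is $v_i(S)=\1[1\in S]$ rather than your $\1[T=[m]]$, but both yield $\SW=1$ and lie in the toy class, so this is immaterial.
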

Since this example is just to provide intuition, we omit a complete proof. The first bullet should be fairly clear: the optimal welfare is clearly upper bounded by $v_1([m])+v_2([m])$, and the protocol guarantees exactly half of this. The third bullet should also be clear: the optimal welfare is always between $1$ and $2M$. Moreover, any value in the range is possible ($2M$ if, for instance, $v_1(\{1\}) = M = v_2(\{2\})$. $1$ if, for instance, $v_1(S) =v_2(S) = 1$ iff $S \ni 1$, and $v_1(S) = v_2(S) = 0$ otherwise). So with zero communication, better than $1/(2M)$ is not possible. The middle bullet is perhaps the only tricky one. If we give all of the items to Alice, we guarantee welfare $v_1([m])\geq 1$, and the optimum is upper bounded by $v_1([m]) + M$. 

Again, the purpose of this example is just to provide intuition as to where this gap might come from, and we do not consider it a ``result.'' Of course, one should not expect the gaps to stay quite so drastic as we dial up the communication: with just $\log M$ bits in the above example, a deterministic protocol for the allocation problem and decision problem can both guarantee a $1/2$-approximation (output $v([m])$). But this example still captures some of the intuition as to where the gap comes from.


\vspace{-1mm}
\section{Warmup: Beating a $1/2$-Approximation}
\vspace{-2mm}
\label{sec:warmup}
Before explaining our protocol, consider the following thought experiment: say instead Alice and Bob are asked to just report a single clause from their valuation. What clause should they choose and how well will this protocol solve the allocation/decision problem? It's not too hard to see that the best they can do is to just report the largest clause in their list (maximizes $b_i([m])$ over all clauses $b_j$), which will obtain just a 1/2-approximation for each problem. Now, what if they each report \emph{two} clauses from their valuations, can they do something more clever? Well, they should certainly try to report clauses that are large, as this lets the other know which sets they value the most. But they should also try to report clauses that are different, as this allows for more flexibility in an allocation that both parties value highly. It's perhaps not obvious what the right tradeoff is between large/different (or even exactly what ``different'' should formally mean), but it turns out that a good approach is for Alice and Bob to each output the two clauses in their list with the largest union (i.e. output $b_i, b_j$ maximizing $\SW(b_i,b_j)$). Subject to figuring out how to translate this information into solutions, a slight variant of this protocol guarantees a $2/3$-approximation for the allocation problem, and a $3/5$-approximation for the decision problem, and the proof is actually quite simple. Note below that Theorem~\ref{thm:warmup1} holds only for BXOS, whereas Theorem~\ref{thm:warmup2} holds for general XOS. We'll provide both proofs below first, followed by a brief discussion.

\begin{algorithm}
	\floatname{algorithm}{Protocol}
        \caption{Simultaneous randomized warmup protocol for 2-party combinatorial auctions with binary XOS valuations}
        \label{pro:warmup1}
    \begin{algorithmic}[1]
	\STATE Alice finds $b_1,b_2,b_3$  among clauses of her valuation $v_1$ such that $b_1$ maximizes $b_1([m])$ and $b_2,b_3$ maximize $\SW(b_2,b_3)$. Then she picks $j$ uniformly at random from $\{1,2,3\}$ and sends $b_j$ to the auctioneer.
	\STATE For each item $i$, the auctioneer allocates it to Alice if $b_j(\{i\}) =1$; otherwise allocate it to Bob. 
             \end{algorithmic}
\end{algorithm}

\begin{theorem}
\label{thm:warmup1}
Protocol \ref{pro:warmup1} gives a $2/3$-approximation to the 2-party BXOS allocation problem. 
\end{theorem}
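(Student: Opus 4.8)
The plan is to reduce the theorem to a single set-theoretic inequality about the three clauses Alice transmits. Write $v_1,v_2$ for the binary-XOS valuations (so each clause is just a subset of $[m]$), and for a set $S$ let $W(S):=v_1(S)+v_2([m]\setminus S)$ be the welfare of the partition that gives $S$ to Alice and the rest to Bob. Since the maxima over Alice's and over Bob's clauses decouple, $W(S)=\max_i|A_i\cap S|+\max_{i'}|B_{i'}\setminus S|$, and hence the optimal welfare is $\opt=\max_S W(S)=\max_{a,c}|a\cup c|$, the max taken over a clause $a$ of $v_1$ and a clause $c$ of $v_2$; fix a maximizing pair $(a^*,c^*)$, so $\opt=|a^*\cup c^*|$. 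When Alice sends $b_j$ the auctioneer outputs a partition of value exactly $W(b_j)$, so the expected welfare equals $\tfrac13\bigl(W(b_1)+W(b_2)+W(b_3)\bigr)$, and it suffices to prove
\[ W(b_1)+W(b_2)+W(b_3)\ \ge\ 2\,\opt. \]

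For this I would use only the two crudest estimates per term: plugging the clause $b_j$ itself into $v_1$ and the clause $c^*$ into $v_2$ gives $W(b_j)\ge|b_j|+|c^*\setminus b_j|=|b_j\cup c^*|$. Summing and applying inclusion–exclusion to the two sets $b_2\cup c^*$ and $b_3\cup c^*$ yields
\[ \sum_j W(b_j)\ \ge\ |b_1\cup c^*|+|b_2\cup b_3\cup c^*|+|(b_2\cap b_3)\cup c^*|\ \ge\ 2|c^*|+|b_1\setminus c^*|+|b_2\cup b_3|. \]
Now the point of choosing $\{b_2,b_3\}$ to maximize $|b_2\cup b_3|$ is that the pair $(a^*,b_1)$ is also a pair of clauses of $v_1$, so $|b_2\cup b_3|\ge|a^*\cup b_1|=|a^*|+|b_1\setminus a^*|$. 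Substituting,
\[ \sum_j W(b_j)\ \ge\ 2|c^*|+|a^*|+\bigl(|b_1\setminus c^*|+|b_1\setminus a^*|\bigr). \]

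The final step is the only genuinely combinatorial one: an element of $b_1$ lying outside $a^*\cap c^*$ is counted at least once in $|b_1\setminus c^*|+|b_1\setminus a^*|$, so that sum is $\ge|b_1\setminus(a^*\cap c^*)|\ge|b_1|-|a^*\cap c^*|$, and since $b_1$ is a largest clause of $v_1$ (so $|b_1|\ge|a^*|$) this is $\ge|a^*|-|a^*\cap c^*|$. Altogether $\sum_j W(b_j)\ge 2|c^*|+2|a^*|-|a^*\cap c^*|\ge 2\bigl(|a^*|+|c^*|-|a^*\cap c^*|\bigr)=2|a^*\cup c^*|=2\,\opt$, which finishes the argument; communication is polynomial since a single clause is $m$ bits.

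I expect the main obstacle to be resisting two tempting but fatal simplifications. First, one must keep $|b_2\cup b_3|\ge|a^*\cup b_1|$ rather than weakening it to $|b_2\cup b_3|\ge|a^*|$: the bound is met with equality by instances such as $A_1=\{1,2\},\,A_2=\{3,4\}$ with Bob holding only the clause $\{3,4\}$, where the protocol achieves exactly $2/3$, so any slack here is unaffordable. Second, the two "$b_1$-deficit" terms $|b_1\setminus c^*|$ and $|b_1\setminus a^*|$ must be combined through the single set $a^*\cap c^*$; bounding them one at a time loses the factor. Everything else — the decoupling of the maxima, the identity $\opt=\max_{a,c}|a\cup c|$, the bit count, and the fact that Alice's private optimization over (pairs of) clauses is unconstrained because this is a cooperative protocol — is routine. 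It is also worth noting that binarity enters only through the clean estimates $v_1(b_j)\ge|b_j|$ and $v_2([m]\setminus b_j)\ge|c^*\setminus b_j|$; replacing these with the corresponding weighted inequalities is exactly what degrades the constant for general XOS in Theorem~\ref{thm:warmup2}.
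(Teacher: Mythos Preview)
Your proof is correct. It rests on exactly the same two facts the paper uses --- that $|b_1|\ge|a^*|$ and that $|b_2\cup b_3|\ge|a^*\cup b_1|$ (equivalently $\SW(b_2,b_3)\ge\SW(a^*,b_1)$) --- and combines them with Bob's optimal clause $c^*$ to reach $\sum_j W(b_j)\ge 2\,\opt$.

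The assembly, however, is not quite the paper's. The paper lower-bounds $\SW(b_j,v_2)$ by plugging in \emph{different} allocations for different $j$: the optimal partition $(S,T)$ for $j=1$, and the partition $(U,U')$ that realizes $\SW(b_2,b_3)$ for $j=2,3$ (with the roles of $U,U'$ swapped). The additivity of $a'$ over the partition $(U,U')$ then gives $a'(U)+a'(U')=a'([m])$ for free, and the chain closes in two lines. You instead always plug in the allocation $(b_j,[m]\setminus b_j)$, obtaining the uniform bound $W(b_j)\ge|b_j\cup c^*|$, and recover the same total via inclusion--exclusion on $|b_2\cup c^*|+|b_3\cup c^*|$ plus the observation $|b_1\setminus c^*|+|b_1\setminus a^*|\ge|b_1|-|a^*\cap c^*|$. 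Your route is a touch longer but entirely self-contained in set language; the paper's route generalizes more directly to the additive (non-binary) bookkeeping used later in Theorem~\ref{thm:warmup2}. Either way, the analysis is tight on your example (with the adversarial tie-break $b_1=\{3,4\}$).
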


\begin{proof}
First, we want to claim that if Alice sents $b_j$ to the auctioneer, then the resulting welfare is at least $\SW(b_j,v_2)$. This is actually an instantiation of a claim we will want to reference later, so we'll state a more general form below:
\begin{claim}\label{claim:binaryadditive}
Let $b_1$ be a binary additive valuation and $v_2$ be a binary XOS valuation. Then the allocation that awards to Alice all items such that $b_1(\{i\}) = 1$ achieves welfare equal to $\SW(b_1,v_2)$. 
\end{claim}
\begin{proof} Let $A$ denote the set of items for which $b_1(\{i\}) = 1$, and consider any other allocation $(B, \bar{B})$. We first reason that we can remove from $B$ all items $\notin A$ without hurting $b_j(B) + v_2(\bar{B})$. This is trivial to see, as $b_j$ has value $0$ for all items $\notin A$. Next, we reason that we can add to $B$ any item $\in A$ without hurting $b_j(B) + v_2(\bar{B})$. To see this, observe that we are certainly increasing $b_j(B)$ by $1$ when we make this change, as $b_j$ is just additive and $b_j(\{i\}) = 1$ for all $i \in A$. In addition, we can't possibly decrease $v_2(\bar{B})$ by more than $1$, as all of the clauses in $v_2$ are binary additive (and therefore have value at most $1$ for any item). So again, the total change is only positive. At the end of these changes, observe that we have now transitioned from $(B, \bar{B})$ to $(A, \bar{A})$ without losing any welfare, and therefore $(A, \bar{A})$ is indeed optimal. 
\end{proof}

Claim~\ref{claim:binaryadditive} immediately lets us conclude that the expected welfare guaranteed by Protocol~\ref{pro:warmup1} is at least $\frac{1}{3} \cdot \sum_{j=1}^3 \SW(b_j,v_2)$. Now, let $S$ and $T$ be the optimal allocation to achieve $\SW(v_1,v_2)$. Let $a$ be the clause of $v_1$ such that $a(S) = v_1(S)$. Let $a'$ be the clause of $v_2$ such that $a'(T) = v_2(T)$. So $\SW(v_1,v_2) = a(S) + a'(T)$. From the protocol, we know that $b_1([m]) \geq a([m]) \geq a(S) $. Moreover, if $U$ and $U'$ are the allocation that achieves $\SW(b_2,b_3)$, then we know that $b_2(U) + b_3(U') = \SW(b_2,b_3) \geq \SW(a,b_1) \geq a(S) + b_1(T)$ (by definition of $b_2, b_3$). In expectation, the social welfare we get in the protocol is at least:
\begin{eqnarray*}
\frac{1}{3} \cdot \sum_{j=1}^3 \SW(b_j,v_2) &\geq& \frac{1}{3} \cdot \left( b_1(S) + a'(T)  + b_2(U) + a'(U') + b_3(U') + a'(U) \right) \\
&\geq&  \frac{1}{3} \cdot \left( b_1(S) + a'(T)  + a(S) + b_1(T) + a'([m]) \right) \geq \frac{1}{3} \cdot \left( b_1([m]) + a(S)  + 2a'(T) \right) \\
&\geq& \frac{1}{3} \cdot \left( 2a(S)  + 2a'(T) \right) = \frac{2}{3} \cdot \SW(v_1,v_2).
\end{eqnarray*}
\end{proof}

\begin{algorithm}
	\floatname{algorithm}{Protocol}
        \caption{Simultaneous deterministic warmup protocol for 2-party combinatorial auctions with XOS valuations}
        \label{pro:warmup2}
    \begin{algorithmic}[1]
	\STATE Alice finds $b_1,b_2,b_3$  among clauses of her valuation $v_1$ such that $b_1$ maximizes $b_1([m])$ and $b_2,b_3$ maximize $\SW(b_2,b_3)$. Bob finds $b_4,b_5,b_6$  among clauses of his valuation $v_2$ such that $b_4$ maximizes $b_4([m])$ and $b_5,b_6$ maximize $\SW(b_5,b_6)$. Alice sends $b_1,b_2,b_3$ to the auctioneer and Bob sends $b_4,b_5,b_6$ to the auctioneer simultaneously.
	\STATE \textbf{For allocation problem:} Auctioneer finds $j\in \{1,2,3\},j'\in \{4,5,6\}$  that maximizes $\SW(b_j,b'_{j'})$ and allocate items according to it. 
	\STATE \textbf{For decision problem:} Let $X$ be the parameter in the decision problem. Auctioneer finds $j\in \{1,2,3\},j'\in \{4,5,6\}$ that maximizes $\SW(b_j,b'_{j'})$. If $\SW(b_j,b'_{j'}) \geq 3X/5$, say "yes" ($\SW(v_1,v_2) \geq X$). If $\SW(b_j,b'_{j'}) < 3X/5$, say "no". 
             \end{algorithmic}
\end{algorithm}

\begin{theorem}\label{thm:warmup2}
Protocol \ref{pro:warmup2} gives a $3/5$-approximation to the 2-party XOS allocation problem and the 2-party XOS decision problem.\footnote{{XOS allocation problem and XOS decision problem are the obvious extensions of BXOS allocation problem and BXOS decision problem for non-binary clauses.}}
\end{theorem}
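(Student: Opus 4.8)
The plan is to route everything through a single inequality about the clauses that actually get sent, namely
\[ \max_{j \in \{1,2,3\},\, j' \in \{4,5,6\}} \SW(b_j,b_{j'}) \;\geq\; \tfrac{3}{5}\,\SW(v_1,v_2), \]
together with the trivial bound $\max_{j,j'}\SW(b_j,b_{j'}) \leq \SW(v_1,v_2)$. Granting these, both halves of the theorem are immediate. For the allocation problem: since $v_1 \geq b_j$ and $v_2 \geq b_{j'}$ pointwise (a clause of an XOS function is dominated by the function), the partition $(W,\bar W)$ realizing $\SW(b_j,b_{j'})$ has true welfare $v_1(W)+v_2(\bar W) \geq b_j(W)+b_{j'}(\bar W) = \SW(b_j,b_{j'}) \geq \tfrac35 \SW(v_1,v_2)$. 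For the decision problem: if $\SW(v_1,v_2) \geq X$ the auctioneer sees $\max_{j,j'}\SW(b_j,b_{j'}) \geq \tfrac35 X$ and correctly answers ``yes''; if $\SW(v_1,v_2) < \tfrac35 X$ it sees $\max_{j,j'}\SW(b_j,b_{j'}) \leq \SW(v_1,v_2) < \tfrac35 X$ and correctly answers ``no.''

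To prove the key inequality I would first fix an optimal partition $(S,T)$ for $\SW(v_1,v_2)$, let $a$ be the clause of $v_1$ with $a(S)=v_1(S)$ and $a'$ the clause of $v_2$ with $a'(T)=v_2(T)$, so that $\opt := \SW(v_1,v_2)=a(S)+a'(T)$. The protocol's choices then hand me four facts: from the largest-clause choices, $b_1([m]) \geq a([m]) \geq a(S)$ and symmetrically $b_4([m]) \geq a'([m]) \geq a'(T)$ (using non-negativity of clauses); from the largest-union choices, $\SW(b_2,b_3) \geq \SW(a,b_1)$ and $\SW(b_5,b_6) \geq \SW(a',b_4)$, and evaluating the right-hand $\SW$'s at the partition $(S,T)$ gives $\SW(b_2,b_3) \geq a(S)+b_1(T)$ and $\SW(b_5,b_6) \geq a'(T)+b_4(S)$. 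Also let $(U,\bar U)$ and $(V,\bar V)$ be the partitions achieving $\SW(b_2,b_3)$ and $\SW(b_5,b_6)$.

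The heart of the argument is to lower-bound the auctioneer's value by the average of five carefully chosen cross-pairs, $\{(b_1,b_4),(b_2,b_4),(b_3,b_4),(b_1,b_5),(b_1,b_6)\}$. Evaluating $\SW(b_2,b_4)$ on $(U,\bar U)$ and $\SW(b_3,b_4)$ on $(\bar U,U)$ (so $b_2,b_3$ each get the side they ``want'' from $\SW(b_2,b_3)$ and $b_4$ fills the complement) yields $\SW(b_2,b_4)+\SW(b_3,b_4) \geq \SW(b_2,b_3)+b_4([m])$; symmetrically $\SW(b_1,b_5)+\SW(b_1,b_6) \geq b_1([m])+\SW(b_5,b_6)$ via $(V,\bar V)$; and $\SW(b_1,b_4) \geq b_1(S)+b_4(T)$ via $(S,T)$. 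Summing the five pair-values and substituting the four facts, the $b_1(S),b_1(T),b_1([m])$ terms collect to $2b_1([m])$, the $b_4$ terms to $2b_4([m])$, and what remains is $2b_1([m])+2b_4([m])+a(S)+a'(T) \geq 2a(S)+2a'(T)+a(S)+a'(T) = 3\,\opt$. Since the maximum of five reals is at least their average, and the auctioneer maximizes over all nine cross-pairs, $\max_{j,j'}\SW(b_j,b_{j'}) \geq \tfrac15\cdot 3\,\opt = \tfrac35\,\opt$.

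The step I expect to need the most care is the choice of the five pairs: one must pair Bob's largest clause $b_4$ against \emph{all three} of Alice's clauses and Alice's largest clause $b_1$ against Bob's union-pair, precisely so that the ``leftover'' mass $b_1(T)$ (from $\SW(b_2,b_3)\geq\SW(a,b_1)$) recombines with the mass $b_1(S)$ (from $\SW(b_1,b_4)$) into $b_1([m])$, which the largest-clause guarantee then dominates by $a(S)$ — and symmetrically for $b_4$ and $a'(T)$. A uniform average over a symmetric or arbitrary set of cross-pairs only delivers $1/2$; this asymmetric choice is what buys the extra $1/10$. Finally, degenerate cases (a player with a single clause, so $b_1=b_2=b_3$, or otherwise coinciding clauses) need no separate treatment, since the averaging step only uses that the auctioneer's maximum dominates the average of these five, possibly non-distinct, pair-values.
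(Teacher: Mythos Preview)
Your proof is correct and follows essentially the same approach as the paper: the same five cross-pairs $(b_1,b_4),(b_2,b_4),(b_3,b_4),(b_1,b_5),(b_1,b_6)$, the same evaluations on the partitions $(S,T)$, $(U,\bar U)$, $(V,\bar V)$, and the same recombination of $b_1(S)+b_1(T)$ and $b_4(S)+b_4(T)$ into full-mass terms that are then dominated by $a(S)$ and $a'(T)$. Your presentation even streamlines the paper's slightly by first writing the clean intermediate bounds $\SW(b_2,b_4)+\SW(b_3,b_4)\geq \SW(b_2,b_3)+b_4([m])$ and its symmetric counterpart before substituting.
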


\begin{proof}
Let $S$ and $T$ be the optimal allocation to achieve $\SW(v_1,v_2)$. Let $a$ be the clause of $v_1$ such that $a(S) = v_1(S)$. Let $a'$ be the clause of $v_2$ such that $a'(T) = v_2(T)$. So $\SW(v_1,v_2) = a(S) + a'(T)$. From the protocol, we know that $b_1([m]) \geq a([m]) \geq a(S) $ and $b_4([m]) \geq a'([m]) \geq a'(T)$. Let $U$ and $U'$ be the allocation to achieve $\SW(b_2,b_3)$. We know that $b_2(U) + b_3(U') \geq a(S) + b_1(T)$. Let $W$ and $W'$ be the allocation to achieve $\SW(b_5,b_6)$. We know that $b_5(W) + b_6(W') \geq a'(T) + b_4(S)$. Then we have
\begin{eqnarray*}
\SW(b_1,b_5) + \SW(b_1,b_6) &\geq& b_1(W') +b_5(W) +b_1(W) + b_6(W') \\
&\geq& b_1([m]) + b_5(W) +b_6(W') \geq a(S) + a'(T) + b_4(S).
\end{eqnarray*}
Similarly we have
\[
\SW(b_2,b_4) + \SW(b_3,b_4)  \geq a(S) + a'(T) + b_1(T).
\]

The social welfare we get in the protocol is at least
\begin{eqnarray*}
&&\SW(b_j,b'_{j'}) \\
&\geq&\frac{1}{5} \cdot \left( \SW(b_1,b_4) + \SW(b_1,b_5) + \SW(b_1,b_6) +\SW(b_2,b_4) + \SW(b_3,b_4) \right) \\ 
&\geq& \frac{1}{5} \cdot \left( b_1(S) +b_4(T) + 2a(S) + 2a'(T) + b_4(S) + b_1(T) \right) \\
&\geq&  \frac{1}{5} \cdot \left( b_1([m]) +b_4([m]) + 2a(S) + 2a'(T)  \right) \geq \frac{3}{5} \left(a(S) + a'(T)\right)= \frac{3}{5} \SW(v_1,v_2).\\
\end{eqnarray*}
From this, it is easy to check that Protocol \ref{pro:warmup2} gives a $3/5$-approximation to both the 2-party XOS allocation problem and the 2-party XOS decision problem.
\end{proof}

So now there are two remaining questions: first, how does one generalize the reasoning in Protocols~\ref{pro:warmup1} and~\ref{pro:warmup2} to multiple clauses? And second, why the heck is there a difference between their guarantees for the allocation and decision problem for binary XOS valuations? For the first question, we'll postpone the details to Section~\ref{sec:ske}, but just note here that our full protocols indeed makes use of similar reasoning. For the second, observe that Claim~\ref{claim:binaryadditive} is somewhat magical: if Alice's valuation is binary additive, and Bob's is binary XOS, then it is possible to allocate the items optimally \emph{without any input from Bob} (other than the knowledge that his valuation is indeed binary XOS). While it's not obvious that Claim~\ref{claim:binaryadditive} should necessarily be quite so helpful (given that we do, in fact, get input from Bob), this turns out to be the crucial difference between the allocation and decision problem. At a high level, there is necessarily some information lost between Alice's valuation and her message (ditto for Bob). The decision problem requires us to deal with both losses, but Claim~\ref{claim:binaryadditive} lets certain kinds of protocols only worry about the loss from Alice.

\vspace{-1mm}
\section{Developing Good Sketches}
\vspace{-2mm}
\label{sec:ske}
In this section, we define ``sketches'' in some specific forms for binary XOS valuations.\footnote{Please forgive our slight abuse of the term ``sketch.'' We are using the term in the streaming sense, where ``sketch'' just refers to a concise but lossy representation of a larger input, and not specifically to proper sketches of valuation functions that preserve value queries within a constant factor.} They are the main ingredients in our protocols and mechanisms. At a high level, the sketches are trying to simultaneously maximize the size of the reported clauses, while also keeping on eye on reporting ``different'' clauses. One can interpret the negative term as a ``regularizer'' that achieves this goal. 
\begin{definition}[Sketches of binary XOS valuations]
\label{def:bsketch}
For a binary XOS valuation $v$, define its $(k,\alpha)$-sketch $(b_1,...,b_k)$ as $\argmax_{b_1,...,b_k \in \{a_1,...,a_t\}} \sum_{i=1}^m  \left( x_i - \alpha\cdot x_i^2 \right)$, where $a_1,...,a_t$ are the clauses of $v$ and $x_i = \frac{b_1(\{i\}) + \cdots +b_k(\{i\})}{k}$. 
\end{definition}

\begin{remark}
For the sketches defined above, there might be multiple $(b_1,...,b_k)$'s maximize the term. When we use a $(k,\alpha)$-sketch in some protocol, we will use an arbitrary one. Additionally, note that our warmup protocols from Section~\ref{sec:warmup} ask Alice and Bob to output both their $(1,1/2)$-sketch and their $(2, 1/2)$-sketch.
\end{remark}

In Appendix~\ref{app:ske}, we prove some simple properties of these sketches, and an extension of the definition to non-binary XOS valuations. Essentially, what the lemmas are stating is that for any set $A$, the sketches defined above do a ``good enough'' job capturing Alice's (/Bob's) value for $A$. Note that ``good enough'' doesn't mean ``captures $v(A)$ within a constant factor,'' as this is impossible with a concise sketch~\cite{BadanidiyuruDFKNR12}. ``Good enough'' simply means that the sketch can be used inside a similar approach to Section~\ref{sec:warmup}.

Once sketches from Alice and Bob are in hand, there are a couple natural ways to ``wrap up'' the allocation/decision problem. We'll formally name these and refer to them in future protocols:
\begin{itemize}
\item \textbf{Alice-Only Allocation (randomized):} Pick a clause uniformly at random from Alice's sketch, award to Alice items for which that clause values at $1$, and the rest to Bob.
\item \textbf{Best Known Allocation (deterministic):} If Alice reports clauses $a_1,\ldots,a_k$, and Bob reports clauses $b_1,\ldots, b_k$, find $i, j$ maximizing $\SW(a_i, b_j)$. Allocate items according to the allocation that yields $\SW(a_i, b_j)$. 
\item \textbf{Best Known Decision($\alpha, X$) (deterministic):} If Alice reports clauses $a_1,\ldots,a_k$, and Bob reports clauses $b_1,\ldots, b_k$, find $i, j$ maximizing $\SW(a_i, b_j)$. If $\SW(a_i,b_j) \geq \alpha X$ say ``yes'' (guess that $\SW(v_1,v_2) \geq X$). Otherwise, guess ``no'' (guess that $\SW(v_1,v_2) < \alpha X$).
\end{itemize}

\subsection{Our Protocols and Mechanisms}\label{sec:ub}
In this section, we'll describe all protocols used to provide our positive results. All protocols involve Alice and Bob reporting a $(k,\alpha)$-sketch, and then using the Alice-Only or Best Known Allocation, or making the Best Known Decision. All proofs are in Appendix~\ref{app:sbxos} thru~\ref{sec:mechanismsapp}. We make two remarks before proceeding:
\begin{enumerate}
\item All of the high-level intuition for why the protocols work is captured by the sketches. Many of the actual proofs are different, but at a high level everything comes down to the fact that this class of sketches selects ``the right'' clauses to report for welfare maximization.
\item Any protocol that eventually uses the Alice-Only Allocation doesn't require Alice to report her entire sketch (she can just draw the random clause herself as in Protocol~\ref{pro:warmup1} (and have communication $m$ for any choice of $k$). While we state the guarantees for such protocols for a fixed $k$, one can actually take $k \rightarrow \infty$ without increasing the communication at all.
\end{enumerate}

\begin{theorem}\label{thm:main} The following protocols achieve the following guarantees:\\
\begin{tabular}{|c|c|c|c|c|c|}
\hline
Alice's sketch & Bob's Sketch & Wrap-up & Approximation  & Problem&Valuations\\
\hline
$(k,1/2)$ & $\bot$ & Alice-Only & $3/4-1/k$ & Allocation&BXOS\\
\hline
$(k,1/3)$ & $(k,1/3)$ & Best Known Allocation & $23/32-1/k$ & Allocation&XOS\\
\hline
$(k,1/3)$ & $(k,1/3)$ & Best Known Decision & $23/32-1/k$ & Decision&XOS\\
\hline

\end{tabular}
\end{theorem}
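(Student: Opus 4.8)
The plan is to handle all three rows of Theorem~\ref{thm:main} through one common skeleton, flagging where the binary assumption is (and is not) used. Throughout, fix a welfare-optimal partition $(S,\bar S)$ and let $a$ be the clause of $v_1$ with $a(S)=v_1(S)$ and $a'$ the clause of $v_2$ with $a'(\bar S)=v_2(\bar S)$, so $\SW(v_1,v_2)=a(S)+a'(\bar S)$; for each item $i$ write $x_i=\frac1k\sum_{\ell}b_\ell(\{i\})$ for the average coverage of $i$ by Alice's sketch (and $y_i$ for Bob's, in rows 2--3). \textbf{Row 1 (BXOS, Alice-Only).} By Claim~\ref{claim:binaryadditive}, if the sampled clause is $b_j$ the realized welfare equals $\SW(b_j,v_2)\ge\SW(b_j,a')=\sum_i\max(b_j(\{i\}),a'(\{i\}))$, so the expected welfare is at least $\frac1k\sum_j\sum_i\max(b_j(\{i\}),a'(\{i\}))=\sum_i\bigl(x_i+(1-x_i)a'(\{i\})\bigr)$ (using binariness). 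An item-by-item comparison with $\SW(v_1,v_2)$ shows every item outside $S_1:=\{i\in S:\,a(\{i\})=1,\ a'(\{i\})=0\}$ already meets its share, items in the set $Z$ (not in $S_1$, not contributing to $\SW$ through their own clause) contribute a surplus $\sum_{i\in Z}x_i$, and the only shortfall is the deficit $D:=\sum_{i\in S_1}(1-x_i)$; so it suffices to prove $\sum_{i\in Z}x_i\ge D-\tfrac14\SW(v_1,v_2)$.

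This is where the sketch's optimality enters. Since $(b_1,\dots,b_k)$ maximizes $\sum_i(x_i-\tfrac12x_i^2)$ over choices of clauses of $v_1$, swapping clauses toward $a$ cannot help; unpacking the first-order condition and using $b(\{i\})^2=b(\{i\})$ yields, up to an $O(1/k)$ relative discretization term, $\sum_{i:\,a(\{i\})=1}(1-x_i)\le\sum_i x_i(1-x_i)$. Splitting the right-hand side over $S_1$, $Z$, and the ``good'' items $Y$, this rearranges to $\sum_{i\in S_1}(1-x_i)^2\le\sum_{i\in Z}x_i+\tfrac14|Y|$; Cauchy--Schwarz gives $D^2/|S_1|\le\sum_{i\in Z}x_i+\tfrac14|Y|$, and since $\SW(v_1,v_2)=|S_1|+|Y|$ the target inequality reduces to the trivially true $(D-\tfrac12|S_1|)^2\ge0$. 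This pins the constant to exactly $3/4$, with the $-1/k$ coming from $x_i$ ranging over multiples of $1/k$.

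\textbf{Rows 2--3 (XOS, Best Known Allocation/Decision).} Now both parties report $(k,1/3)$-sketches and the wrap-ups use $\max_{i,j}\SW(a_i,b_j)$, which dominates any convex combination of the $\SW(a_i,b_j)$'s; I would bound such a combination in terms of the per-item coverages $x_\ell,y_\ell$ and again invoke the first-order optimality of both sketches (for the extended non-binary sketch definition, hence without the identity $b(\{i\})^2=b(\{i\})$). Two effects cost us relative to Row 1: Claim~\ref{claim:binaryadditive} is unavailable, so welfare must be squeezed out of $\SW(a_i,b_j)$ directly and we pay for information lost from \emph{both} sides, not just Alice's; and the clauses are no longer $\{0,1\}$-valued. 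The regularizer $\alpha=1/3$ is the constant optimizing the resulting trade-off, and the combined loss is $3/4-1/32=23/32$ (again minus $1/k$). Row 3 then follows from the same estimate plus the trivial $\SW(a_i,b_j)\le\SW(v_1,v_2)$: the Best Known Decision rule with threshold $(23/32-1/k)X$ never answers ``yes'' when $\SW(v_1,v_2)<(23/32-1/k)X$, while the estimate forces ``yes'' whenever $\SW(v_1,v_2)\ge X$.

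\textbf{Main obstacle.} The crux throughout is the convexity/quadratic argument converting the sketch's first-order optimality into the welfare lower bound --- in particular, controlling the problematic items $S_1$, whose coverages may individually be tiny, where the quadratic regularizer (with exactly the right constant) is what forces their coverage deficit to be refunded by coverage elsewhere. For Rows 2--3 the added difficulty is recovering welfare without Claim~\ref{claim:binaryadditive} and with non-binary clauses; this is precisely where the extra $1/32$ is spent, and obtaining the clean $3/4$ genuinely relies on the binary structure.
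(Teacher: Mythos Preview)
Your proposal is correct and follows essentially the paper's approach: for Row~1, both arguments combine Claim~\ref{claim:binaryadditive} with the first-order optimality of the $(k,1/2)$-sketch (packaged in the paper as Lemma~\ref{lem:bsketch}) and then a quadratic step to pin the constant at $3/4$; the paper simply applies the per-item inequality $x_i^2-x_i\ge -\tfrac14$ over the contributing items in place of your Cauchy--Schwarz-then-$(D-|S_1|/2)^2\ge 0$ maneuver, but the two are equivalent in effect. For Rows~2--3 your plan matches the paper's Theorem~\ref{thm:simxos} exactly: lower-bound $\max_{j,j'}\SW(b^1_j,b^2_{j'})$ by the average $\tfrac{1}{k^2}\sum_{j,j'}\SW(b^1_j,b^2_{j'})$, rewrite it as $\sum_i\int(x_{i,u}+y_{i,u}-x_{i,u}y_{i,u})\,du$, apply the sketch lemma (Lemma~\ref{lem:sketch}) to both sides, and complete the square, with $\alpha=1/3$ balancing the cross terms to give $23/32$.
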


Before continuing, we briefly remark the following:
\begin{itemize}
\item The $3/4$-approximation guaranteed by the protocol in the first row is tight: randomized, interactive protocols require exponential communication to beat a $3/4$-approximation. 
\item The second and third protocols also work for general XOS (subject to updating the sketching definition as in Appendix~\ref{app:ske}). 
\item It is still open whether it is possible to beat $23/32$ with a deterministic protocol for the allocation problem, but $23/32$ is optimal for any protocol using the Best Known Allocation after Alice and Bob each report a $(k,\alpha_i)$ sketch (see Lemma~\ref{lem:xoslb} in Appendix~\ref{sec:xos}).
\end{itemize}

Additional applications of our sketches appear in Appendix~\ref{sec:xos2r} and \ref{sec:mechanismsapp}, including a 2-round protocol guaranteeing a $3/4$-approximation for general XOS valuations, and our strictly truthful mechanism. The strictly truthful mechanism essentially visits bidders one at a time, asks for a $(k,1/2)$-sketch on the remaining items, awards them the ``Alice-Only Allocation'' for their reported sketch, and charges payments to ensure strict truthfulness.

\vspace{-1mm}
\section{Lower Bounds}
\label{sec:lb}
\vspace{-2mm}
Finally, we overview our lower bound for the BXOS decision problem (which implies Theorem~\ref{thm:mainlower}). We begin with some intuition: Alice and Bob will each get exponentially many clauses of size $m/2$. These sets will be random, but not uniformly random.\footnote{If they were uniformly random, then Alice and Bob can guarantee $3m/4$ in expectation by just reporting a single arbitrary clause, because two uniformly random sets of size $m/2$ have union $3m/4$ in expectation.} Instead, they are drawn in such a way that the union of two random clauses of Alice and Bob has size $(3/4-1/108)m$ in expectation. At this point, the optimal welfare is $(3/4-1/108)m$ if we don't further adjust their inputs. Finally, we modify the construction either by hiding or not hiding $a_0$ within Alice's input and $b_0$ within Bob's input such that $a_0 \cup b_0 = [m]$, in a matter so that these sets are indistinguishable from the rest. Therefore, the answer to the decision problem rests on whether or not Alice and Bob each have this hidden set, but they have no means by which to convey this information as this set looks indistinguishable from the rest. This description captures all of the intuition for our construction, which appears in Appendix~\ref{sec:lbapp} along with a proof of Theorem~\ref{thm:lb} below.


\begin{theorem}
\label{thm:lb}
For any constant $\varepsilon >0$, there exists a distribution over binary XOS valuations such that no simultaneous, randomized protocol with less than $ e^{2Cm/9}$ communication can guarantee an $\alpha$-approximation to the 2-party BXOS decision problem with probability larger than $\frac{1}{2} +  2e^{-Cm/9}$. Here $\alpha=3/4 - 1/108 + \varepsilon$ and  $C = 2\varepsilon^2$.
\end{theorem}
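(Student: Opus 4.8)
I would prove the lower bound via Yao's principle: it suffices to exhibit a distribution on BXOS decision instances (with parameter $X=m$) on which every \emph{deterministic} simultaneous protocol using fewer than $e^{2Cm/9}$ bits errs with probability $\ge \tfrac12 - 2e^{-Cm/9}$. I take this distribution to be an even mixture $\tfrac12\mathcal{D}_0+\tfrac12\mathcal{D}_1$ of a ``no'' distribution $\mathcal{D}_0$ and a ``yes'' distribution $\mathcal{D}_1$ engineered so that: (a) every instance from $\mathcal{D}_1$ has optimal welfare exactly $m$ (so an $\alpha$-approximation must answer ``yes''), while an instance from $\mathcal{D}_0$ has optimal welfare below $\alpha m$ except with probability $e^{-\Omega(m)}$ (so the protocol must answer ``no'' on typical $\mathcal D_0$); hence the protocol's output bit is itself a distinguisher for $\mathcal D_0$ versus $\mathcal D_1$; and (b) no simultaneous protocol in the allowed communication regime distinguishes $\mathcal{D}_0$ from $\mathcal{D}_1$ with advantage more than $2e^{-Cm/9}$. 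Combining (a) and (b) yields the claimed bound on the success probability.

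\textbf{The construction.} Each player receives $N=e^{2Cm/9}$ clauses, each a size-$m/2$ subset of $[m]$. I draw them from a common ``base'' law that is (i) invariant under complementation and (ii) calibrated, through a small amount of randomness $R$ shared by the two players (a random ``bias'' on the ground set, with all clauses conditionally i.i.d.\ given $R$), so that an independent Alice-clause $A$ and Bob-clause $B$ have $\E[|A\cup B|]=(3/4-1/108)m=\tfrac{20}{27}m$; the shared bias is precisely what pulls this below the $3m/4$ one would get from genuinely uniform size-$m/2$ sets. The ``no'' distribution $\mathcal{D}_0$ is exactly this. For $\mathcal{D}_1$ I additionally \emph{plant} a clause $a_0$, drawn from the same conditional base law, at a uniformly random slot of Alice's list, and plant $b_0:=[m]\setminus a_0$ at a uniformly random slot of Bob's list. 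Because the base law is complement-invariant, $a_0$ and $b_0$ are marginally distributed exactly like ordinary clauses, so planting does \emph{not} change the marginal distribution of Alice's input, nor that of Bob's input; it only installs a correlation between the two inputs, namely the existence of one complementary cross-pair, which forces optimal welfare $m$. For the ``no'' case, conditioned on $R$ each cross-union $|A_i\cup B_j|$ is a sum of $m$ Bernoulli indicators, so Hoeffding/Azuma gives $\Pr[|A_i\cup B_j|\ge \tfrac{20}{27}m+\varepsilon m]\le e^{-\Omega(\varepsilon^2 m)}$; a union bound over the $N^2$ pairs, with $N$ tuned as above, keeps the failure probability $e^{-\Omega(\varepsilon^2 m)}$. (Balancing the exponent in $N$ against this concentration is what pins down the constant $2/9$ and $C=2\varepsilon^2$.)

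\textbf{Indistinguishability --- the crux.} Fix a deterministic simultaneous protocol with Alice's message $M_A=f(\text{Alice's input})$ and Bob's message $M_B=g(\text{Bob's input})$. Since planting preserves each player's marginal input law, $M_A$ has the same distribution under $\mathcal{D}_0$ and $\mathcal{D}_1$, and likewise $M_B$; the only thing the auctioneer can exploit is a difference in the \emph{joint} law of $(M_A,M_B)$. Conditioned on $R$ that joint law is a product under $\mathcal{D}_0$, whereas under $\mathcal{D}_1$ its only source of dependence is the common planted clause --- that is, $M_A - a_0 - M_B$ is a Markov chain given $R$. Hence the distinguishing advantage is at most $\E_R\big[\tfrac12\sum_{m_A,m_B}\big|\mathrm{Cov}_{a_0\mid R}\!\big(\Pr[M_A=m_A\mid a_0,R],\ \Pr[M_B=m_B\mid a_0,R]\big)\big|\big]$, the expected total-variation distance between the joint message law and the product of its marginals. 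I expect the real work of the whole proof to be showing this quantity is at most $2e^{-Cm/9}$ whenever the communication is below $e^{2Cm/9}$. The intuition is that $a_0$ is a single clause among $N$ \emph{statistically identical} ones, so each message depends only weakly on it (replacing a uniformly random one of Alice's i.i.d.\ clauses by a fresh sample does not change her input law), and --- more to the point --- $a_0$ carries $\Theta(m)$ bits of entropy that two short messages cannot jointly pin down: any fixed $m_A$ is consistent with an exponentially large family of candidate $a_0$'s from Alice's side, any fixed $m_B$ with an exponentially large family of candidate $b_0=[m]\setminus a_0$'s from Bob's side, and the probability that the realized $a_0$ lands in the resulting intersection (which is tiny relative to $\binom{m}{m/2}$) is $e^{-\Omega(m)}$. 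Making this last point rigorous --- presumably by conditioning on $R$, taking a hybrid over the planted clause's position, and running a careful entropy/counting argument to bound the covariance sum --- is the one genuinely delicate step; the concentration bound for $\mathcal D_0$ and the Yao-principle wrapper are routine.
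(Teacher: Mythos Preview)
Your overall architecture---Yao's principle, a hard distribution built by planting a complementary pair into otherwise ``random'' clause lists, concentration to pin down the no case, and an indistinguishability argument exploiting that the planted clause is one of many exchangeable ones---is exactly the paper's strategy. But the construction as you describe it does not work, and this is not a cosmetic issue.

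You posit a single shared bias $R$ and a \emph{common} conditional clause law $\mu_R$ that is complement-invariant, so that planting $a_0\sim\mu_R$ in Alice and $b_0=\bar a_0$ in Bob leaves both marginals unchanged. But if $\mu_R$ is complement-invariant and supported on size-$m/2$ sets, then for every item $i$ we have $\Pr_{\mu_R}[i\in A]=\Pr_{\mu_R}[i\in\bar A]=\tfrac12$, and for independent $A,B\sim\mu_R$ one gets $\E[|A\cup B|\mid R]=3m/4$ regardless of $R$. So you cannot simultaneously (i) have both players draw from the same complement-invariant conditional law and (ii) push the expected union below $3m/4$. The $3/4-1/108$ target is unattainable under your stated hypotheses. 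The paper's resolution is to give Alice and Bob \emph{different} biases: sample $S,T$ with $|S|=|T|=m/2$ and $|S\cap T|=m/3$; Alice's clauses come from $D_S$ (uniform over sets with $|X\cap S|=m/3$, $|X\cap S^c|=m/6$) and Bob's from $D_T$. A calculation then gives $\E[|A\cup B|]=20m/27=3/4-1/108$. The planted pair $(U_1,\bar U_1)$ is engineered using both $S$ and $T$ so that, from Alice's view (who knows $S$ but not $T$), $U_1$ is marginally exactly $D_S$, and symmetrically $\bar U_1$ is marginally $D_T$ from Bob's view. That asymmetric two-bias structure is precisely what lets the plant be invisible to each side while forcing welfare $m$; a single complement-invariant law cannot do both jobs.

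On the indistinguishability step, your covariance/TV formulation is correct in principle, but the paper's route is considerably cleaner than the counting heuristic you sketch. Because conditioned on $S$ the planted clause $U_1$ has the same law as every other Alice-clause, one has $I(\Pi_1;U_1\mid S)=I(\Pi_1;A_{J_1}\mid S,J_1)=\tfrac1l\sum_j I(\Pi_1;A_j\mid S)\le |\Pi_1|/l$ by the chain rule and independence of the $A_j$'s; an analogous bound holds for Bob, and a short chain of inequalities gives $I(\Pi_1\Pi_2;M)\le(|\Pi_1|+|\Pi_2|)/l$. Fano then converts this directly to the probability bound. This exploits exchangeability in a single line and avoids having to control your covariance sum by hand. (A smaller point: in the no case the indicators $\mathbf 1[i\in A_j\cup B_{j'}]$ are not independent because the clauses have fixed sizes; the paper invokes a Chernoff bound for negatively correlated variables rather than plain Hoeffding.)
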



\section{Discussion and Future Work}
\label{sec:future}
Our main result shows a simultaneous protocol guaranteeing a $3/4$-approximation for the BXOS allocation problem, and a lower bound of $3/4-1/108$ for for the BXOS decision problem. The bigger picture behind these results, even without consideration of truthful combinatorial auctions, is the following:
\begin{itemize}
\item It is surprising that the decision problem is strictly \emph{harder} than the allocation/search problem. To the best of our knowledge, this is the first instance of such a separation.
\item It is surprising that a $(>1/2)$-approximation for either the allocation or decision problem is possible at all, given the strong lower bounds already known on sketching valuation functions, but we are able to get a tight $3/4$-approximation for the allocation problem.
\item A $3/4$-approximation for the decision problem now serves as a new example of what can be achieved in polynomial interactive communication (in fact, two rounds by Theorem~\ref{thm:2rxos2}), but requires exponential simultaneous communication. While such problems are already known, this has a very different flavor than previous constructions, and will likely be a useful tool for this reason.
\end{itemize}

The most obvious question is to resolve whether or not there is a $3/4$-approximation for the allocation problem with general XOS functions. If there isn't, this would provide the first separation between truthful and non-truthful protocols with polynomial communication via Dobzinski's reduction~\cite{Dobzinski16b}. Additionally, whether or not our protocol can be de-randomized is an enticing open question: if no matching deterministic protocol can be found (implying a lower bound of $< 3/4$ for deterministic protocols for the allocation problem), this would provide the first separation between truthful and non-truthful deterministic protocols (Dobzinski's reduction preserves determinism). If our protocol can in fact be de-randomized, this would be fascinating, as this protocol would \emph{deterministically} guarantee a $3/4$-approximation without learning the welfare it achieves.\footnote{It somehow seems tempting to conjecture both that our protocol can be de-randomized and that it can't - a random clause of Alice's does well on average with no input from Bob, so to de-randomize we just need Bob to tell us \emph{something} that identifies a clause performing better than average. At the same time it seems extremely unlikely that a deterministic protocol will somehow provide an approximation guarantee better than $3/4-1/108$ for the allocation problem without violating Theorem~\ref{thm:lb}.}

Finally, while we have provided simultaneous protocols for the allocation problem with approximation guarantees strictly better than $1/2$ when bidders have XOS valuations, it still remains open whether or not a truthful mechanism can obtain a $(>1/2)$-approximation for two-player combinatorial auctions with XOS bidders.

%
\newpage
\thispagestyle{empty}

\appendix







\section{Background on Related Work}\label{sec:related}

There is an enormous literature of related work on combinatorial auctions. The state-of-the-art without concern for incentives is a $1/2$-approximation for any number of subadditive bidders~\cite{Feige06}, and numerous improvements for special cases, such as submodular bidders~\cite{DobzinskiS06,Feige06,FeigeV06}. With concern for incentives, the state-of-the-art (for worst-case approximation ratios and dominant strategy truthfulness) is an $1/O(\sqrt{\log m})$-approximation for XOS bidders, again with improvements for further special cases~\cite{DughmiRY11}. The problem has also been studied in Bayesian settings, where a generic black-box reduction is known if the designer only desires \emph{Bayesian} truthfulness\footnote{A mechanism is Bayesian truthful if it is in every bidder's interest to tell the truth, assuming all other bidders tell the truth and have values drawn from the correct Bayesian prior.}~\cite{HartlineL10, HartlineKM11, BeiH11}. If the designer desires dominant strategy truthfulness but is okay with an average-case welfare guarantee, then a 1/2-approximation is known for XOS bidders~\cite{FeldmanGL14}. Combinatorial auctions have also been studied through the lens of \emph{Price of Anarchy}, but a deeper discussion of this is outside the scope of this paper~\cite{BhawalkarR11, PaesLemeST12, SyrgkanisT12, SyrgkanisT13, FeldmanFGL13, CaiP14, DobzinskiFK15, DevanurMSW15, LehmannLN01, ChristodoulouKS08, BravermanMW16a, FeldmanFMR16}. 

On the topic of simultaneous versus interactive communication, \cite{Yao:1979:CQR:800135.804414} proposed the 2-party simultaneous communication model when communication complexity was introduced.~\cite{Papadimitriou:1982:CC:800070.802192}, \cite{Duris:1984:LBC:800057.808668}, \cite{Nisan:1993:RCC:152322.152386} showed that in the 2-party case, there is an exponential gap between $k$ and $(k-1)$-round deterministic/randomized communication complexity of an explicit function. In the multiparty number-on-forehead communication model \cite{ChandraFL83b}, \cite{BabaiGKL03} showed an exponential gap between simultaneous communication complexity and communication complexity for up to $(\log n)^{1-\varepsilon}$ players for any $\varepsilon >0$. \cite{DobzinskiNO14} recently showed that in combinatorial auctions with unit demand bidders/subadditive bidders, there is an exponential gap (exponential in the number of players) between simultaneous communication complexity and communication complexity. In comparison to these works, our separation between simultaneous and interactive communication for the 2-player BXOS decision problem is of a quite different flavor, and makes the available toolkit for future results more diverse.

\section{Formal Preliminaries}\label{app:formalprelim}
Here, we provide more formal details surrounding our model of communication, approximation, truthful mechanisms, etc.

In a combinatorial auction, there are $n$ players and $m$ items. In 2-party case, we call the first player Alice and the second player Bob. Each player $i$ has a valuation function $v_i:2^{[m]}\rightarrow \R^+$. (We require $v_i(\emptyset) = 0$.) The goal for the auctioneer is to find an allocation $S_1,...,S_n$ ($S_1 \cap \cdots \cap S_n = \emptyset$) to maximize the social welfare $\sum_{i=1}^n v_i(S_i)$. 

\begin{enumerate}
\item When we use ``\textbf{protocol}", it means that players honestly follow the protocol and the challenge is to make the protocol have good approximation ratio, polynomial communication cost and possibly small number of rounds. In this paper, we use the standard communication complexity model and we allow public randomness and private randomness. For details, we refer the reader to \cite{Kushilevitz:1996:CC:264772}.  We want to emphasize two relevant properties of the communication protocols here:
\begin{enumerate}
\item We care about the number of rounds of a protocol. In each round, all the messages need to be sent simultaneously. We use "simultaneous protocols" to denote protocols with only one round of communication. 
\item All the protocols discussed in this paper are in the ``blackboard model.'' In the blackboard model, each message is broadcasted. Or in other words, each message is written on a blackboard for all players and the auctioneer to see. In some protocol, we don't really need broadcast, we will specify where is the message from and sent to in those protocols.
\end{enumerate}

\item When we use ``\textbf{mechanism},'' it means that players might not tell the truth and we need to incentivize the players to cooperate. A mechanism in this paper can be considered as a protocol together with an allocation rule and a payment rule. Let the protocol be $\pi$ and the transcript be $\Pi$. For $i=1,...,n$, let $S_i$ be the allocation rule and $p_i$ be the payment of player $i$. Player $i$'s utility is defined as $u_i(\Pi)  = v_i(S_i(\Pi)) - p_i(\Pi)$. Player $i$'s goal is to maximize her expected utility $\E [u_i(\Pi)] $. The expectation is over the randomness of the mechanism. 
\end{enumerate}

We further define the truthful mechanism as the following. Let $m_i$ be the message sent by player $i$. $m_i$ is a function of $v_i$ and the history of the protocol. Here we only make the definition for the case when each player sends at most one message in the protocol and all the mechanisms in this paper are in this case. We say that $m_i$ is a \textbf{dominant strategy} (in expectation) for player $i$, if for all $v_1,...,v_n$, player $i$'s other strategy $m'_i$ and other players' strategy $m_{-i}$, 
\[
\E \left[ v_i(S_i(\Pi(m_i, m_{-i})) - p_i(\Pi(m_i, m_{-i})) \right] \geq \E \left[ v_i(S_i(\Pi(m'_i, m_{-i})) - p_i(\Pi(m'_i, m_{-i})) \right].
\]
We say that a mechanism is a \textbf{truthful mechanism} if there exist dominant strategies for all players.

One of our goals in this paper is to find an allocation that achieves good approximation of the maximum social welfare $\SW(v_1,...,v_n)$ (defined as allocation problem in Section~\ref{sec:intro}). We say a protocol is $\alpha$-approximation if for all $v_1,...,v_n$, 
\[
\E[  \sum_{i=1}^n v_i(S_i(\Pi))] \geq \alpha \cdot \SW(v_1,...,v_n).
\]
We say a truthful mechanism is $\alpha$-approximation if for all $v_1,...,v_n$ there exist dominant strategies $m_1,...,m_n$ for player $1,...,n$ guaranteeing:
\[
\E[  \sum_{i=1}^n v_i(S_i(\Pi(m_1,...,m_n))] \geq \alpha \cdot \SW(v_1,...,v_n).
\]


\section{Tools for proofs}
\label{sec:prelimapp}
\subsection{Information Theory}

Here we briefly review some facts and definitions from information theory that will be used in this paper. For a more detailed introduction, we refer the reader to \cite{Cover:2006:EIT:1146355}.

Throughout this paper, we use $\log$ to refer to the base $2$ logarithm and use $\ln$ to refer to the natural logarithm. 

\begin{definition}
The \emph{entropy} of a random variable $X$, denoted by $H(X)$, is defined as $H(X) = \sum_x \Pr[X = x] \log(1 / \Pr[X = x])$. 
\end{definition}

If $X$ is drawn from Bernoulli distributions $\Ber(p)$, we use $H(p) = -(p\log p + (1-p)(\log(1-p))$ to denote $H(X)$. 

\begin{definition}
The \emph{conditional entropy} of random variable $X$ conditioned on random variable $Y$ is defined as $H(X|Y) = \mathbb{E}_y[H(X|Y = y)]$. 
\end{definition}

\begin{fact}
$H(XY) = H(X) + H(Y|X)$. 
\end{fact}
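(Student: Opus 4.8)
The plan is to prove this by expanding both sides directly from the definitions of entropy and conditional entropy and matching terms, using the chain rule for probabilities. First I would write out the joint entropy
\[
H(XY) = \sum_{x,y} \Pr[X=x, Y=y] \log\frac{1}{\Pr[X=x, Y=y]},
\]
where the sum ranges over all pairs $(x,y)$ in the support of the joint distribution. Then I would substitute the probabilistic chain rule $\Pr[X=x, Y=y] = \Pr[X=x]\cdot\Pr[Y=y \mid X=x]$ into the logarithm and use $\log\frac{1}{pq} = \log\frac{1}{p} + \log\frac{1}{q}$ to split the summand as
\[
\Pr[X=x,Y=y]\left(\log\frac{1}{\Pr[X=x]} + \log\frac{1}{\Pr[Y=y\mid X=x]}\right).
\]

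Next I would handle the two resulting sums separately. For the first, summing $\Pr[X=x,Y=y]\log\frac{1}{\Pr[X=x]}$ over $y$ first marginalizes out $Y$, leaving $\sum_x \Pr[X=x]\log\frac{1}{\Pr[X=x]} = H(X)$. For the second, I would group the terms by the value of $x$, factor $\Pr[X=x]$ out of the inner sum over $y$ (using $\Pr[X=x,Y=y] = \Pr[X=x]\Pr[Y=y\mid X=x]$ again), and recognize the inner sum $\sum_y \Pr[Y=y\mid X=x]\log\frac{1}{\Pr[Y=y\mid X=x]}$ as $H(Y \mid X=x)$. The second sum is therefore $\sum_x \Pr[X=x]\, H(Y\mid X=x) = \mathbb{E}_y[H(X|Y=y)]$ in the notation of the conditional-entropy definition above (with the roles of the conditioning variable matched appropriately), i.e. $H(Y\mid X)$. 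Adding the two pieces gives $H(XY) = H(X) + H(Y\mid X)$.

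There is essentially no obstacle here; the only point requiring a word of care is the treatment of pairs with $\Pr[X=x,Y=y]=0$ (or $\Pr[X=x]=0$), for which one uses the standard convention $0\log\frac{1}{0} = 0$ (justified by $t\log\frac1t \to 0$ as $t\to 0^+$), so that restricting every sum to the relevant support is harmless. I would state this convention once at the start and otherwise let the symbol manipulation above carry the proof.
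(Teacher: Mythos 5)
Your derivation is correct and is the standard textbook proof of the entropy chain rule; the paper itself states this as a Fact without proof, deferring to the cited information-theory reference, so there is nothing to compare against beyond noting that your argument is exactly the expected one. The only cosmetic slip is writing the final expectation as $\mathbb{E}_y[H(X|Y=y)]$ rather than $\mathbb{E}_x[H(Y|X=x)]$, which you already flag as a matter of matching the conditioning variable, so the proof stands as written.
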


\begin{definition}
\label{def:muinfo}
The \emph{mutual information} between two random variables $X$ and $Y$ is defined as $I(X;Y) = H(X) - H(X|Y) = H(Y) - H(Y|X)$. 
\end{definition}

\begin{definition}
The \emph{conditional mutual information} between $X$ and $Y$ given $Z$ is defined as $I(X;Y|Z) = H(X|Z) - H(X|YZ) = H(Y|Z) - H(Y|XZ)$. 
\end{definition}

\begin{fact}\label{fact:cr}
Let $X_1,X_2,Y,Z$ be random variables, we have $I(X_1X_2;Y|Z) = I(X_1;Y|Z) + I(X_2;Y|X_1Z)$.
\end{fact}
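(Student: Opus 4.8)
\textbf{Proof proposal for Fact~\ref{fact:cr}.}
The plan is to reduce everything to the definition of conditional mutual information, Definition~\ref{def:muinfo}'s conditional analogue, together with the chain rule for entropy $H(XY) = H(X) + H(Y|X)$ already stated. First I would record the conditional form of the entropy chain rule: for any random variables $A,B,C$, iterating the stated Fact gives $H(ABC) = H(C) + H(A|C) + H(B|AC)$, and since $H(ABC) = H(C) + H(AB|C)$ as well, subtracting yields
\[
H(AB|C) = H(A|C) + H(B|AC).
\]
This is the only nonroutine bookkeeping step, since the paper only states the \emph{unconditional} chain rule; I would apply the identity above twice, once with $C = Z$ and once with $C = (Y,Z)$, taking $A = X_1$ and $B = X_2$ in both cases.

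Next I would expand the left-hand side of the claimed identity directly from the definition of conditional mutual information,
\[
I(X_1X_2;Y|Z) = H(X_1X_2|Z) - H(X_1X_2|YZ),
\]
and substitute the two instances of the conditional entropy chain rule just derived, obtaining
\[
I(X_1X_2;Y|Z) = \bigl(H(X_1|Z) + H(X_2|X_1Z)\bigr) - \bigl(H(X_1|YZ) + H(X_2|X_1YZ)\bigr).
\]
Regrouping the four terms into two differences gives
\[
I(X_1X_2;Y|Z) = \bigl(H(X_1|Z) - H(X_1|YZ)\bigr) + \bigl(H(X_2|X_1Z) - H(X_2|X_1YZ)\bigr).
\]
Finally I would recognize the first parenthesized difference as $I(X_1;Y|Z)$ and the second as $I(X_2;Y|X_1Z)$ — the latter by reading the definition of conditional mutual information with the conditioning variable taken to be the pair $(X_1,Z)$ — which completes the proof.

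I do not expect a genuine obstacle here: the argument is a finite sequence of substitutions. The only point requiring care is making sure the conditional chain rule for entropy is justified from the unconditional version stated in the excerpt (rather than quoted as if already available), and that the term $H(X_2|X_1Z) - H(X_2|X_1YZ)$ is correctly matched to $I(X_2;Y|X_1Z)$ by treating $X_1$ as part of the conditioning. Everything else is symmetric expansion. If one prefers, the same computation can be run with the roles of the two summands checked for consistency by also expanding $I(X_1;Y|Z) + I(X_2;Y|X_1Z)$ forward and verifying the telescoping of entropy terms matches; I would include that as a sanity check but not as a separate argument.
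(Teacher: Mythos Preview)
Your proposal is correct and carefully written; the conditional chain rule for entropy is properly derived from the unconditional one, and the regrouping into the two mutual information terms is exactly right. Note that the paper itself does not prove Fact~\ref{fact:cr}: it is listed among standard information-theoretic tools (with a reference to Cover--Thomas) and stated without proof, so there is no alternative argument to compare against.
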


\begin{fact}
\label{fact:it1}
Let $X,Y,Z,W$ be random variables. If $I(Y;W|X,Z) = 0$, then $I(X;Y|Z) \geq I(X;Y|ZW)$. 
\end{fact}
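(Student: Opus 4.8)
The idea is to compute the single quantity $I(XW;Y\mid Z)$ by the chain rule for conditional mutual information (Fact~\ref{fact:cr}) in two different orders, use the hypothesis to cancel one of the resulting terms, and then drop a nonnegative term.

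First I would apply Fact~\ref{fact:cr} with $X_1 = X$ and $X_2 = W$ to obtain
\[
I(XW;Y\mid Z) = I(X;Y\mid Z) + I(W;Y\mid XZ).
\]
The hypothesis is exactly $I(Y;W\mid X,Z) = I(W;Y\mid XZ) = 0$, so this says $I(XW;Y\mid Z) = I(X;Y\mid Z)$. Next I would apply Fact~\ref{fact:cr} to the same quantity with the listed variables in the opposite order, i.e. with $X_1 = W$ and $X_2 = X$ (using that the joint random variable $XW$ does not depend on the order in which we write it), to obtain
\[
I(XW;Y\mid Z) = I(W;Y\mid Z) + I(X;Y\mid WZ).
\]
Combining the two displays gives the identity $I(X;Y\mid Z) = I(W;Y\mid Z) + I(X;Y\mid WZ)$. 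Since a conditional mutual information is an average over the conditioning variable of (unconditional) mutual informations, and unconditional mutual information is nonnegative (conditioning does not increase entropy), we have $I(W;Y\mid Z) \geq 0$, and therefore $I(X;Y\mid Z) \geq I(X;Y\mid WZ) = I(X;Y\mid ZW)$, which is the claim.

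There is essentially no obstacle: the only thing requiring care is choosing the two groupings in the chain rule so that the hypothesis $I(Y;W\mid X,Z)=0$ annihilates precisely one of the four terms produced, leaving an identity of the form $I(X;Y\mid Z) = (\text{nonnegative}) + I(X;Y\mid ZW)$. Everything else is a direct invocation of the standard information-theoretic identities already recorded in the excerpt together with nonnegativity of (conditional) mutual information.
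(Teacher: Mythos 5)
Your proof is correct: the double application of the chain rule (Fact~\ref{fact:cr}) to $I(XW;Y\mid Z)$, using symmetry of mutual information to read the hypothesis as $I(W;Y\mid XZ)=0$ and nonnegativity of $I(W;Y\mid Z)$ to conclude, is exactly the standard argument for this fact. The paper itself states Fact~\ref{fact:it1} without proof as a routine information-theoretic tool, so there is nothing to diverge from; your write-up fills in precisely the intended justification.
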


\begin{fact}
\label{fact:it2}
Let $X,Y,Z,W$ be random variables. If $I(Y;W|Z) = 0$, then $I(X;Y|Z) \leq I(X;Y|ZW)$. 
\end{fact}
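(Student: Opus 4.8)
The plan is to derive the inequality from a single application of the chain rule for conditional mutual information (Fact~\ref{fact:cr}) together with non-negativity of conditional mutual information. Concretely, I would expand the joint quantity $I(XW;Y\mid Z)$ in the two natural orders. Peeling off $X$ first gives
\[
I(XW;Y\mid Z) = I(X;Y\mid Z) + I(W;Y\mid XZ),
\]
and peeling off $W$ first gives
\[
I(XW;Y\mid Z) = I(W;Y\mid Z) + I(X;Y\mid WZ).
\]
Equating the two right-hand sides and using the hypothesis $I(Y;W\mid Z)=0$ (which, by symmetry of mutual information in its first two arguments, is the same as $I(W;Y\mid Z)=0$) yields the identity
\[
I(X;Y\mid Z) + I(W;Y\mid XZ) = I(X;Y\mid WZ).
\]

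Next I would invoke the fact that conditional mutual information is always non-negative, so $I(W;Y\mid XZ)\geq 0$. Dropping this term from the left-hand side of the displayed identity immediately gives $I(X;Y\mid Z)\leq I(X;Y\mid WZ)$, which is exactly the claim (noting $ZW$ and $WZ$ denote the same conditioning). For completeness, I would also observe that the companion statement Fact~\ref{fact:it1} comes out of the very same identity: if instead $I(Y;W\mid XZ)=0$, then $I(X;Y\mid Z)=I(W;Y\mid Z)+I(X;Y\mid WZ)\geq I(X;Y\mid WZ)$.

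There is essentially no serious obstacle here; the only thing to be careful about is the bookkeeping of which variables are arguments of the mutual-information functional versus which are conditioned on, and applying the symmetry $I(X;Y\mid Z)=I(Y;X\mid Z)$ consistently so that the hypothesis $I(Y;W\mid Z)=0$ lines up with the $I(W;Y\mid Z)$ term produced by the chain-rule expansion. If one wants a fully self-contained argument not relying on ``non-negativity of conditional mutual information'' as a black box, one can instead write $I(W;Y\mid XZ) = H(Y\mid XZ) - H(Y\mid WXZ)$ and note that conditioning cannot increase entropy, so this difference is $\geq 0$.
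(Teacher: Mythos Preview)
Your argument is correct. The paper does not actually supply a proof of Fact~\ref{fact:it2}; it is stated in the preliminaries as a standard information-theoretic fact, with the reader referred to a textbook. Your two-way expansion of $I(XW;Y\mid Z)$ via the chain rule (Fact~\ref{fact:cr}), followed by cancelling the hypothesis term and dropping the nonnegative $I(W;Y\mid XZ)$, is exactly the standard derivation and is fully rigorous. The side remark deriving Fact~\ref{fact:it1} from the same identity is also correct.
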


\subsection{Concentration Bound}

\begin{definition}[Negative Correlation]
Let $X_1,...,X_n$ be $n$ random variables supported on $\{0,1\}$. We say $X_1,...,X_n$ are negatively correlated if for all $S \subseteq [n]$,
\[
\Pr\left[\bigwedge_{i \in S} (X_i = 1) \right]  \leq \prod_{i \in S} Pr[X_i = 1].
\]
\end{definition}

\begin{lemma}[Generalized Chernoff Bound\cite{doi:10.1137/S0097539793250767}]
Let $X_1,...,X_n$ be $n$ random variables supported on $\{0,1\}$ and they are negatively correlated. Then for any $a > 0$, 
\[
\Pr\left[ \sum_{i=1}^n X_i \geq a + \E[\sum_{i=1}^n X_i ]\right] \leq e^{-2a^2 /n}. 
\]
\end{lemma}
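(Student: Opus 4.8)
The plan is to run the standard exponential-moment (Chernoff--Hoeffding) argument, with the negative-correlation hypothesis entering at exactly one point: it guarantees that the moment generating function of $\sum_i X_i$ is dominated by the product of the individual moment generating functions, just as in the independent case.

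First I would fix a parameter $t>0$, to be optimized later, and exploit that each $X_i\in\{0,1\}$, so that $e^{tX_i}=1+(e^t-1)X_i$ exactly. Expanding the product over $i$ gives
\[
\prod_{i=1}^n e^{tX_i}=\prod_{i=1}^n\bigl(1+(e^t-1)X_i\bigr)=\sum_{S\subseteq[n]}(e^t-1)^{|S|}\prod_{i\in S}X_i,
\]
and since $t>0$ every coefficient $(e^t-1)^{|S|}$ is nonnegative. Taking expectations term by term and applying the definition of negative correlation to each term, namely $\E\bigl[\prod_{i\in S}X_i\bigr]=\Pr\bigl[\bigwedge_{i\in S}(X_i=1)\bigr]\le\prod_{i\in S}\Pr[X_i=1]=\prod_{i\in S}\E[X_i]$, yields
\[
\E\Bigl[\prod_{i=1}^n e^{tX_i}\Bigr]\le\sum_{S\subseteq[n]}(e^t-1)^{|S|}\prod_{i\in S}\E[X_i]=\prod_{i=1}^n\bigl(1+(e^t-1)\E[X_i]\bigr)=\prod_{i=1}^n\E[e^{tX_i}].
\]
This is the sub-multiplicativity of the MGF that the rest of the argument needs.

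Next I would center the variables: write $\mu_i=\E[X_i]$ and $\mu=\sum_i\mu_i=\E[\sum_i X_i]$, and apply Markov's inequality to $e^{t\sum_i(X_i-\mu_i)}$. Using the sub-multiplicativity just established, and then Hoeffding's lemma $\E[e^{t(X_i-\mu_i)}]\le e^{t^2/8}$ (valid since $X_i-\mu_i$ lies in an interval of length $1$), I get
\[
\Pr\Bigl[\sum_{i=1}^n X_i\ge a+\mu\Bigr]\le e^{-ta}\,\E\Bigl[\prod_{i=1}^n e^{t(X_i-\mu_i)}\Bigr]\le e^{-ta}\prod_{i=1}^n\E\bigl[e^{t(X_i-\mu_i)}\bigr]\le e^{-ta+nt^2/8}.
\]
Optimizing over $t$, i.e. taking $t=4a/n$, makes the exponent equal to $-2a^2/n$, which is exactly the claimed bound $e^{-2a^2/n}$.

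The only step that is not entirely routine is the MGF sub-multiplicativity, and within it the observation that the one-sided definition of negative correlation (an upper bound on $\Pr[\bigwedge_{i\in S}(X_i=1)]$ only) is precisely what the argument requires: after expanding $\prod_i(1+(e^t-1)X_i)$ with $t>0$, every monomial $\prod_{i\in S}X_i$ carries a nonnegative coefficient, so a term-by-term comparison against $\prod_{i\in S}\E[X_i]$ goes through. Everything downstream is the textbook Chernoff--Hoeffding computation, so I expect no further obstacle.
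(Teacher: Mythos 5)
Your proposal is correct, and the key observation — that for $t>0$ the expansion $e^{tX_i}=1+(e^t-1)X_i$ turns the one-sided negative-correlation inequality into MGF sub-multiplicativity, after which the usual Hoeffding computation gives exactly $e^{-2a^2/n}$ — is precisely the argument of the cited source. The paper itself contains no proof of this lemma (it is imported from \cite{doi:10.1137/S0097539793250767} as a black-box tool), so your write-up simply supplies the standard proof and matches it; no gap.
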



\section{Missing Definitions and Proofs of Section \ref{sec:ske}}\label{app:ske}

Here we give an extension of Definition \ref{def:bsketch} to non-binary XOS valuations. It is easy to verify that Definition~\ref{def:bsketch} and Definition~\ref{def:sketch} are equivalent for binary XOS valuations.
\begin{definition}[Sketches of XOS valuations]
\label{def:sketch}
For a XOS valuation $v$, define its $(k,\alpha)$-sketch $(b_1,...,b_k)$ as
\[
\argmax_{b_1,...,b_k \in \{a_1,...,a_t\}} \sum_{i=1}^m  \left(  \int_0^{+\infty}(x_{i,u} - \alpha \cdot x_{i,u}^2)du\right).
\]
Here $a_1,...,a_t$ are the clauses of $v$ and $x_{i,u} = \frac{\sum_{j=1}^k \1_{b_j(\{i\}) \geq u}}{k}$. 
\end{definition}

Now we prove some simple properties of these sketches.
\begin{lemma}
\label{lem:bsketch}
Let $(b_1,...,b_k)$ be the $(k,\alpha)$-sketch of some binary XOS valuation $v$. Let $\alpha \leq 1/2$. Let $a$ be a clause of $v$ and $A \subseteq \{i|a(\{i\} =1\}$. We have
\[
\sum_{i=1}^m (x_i -2\alpha\cdot x_i^2) + 2\alpha \cdot  \sum_{i \in A} x_i \geq |A|-\frac{2\alpha \cdot v([m])}{k}.
\]
Here $x_i = \frac{b_1(\{i\}) + \cdots +b_k(\{i\})}{k}$. 
\end{lemma}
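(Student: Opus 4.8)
The plan is to use the optimality of the $(k,\alpha)$-sketch $(b_1,\ldots,b_k)$ against a carefully chosen competitor tuple of clauses, and read off the claimed inequality from the fact that the sketch's objective is at least that of the competitor. The natural competitor is obtained by replacing one of the $b_j$'s with the clause $a$: since the sketch maximizes $\sum_{i=1}^m (x_i - \alpha x_i^2)$ over all choices of $k$ clauses, for each $\ell \in \{1,\ldots,k\}$ we have $\sum_{i=1}^m (x_i - \alpha x_i^2) \geq \sum_{i=1}^m (x_i^{(\ell)} - \alpha (x_i^{(\ell)})^2)$, where $x_i^{(\ell)}$ is the averaged indicator value when $b_\ell$ is swapped out for $a$. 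The first step is to write $x_i^{(\ell)} = x_i + \frac{a(\{i\}) - b_\ell(\{i\})}{k}$ and expand the quadratic. Averaging this inequality over all $\ell \in \{1,\ldots,k\}$ should produce, on the right-hand side, a clean expression in the $x_i$'s and in $a(\{i\})$, with the cross terms organizing themselves into the $+2\alpha \sum_{i\in A} x_i$ correction appearing in the statement (here $A \subseteq \{i : a(\{i\})=1\}$, and since $a$ is binary additive $a(\{i\}) \in \{0,1\}$, so $\sum_i a(\{i\}) x_i = \sum_{i: a(\{i\})=1} x_i \geq \sum_{i \in A} x_i$).

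Concretely, after expansion the $\alpha$-quadratic term contributes $-\alpha\big[(x_i^{(\ell)})^2 - x_i^2\big] = -\alpha\big[\tfrac{2x_i (a(\{i\}) - b_\ell(\{i\}))}{k} + \tfrac{(a(\{i\}) - b_\ell(\{i\}))^2}{k^2}\big]$, and the linear term contributes $\tfrac{a(\{i\}) - b_\ell(\{i\})}{k}$. Summing over $\ell$ and dividing by $k$: the term $\sum_\ell b_\ell(\{i\}) = k x_i$, so $\tfrac1k \sum_\ell (a(\{i\}) - b_\ell(\{i\})) = a(\{i\}) - x_i$; this is where the ``boost'' toward $|A|$ comes from, since summing $a(\{i\})$ over $i$ with $a(\{i\})=1$ on $A$ gives at least $|A|$. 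The quadratic cross term yields $-\tfrac{2\alpha}{k} x_i (a(\{i\}) - x_i)$, i.e.\ $-\tfrac{2\alpha}{k} a(\{i\}) x_i + \tfrac{2\alpha}{k} x_i^2$, and the pure square term $-\tfrac{\alpha}{k^2}\sum_\ell (a(\{i\}) - b_\ell(\{i\}))^2$ is a lower-order error of magnitude $O(\alpha/k^2)$ per coordinate, which should sum to something bounded by $\tfrac{2\alpha v([m])}{k}$ using that $a$ and each $b_\ell$ are binary and supported inside a set of size at most $v([m])$ (so $\sum_i (a(\{i\}) - b_\ell(\{i\}))^2 \leq$ something like $a([m]) + b_\ell([m]) \leq 2 v([m])$).

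Rearranging the averaged inequality should give $\sum_i (x_i - 2\alpha x_i^2) + 2\alpha \sum_i a(\{i\}) x_i \geq \sum_i a(\{i\}) - \tfrac{2\alpha v([m])}{k}$ (this is exactly where I expect the hypothesis $\alpha \leq 1/2$ to be used — to ensure a coefficient like $1 - 2\alpha \cdot(\text{something})$ or the sign of a leftover term stays favorable so no extra positive slack needs to be discarded), and then monotonicity $\sum_i a(\{i\}) \geq |A|$ together with $\sum_i a(\{i\}) x_i \geq \sum_{i \in A} x_i$ (both $x_i \geq 0$ and $a(\{i\}) \geq \1_{i \in A}$) finishes it. The main obstacle I anticipate is the bookkeeping of the cross terms and getting the coefficient of $x_i^2$ to come out as exactly $-2\alpha$ rather than $-\alpha$ — this is precisely the effect of differentiating/averaging the quadratic regularizer, and it is the reason the lemma's statement has $2\alpha$ where the sketch definition has $\alpha$; I would double-check that the averaging-over-$\ell$ step is what converts $\alpha \mapsto 2\alpha$, and track carefully that the error term stays on the correct side of the inequality given $\alpha \le 1/2$.
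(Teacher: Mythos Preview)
Your plan is essentially the paper's proof: use optimality of the sketch against the competitor obtained by swapping one $b_\ell$ for $a$, expand the quadratic, sum (equivalently, average) over $\ell$, and bound the residual $\tfrac{\alpha}{k^2}\sum_{i,\ell}(a(\{i\})-b_\ell(\{i\}))^2$ by $\tfrac{2\alpha v([m])}{k}$ via $(a-b_\ell)^2 \le a+b_\ell$. The only structural difference is that the paper first reduces to the case $A=\tilde A:=\{i:a(\{i\})=1\}$ and then runs the swap argument, whereas you run the swap argument for $\tilde A$ first and pass to a subset $A$ at the end.

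One correction: your final ``monotonicity'' step as written does not close the argument. You record $|\tilde A|\ge |A|$ and $\sum_{i\in\tilde A}x_i\ge\sum_{i\in A}x_i$, but both of these lower \emph{both} sides of the inequality, so they do not combine to preserve $\text{LHS}\ge\text{RHS}$. What is needed is to compare the drops: passing from $\tilde A$ to $A$ decreases the left side by $2\alpha\sum_{i\in\tilde A\setminus A}x_i$ and the right side by $|\tilde A\setminus A|$, and the former is at most the latter because $2\alpha x_i\le 1$. This is precisely where the hypothesis $\alpha\le 1/2$ enters --- not in the rearrangement step, which goes through for any $\alpha$ (the error bound and the coefficient $-2\alpha$ on $x_i^2$ emerge from the averaging with no sign condition on $\alpha$). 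So relocate your use of $\alpha\le 1/2$ to the subset-reduction step and the proof is complete.
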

\begin{proof}
Because $x_i \leq 1$ and $\alpha \leq 1/2$, we have 
\[
2\alpha \cdot  \sum_{i \in  \{i|a(\{i\} =1\} \backslash A} x_i \leq | \{i|a(\{i\} =1\} \backslash A|.
\]
 So if we can prove the lemma for the case when $A = \{i|a(\{i\} =1\}$, it will directly imply the lemma for the case when $A \subsetneq \{i|a(\{i\} =1\}$. From now on, we will assume $A = \{i|a(\{i\} =1\}$.  By Definition \ref{def:bsketch}, $(b_1,...,b_k)$ are some clauses of $v$ that maximize $\sum_{i=1}^m  (x_i - \alpha \cdot x_i^2)$. 

For $1 \leq j \leq m$, if we replace $b_j$ with $a$, then $\sum_{i=1}^m  (x_i - \alpha \cdot x_i^2)$ will not increase. So we have
\[
\sum_{i=1}^m  (x_i - \alpha \cdot x_i^2) \geq \sum_{i=1}^m \left( \left(x_i - b_j(\{i\})/k + a(\{i\})/k \right)-\alpha \cdot \left(x_i - b_j(\{i\})/k + a(\{i\}/k) \right)^2\right).
\]
This implies
\[
\sum_{i=1}^m \left( \frac{b_j(\{i\}) - a(\{i\})}{k} + 2\alpha\cdot\frac{k \cdot x_i(a(\{i\})-b_j(\{i\})) + \alpha (b_j(\{i\}) - a(\{i\}))^2}{k^2}\right) \geq 0.
\]
Summing over all $j \in [k]$, we get
\begin{eqnarray*}
\sum_{i=1}^m (x_i -2\alpha\cdot x_i^2) + 2\alpha\cdot\sum_{i \in A} x_i &\geq& |A| - \sum_{i=1}^m \sum_{j=1}^k \alpha\cdot\frac{(b_j(\{i\}) - a(\{i\}))^2}{k^2} \\
&\geq& |A| - \sum_{i=1}^m \sum_{j=1}^k \alpha\cdot\frac{ b_j(\{i\}) + a(\{i\}) }{k^2}\\
&=&|A|-\frac{\alpha \cdot (|A| +  \sum_{i=1}^m x_i)}{k} \\
 &\geq& |A|-\frac{2\alpha \cdot v([m])}{k}.
\end{eqnarray*}
\end{proof}

\begin{lemma}
\label{lem:sketch}
Let $(b_1,...,b_k)$ be the $(k,\alpha)$-sketch of some  XOS valuation $v$. Let $\alpha \leq 1/2$. Let $a$ be a clause of $v$. $a'$ is an additive valuation and for all $i$, $a(\{i\}) \geq a'(\{i\})$. We have
\[
\sum_{i=1}^m  \left(  \int_0^{+\infty}(x_{i,u} - 2\alpha \cdot x_{i,u}^2)du+ 2\alpha \cdot \int_0^{a'(\{i\})}x_{i,u} du \right)  \geq a'([m])-\frac{2\alpha \cdot v([m])}{k}.
\]
Here $x_{i,u} = \frac{\sum_{j=1}^k \1_{b_j(\{i\}) \geq u}}{k}$. 
\end{lemma}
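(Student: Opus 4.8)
The plan is to mimic the proof of Lemma~\ref{lem:bsketch} almost verbatim, replacing discrete clause-values $b_j(\{i\})\in\{0,1\}$ by the integral representation used in Definition~\ref{def:sketch}. The starting point is that $(b_1,\ldots,b_k)$ maximizes $\sum_{i=1}^m \int_0^{+\infty}(x_{i,u}-\alpha x_{i,u}^2)\,du$ over all choices of $k$ clauses of $v$. First I would use the fact that replacing any single $b_j$ by the clause $a$ cannot increase this objective. Writing $y_{i,u}^{(j)} = x_{i,u} - \1_{b_j(\{i\})\ge u}/k + \1_{a(\{i\})\ge u}/k$ for the new fraction after the swap, optimality gives
\[
\sum_{i=1}^m \int_0^{+\infty}(x_{i,u}-\alpha x_{i,u}^2)\,du \;\ge\; \sum_{i=1}^m \int_0^{+\infty}(y_{i,u}^{(j)}-\alpha (y_{i,u}^{(j)})^2)\,du.
\]
Expanding the square exactly as in Lemma~\ref{lem:bsketch} and rearranging, for each $j$ this becomes a pointwise (in $i,u$) inequality of the form
\[
\sum_{i=1}^m \int_0^{+\infty}\!\!\left(\frac{\1_{b_j(\{i\})\ge u}-\1_{a(\{i\})\ge u}}{k} + 2\alpha\cdot\frac{x_{i,u}(\1_{a(\{i\})\ge u}-\1_{b_j(\{i\})\ge u})}{k} + \alpha\cdot\frac{(\1_{b_j(\{i\})\ge u}-\1_{a(\{i\})\ge u})^2}{k^2}\right)du \ge 0.
\]

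Next I would sum this over $j\in[k]$. The first term sums to $\sum_i\int_0^{+\infty}(x_{i,u}-\1_{a(\{i\})\ge u})\,du = \sum_i x_i\cdot(\text{placeholder}) $ — more precisely it gives $\sum_{i=1}^m\int_0^{+\infty}x_{i,u}\,du - a([m])$ after noting $\int_0^{+\infty}\1_{a(\{i\})\ge u}\,du = a(\{i\})$. The middle term sums to $2\alpha\big(\sum_i\int_0^{a(\{i\})}x_{i,u}\,du - \sum_i\int_0^{+\infty}x_{i,u}^2\,du\big)$, using $x_{i,u} = \tfrac1k\sum_j\1_{b_j(\{i\})\ge u}$. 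Collecting terms yields
\[
\sum_{i=1}^m\int_0^{+\infty}(x_{i,u}-2\alpha x_{i,u}^2)\,du + 2\alpha\sum_{i=1}^m\int_0^{a(\{i\})}x_{i,u}\,du \;\ge\; a([m]) - \frac{\alpha}{k}\cdot(\text{error term from the squared piece}).
\]
For the error term, I would use $(\1_{b_j(\{i\})\ge u}-\1_{a(\{i\})\ge u})^2 \le \1_{b_j(\{i\})\ge u}+\1_{a(\{i\})\ge u}$ (both indicators are in $\{0,1\}$), so the squared contribution is at most $\tfrac{\alpha}{k^2}\sum_{i,j}\int(\1_{b_j(\{i\})\ge u}+\1_{a(\{i\})\ge u})\,du = \tfrac{\alpha}{k}\big(\sum_i\int x_{i,u}\,du + a([m])\big)$. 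Since each $a$ is a clause of $v$, $a([m])\le v([m])$, and $\sum_i\int x_{i,u}\,du \le v([m])$ as well (it's an average of clause values), this error is at most $\tfrac{2\alpha v([m])}{k}$.

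Finally, to pass from the clause $a$ to the dominated additive valuation $a'$ (with $a'(\{i\})\le a(\{i\})$ for all $i$), I would observe that shrinking the upper limit of the integral $\int_0^{a(\{i\})}x_{i,u}\,du$ down to $\int_0^{a'(\{i\})}x_{i,u}\,du$ changes the left side by at most $2\alpha\sum_i\int_{a'(\{i\})}^{a(\{i\})}x_{i,u}\,du \le 2\alpha\sum_i(a(\{i\})-a'(\{i\}))$, since $x_{i,u}\le 1$; and on the right side replacing $a([m])$ by $a'([m])$ drops it by exactly $\sum_i(a(\{i\})-a'(\{i\}))$. Because $2\alpha\le 1$, the net effect is favorable, so the inequality with $a'$ in place of $a$ follows (this is the integral analogue of the opening paragraph of the proof of Lemma~\ref{lem:bsketch}, which handled the case $A\subsetneq\{i:a(\{i\})=1\}$). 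I expect the only delicate point to be bookkeeping the integral substitutions — in particular checking that $\int_0^{+\infty}\1_{a(\{i\})\ge u}\,du = a(\{i\})$ and that the $u$-pointwise swap argument is valid for all $u$ simultaneously — but there is no real obstacle, just care with the layer-cake representation; the rest is the same algebra as Lemma~\ref{lem:bsketch}.
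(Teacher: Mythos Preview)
Your proposal is correct and follows essentially the same approach as the paper's proof: both use the swap argument (replacing $b_j$ by $a$ cannot increase the sketch objective), expand the square, sum over $j\in[k]$, and bound the residual squared-difference term by $\tfrac{2\alpha v([m])}{k}$ via $a([m]),b_j([m])\le v([m])$. The only cosmetic differences are that the paper reduces from $a'$ to $a$ at the start (you do it at the end), and the paper bounds $\int_0^{+\infty} y_{i,j,u}^2\,du$ by integrating $1/k^2$ over $[0,\max(a(\{i\}),b_j(\{i\}))]$ rather than using $(\1_{b_j}-\1_a)^2\le \1_{b_j}+\1_a$; these lead to the same estimate.
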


\begin{proof}
Because $x_{i,u} \leq 1$ and $\alpha \leq 1/2$, similarly as Lemma \ref{lem:bsketch}, it suffices to prove the case when $a' = a$.  By Definition \ref{def:sketch}, for $1 \leq j \leq m$, if we replace $b_j$ with $a$, $\sum_{i=1}^m  \left(  \int_0^{+\infty}(x_{i,u} - \alpha  \cdot x_{i,u}^2)du\right)$ will not decrease. Let $y_{i,j,u} = (\1_{b_j(\{i\}) \geq u}- \1_{a(\{i\}) \geq u})/k$. We get
\[
\sum_{i=1}^m  \left(  \int_0^{+\infty}(x_{i,u} - \alpha  \cdot x_{i,u}^2)du\right) \geq \sum_{i=1}^m  \left(  \int_0^{+\infty}\left(x_{i,u}-y_{i,j,u}  - \alpha  \cdot (x_{i,u}-y_{i,j,u})^2\right)du\right)
\]
This implies
\begin{eqnarray*}
\sum_{i=1}^m \int_0^{+\infty}\left( y_{i,j,u}- 2\alpha\cdot x_{i,u} \cdot y_{i,j,u} \right) d_u &\geq& -\alpha \cdot \sum_{i=1}^m \int_0^{+\infty} y_{i,j,u}^2 d_u\\
 &\geq& -\alpha \cdot \sum_{i=1}^m \int_0^{\max(a(\{i\}),b_j(\{i\}))} \frac{1}{k^2} \cdot  d_u \\
 &=& -\alpha \cdot\sum_{i=1}^m  \frac{\max(a(\{i\}),b_j(\{i\}))}{k^2} \geq -\frac{2\alpha \cdot v([m])}{k^2}.
\end{eqnarray*}
Notice that $\sum_{j=1}^k  y_{i,j,u} = x_{i,u} -\1_{a(\{i\})\geq u}$. Summing over all $j \in [m]$, we get
\begin{eqnarray*}
\sum_{i=1}^m \int_0^{+\infty}\left( x_{i,u} - 2\alpha\cdot x_{i,u}^2 \right) d_u  &\geq& \sum_{i=1}^m \int_0^{+\infty}\left( \1_{a(\{i\})\geq u}- 2\alpha\cdot x_{i,u} \cdot \1_{a(\{i\})\geq u} \right) d_u - \frac{2\alpha \cdot v([m])}{k} \\
&=& a([m])-\frac{2\alpha \cdot v([m])}{k} - \sum_{i=1}^m 2\alpha \cdot \int_0^{a(\{i\})}x_{i,u} du .
\end{eqnarray*}
\end{proof}



\section{Protocols for BXOS parties}
\label{app:sbxos}

\subsection{Simultaneous protocol for two BXOS parties (implies Theorem~\ref{thm:mainupper})}
\label{sec:bxos}
\begin{algorithm}
	\floatname{algorithm}{Protocol}
        \caption{Simultaneous protocol for 2-party combinatorial auctions with binary XOS valuations}
        \label{pro:bi2}
    \begin{algorithmic}[1]
	\STATE Alice computes the $(k,1/2)$-sketch $(b_1,...,b_k)$ of her valuation $v_1$. Then she picks $j$ uniformly randomly from $\{1,...,k\}$ and sends $b_j$ to the auctioneer.
	\STATE For each item $i$, the auctioneer allocates it to Alice if $b_j(\{i\}) =1$; otherwise allocate it to Bob. 
             \end{algorithmic}
\end{algorithm}

\begin{theorem}
\label{thm:simbxos}
Protocol \ref{pro:bi2} gives a $(3/4-1/k)$-approximation in expectation to the 2-party BXOS allocation problem. 
\end{theorem}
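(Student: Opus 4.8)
The plan is to reduce everything to a clean sum over the sketch clauses and then squeeze a factor $3/4$ (rather than the naive $1/2$) out of the optimality of the $(k,1/2)$-sketch.

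First, by Claim~\ref{claim:binaryadditive}, once the auctioneer receives a single binary-additive clause $b_j$ and runs Step 2, the resulting welfare is exactly $\SW(b_j,v_2)$; since $j$ is uniform in $[k]$, the expected welfare of Protocol~\ref{pro:bi2} equals $\frac1k\sum_{j=1}^k \SW(b_j,v_2)$, and it suffices to bound this by $(3/4-1/k)\cdot\SW(v_1,v_2)$. Fix an optimal partition $(S,\bar S)$ for $(v_1,v_2)$, a clause $a$ of $v_1$ with $a(S)=v_1(S)$, and a clause $a'$ of $v_2$ with $a'(\bar S)=v_2(\bar S)$, so $W:=\SW(v_1,v_2)=a(S)+a'(\bar S)$; write $A=\{i:a(\{i\})=1\}$, $A'=\{i:a'(\{i\})=1\}$, and $x_i=\frac1k\sum_j b_j(\{i\})$. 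Note $v_1([m])\le W$ (award everything to Alice), which is the source of the $1/k$ loss. I would then lower-bound each $\SW(b_j,v_2)$ by evaluating two natural allocations: (i) the cut $(S,\bar S)$ itself, using $v_2(\bar S)=a'(\bar S)$, giving $\SW(b_j,v_2)\ge b_j(S)+a'(\bar S)$; and (ii) the protocol's own cut $(B_j,\overline{B_j})$, using $v_2\ge a'$ pointwise, giving $\SW(b_j,v_2)\ge|B_j|+a'(\overline{B_j})=|B_j\cup A'|$. Averaging over $j$ turns these into $\frac1k\sum_j\SW(b_j,v_2)\ge a'(\bar S)+\sum_{i\in S}x_i$ and $\ge |A'|+\sum_{i\notin A'}x_i$.

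Next I would invoke the sketch property, Lemma~\ref{lem:bsketch} with $\alpha=1/2$ and the admissible subset $A\cap S$ of $a$'s support: $\sum_i(x_i-x_i^2)+\sum_{i\in A\cap S}x_i\ \ge\ a(S)-\frac{v_1([m])}{k}\ \ge\ a(S)-\frac Wk$. Combining this with bound (i) (and $\sum_{i\in S}x_i\ge\sum_{i\in A\cap S}x_i$) already yields $\frac1k\sum_j\SW(b_j,v_2)\ge W-\frac Wk-\sum_i x_i(1-x_i)$; but $\sum_i x_i(1-x_i)$ can be as large as $\approx W$, and balancing this against the trivial $\frac1k\sum_j\SW(b_j,v_2)\ge\sum_i x_i$ only stalls at $1/2$. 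The extra leverage is a second-moment step: writing $\sum_i x_i=\sum_i x_i(1-x_i)+\sum_i x_i^2$ and using Cauchy–Schwarz, $\sum_i x_i^2\ge\bigl(\sum_{i\in A\cap S}x_i\bigr)^2/|A\cap S|$, the sketch inequality lets me trade $\sum_i x_i(1-x_i)$ for $a(S)-\frac Wk-\xi$ with $\xi:=\sum_{i\in A\cap S}x_i$, so that the bound becomes the convex function $a(S)-\frac Wk-\xi+\xi^2/a(S)$ of $\xi$, whose minimum (attained at $\xi=a(S)/2$) is $\frac34 a(S)-\frac Wk$. Then I would run a short optimization over the free parameter $\xi$: when $\xi$ is large, bound (i) $\ge\xi+a'(\bar S)$ already beats the target (and when $\xi>a(S)-W/k$ the sketch inequality is vacuous but (i) gives $>W-W/k$ directly), and when $\xi$ is small the convex quadratic bound dominates; the remaining intermediate range is handled by also bringing in the "$|A'|+\sum_{i\notin A'}x_i$'' bound and a corresponding Cauchy–Schwarz estimate on the coordinates of $A'\cap\bar S$, so that the $a'(\bar S)$ contribution is routed through the allocation-based bounds (it cannot be recovered from the sketch, which knows only $v_1$).

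I expect the last paragraph — the final combination — to be the main obstacle: naive linear combinations of the averaged allocation bounds and the sketch inequality are stuck at $1/2$, so getting to $3/4$ genuinely needs the convexity/second-moment gain on Alice's side together with a careful case split that keeps Bob's $a'(\bar S)$ term intact and simultaneously verifies that the accumulated error term is only $1/k$ (not a larger multiple of $1/k$). Everything else — the reduction via Claim~\ref{claim:binaryadditive}, the per-clause allocation bounds, and the application of Lemma~\ref{lem:bsketch} — is routine.
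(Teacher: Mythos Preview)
Your proposal has the right ingredients --- Claim~\ref{claim:binaryadditive}, the averaged allocation bound (ii), Lemma~\ref{lem:bsketch}, and a second-moment step --- but you are making the final combination much harder than it is. The paper's proof is a single chain with no case analysis and no bound (i). Take $A:=\{i:a(\{i\})=1\}\cap S$ and $A':=\{i:a'(\{i\})=1\}\cap\bar S$, so that $A\cap A'=\emptyset$, $|A|=a(S)$, $|A'|=a'(\bar S)$. From bound (ii) one has $E\ge|A'|+\sum_{i\notin A'}x_i$. The sketch lemma rearranges (using disjointness of $A,A'$ and dropping the $x_i^2$ terms outside $A\cup A'$) to
\[
\sum_{i\notin A'}x_i \;\ge\; |A|-\frac{v_1([m])}{k}+\sum_{i\in A\cup A'}(x_i^2-x_i),
\]
and then pointwise $x_i^2-x_i=(x_i-\tfrac12)^2-\tfrac14\ge-\tfrac14$ on $A\cup A'$ gives $E\ge|A|+|A'|-\tfrac{|A|+|A'|}{4}-\tfrac{v_1([m])}{k}\ge(\tfrac34-\tfrac1k)W$.

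Your ``intermediate range'' branch is in fact this entire argument. Once you bring in bound (ii) and apply Cauchy--Schwarz on \emph{both} $A\cap S$ and $A'\cap\bar S$ (they are disjoint, so there is no double counting of the $x_i^2$ terms), you obtain $E\ge W-\tfrac{W}{k}+\bigl(-\xi+\xi^2/|A|\bigr)+\bigl(-\eta+\eta^2/|A'|\bigr)$, and each parenthesis is at least $-|A|/4$ and $-|A'|/4$ respectively, for \emph{every} value of $\xi,\eta$. So bound (i) and the case split on $\xi$ are superfluous; your self-identified ``main obstacle'' dissolves once you treat $\xi$ and $\eta$ symmetrically from the start. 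Also, $\xi$ is not a free parameter --- it is fixed by the instance --- so what you need (and what the symmetric argument delivers) is a bound valid for all $\xi,\eta$, not a favorable choice of one.
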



\begin{proof}
Let $S$ and $T = [m] \backslash S$ be the allocation that achieves the optimal social welfare between $v_1$ and $v_2$ (if there are multiple such allocations, just pick an arbitrary one). Let $a$ be some clause of $v_1$ such that $v_1(S) = a(S)$ and $a'$ be some clause of $v_2$ such that $v_2(T) = a'(T)$. Let $A = \{i | a(\{i\} ) = 1\} \cap S$ and  $A' = \{i | a'(\{i\} ) = 1\} \cap T$. We have $A \cap A' =\emptyset$. The optimal social welfare is 
\[
\SW(v_1,v_2) = v_1(S) + v_2(T) = a(S) + a'(T) =  |A| + |A'|.
\]

Let $x_i = \frac{b_1(\{i\}) + \cdots +b_k(\{i\})}{k}$. By Lemma \ref{lem:bsketch}, we have
\[
\sum_{i=1}^m (x_i -x_i^2) +   \sum_{i \in A} x_i \geq |A|-\frac{  v_1([m])}{k}.
\]
Therefore, 
\begin{eqnarray*}
\sum_{i \not \in A'} x_i &\geq& -\sum_{i \in A\cup A'} x_i + \sum_{i=1}^m x_i^2 +|A| -\frac{v_1([m])}{k} \\
&\geq& -\sum_{i \in A\cup A'} x_i + \sum_{i \in A\cup A'} x_i^2 +|A| -\frac{v_1([m])}{k}\\
&=& \sum_{i \in A\cup A'} \left((x_i-\frac{1}{2})^2 - \frac{1}{4}\right)+|A| -\frac{v_1([m])}{k}\\
&\geq& \sum_{i \in A\cup A'} \left( - \frac{1}{4}\right)+|A| -\frac{v_1([m])}{k}\\
&=& -\frac{|A|+|A'|}{4} + |A| -\frac{v_1([m])}{k}.
\end{eqnarray*}

Define $B_j = \{i | b_j(\{i\} ) = 1\}$ and $\overline{B_j} = [m]  \backslash B_j$. By Claim \ref{claim:binaryadditive}, the expected social welfare Protocol \ref{pro:bi2} gets is 
\begin{eqnarray*}
&& \frac{1}{k} \cdot \sum_{j=1}^k \SW(b_j, v_2) \geq \frac{1}{k} \cdot \sum_{j=1}^k \left( b_j(B_j) + a'(\overline{B_j}) \right) \geq \frac{1}{k} \cdot \sum_{j=1}^k \left(  |B_j| + |\overline{B_j} \cap A'| \right) \\
&=& \frac{1}{k} \cdot \sum_{j=1}^k \left(  |B_j| - |A' \cap B_j|+ |A'|  \right) = |A'| +  \frac{1}{k} \cdot \sum_{j=1}^k \sum_{i \not\in A'} b_j(\{i\}) = |A'| + \sum_{ i \not\in A'} x_i \\
&\geq&  |A'| -\frac{|A|+|A'|}{4} + |A| -\frac{v_1([m])}{k} = \frac{3}{4} \cdot (|A| + |A'|) -\frac{v_1([m])}{k}.
\end{eqnarray*}
As $v_1([m]) \leq \SW(v_1,v_2) =  |A| + |A'|$, Protocol \ref{pro:bi2} gives a $(3/4 - 1/k)$ approximation in expectation to the problem.
\end{proof}

\subsection{Sequential protocol for multiple BXOS parties}
\label{sec:bxosm}
\begin{algorithm}[ht]
	\floatname{algorithm}{Protocol}
        \caption{Sequential protocol for multi-party combinatorial auctions with binary XOS valuations}
       \label{pro:bin}
    \begin{algorithmic}[1]
    	\FOR {$l = 1,...,n-1$}
	\STATE The $l$-th player computes the $(k,1/2)$-sketch $(b^l_1,...,b^l_k)$ of her valuation $v_l$ for items that are left. Then she picks $j$ uniformly randomly from $\{1,...,k\}$ and broadcasts $b^l_j$.
	\STATE For each item $i$ left, the auctioneer allocates it to the $l$-th player if $b^l_j(\{i\}) =1$.
       	\ENDFOR
	\STATE Allocate the left items to the $n$-th player.
       \end{algorithmic}

\end{algorithm}
\begin{theorem}
\label{thm:simbxosm}
Protocol \ref{pro:bin} gives a $(1/2-1/k)$-approximation in expectation to the $n$-party BXOS allocation problem. 
\end{theorem}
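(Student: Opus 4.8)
The plan is to mimic the two-party analysis of Theorem~\ref{thm:simbxos}, applied inductively to the ``residual'' auction after the first player has been allocated. Fix an optimal allocation $S_1,\ldots,S_n$ for $\SW(v_1,\ldots,v_n)$, and for each $l$ let $a^l$ be the clause of $v_l$ realizing $v_l(S_l)=a^l(S_l)$, so that $\SW(v_1,\ldots,v_n)=\sum_l |A_l|$ where $A_l=\{i: a^l(\{i\})=1\}\cap S_l$; these sets are disjoint. First I would isolate what happens in the very first round: by Claim~\ref{claim:binaryadditive} (applied with $b^1_j$ in the role of the binary additive valuation and with the binary XOS valuation being the ``welfare-to-go'' function of players $2,\ldots,n$ on the remaining items, which is itself binary XOS since a max over binary XOS ``welfare'' functions of disjoint-item subproblems is again binary XOS), the allocation to player $1$ via clause $b^1_j$ guarantees, conditioned on $j$, welfare at least $\SW(b^1_j, v_2,\ldots,v_n|_{\overline{B^1_j}})$, where $\overline{B^1_j}$ is the set of items not taken by player $1$.

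The key step is then a lower bound on $\frac1k\sum_{j=1}^k |B^1_j| + \E_j[\text{(recursive welfare among players }2..n\text{ on }\overline{B^1_j})]$. I would argue as follows. The term $\sum_{i\notin A_1} x_i$ (with $x_i = \frac1k\sum_j b^1_j(\{i\})$) is, as in the proof of Theorem~\ref{thm:simbxos}, at least $-\tfrac{|A_1|+|A_2|+\cdots}{4}+|A_1|-\tfrac{v_1([m])}{k}$ — but actually for the $1/2$ bound we only need the cruder consequence of Lemma~\ref{lem:bsketch} that $\frac1k\sum_j |B^1_j| \ge |A_1| - \tfrac{v_1([m])}{k}$, i.e. on average the first player's reported clause covers at least $|A_1|$ items (take $A=A_1$, drop the nonnegative quadratic terms). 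Given this, the items of $A_2\cup\cdots\cup A_n$ that player $1$ fails to grab still form a feasible reserve for the recursion: conditioned on the random clause $b^1_j$, the optimal welfare of players $2,\ldots,n$ on $\overline{B^1_j}$ is at least $\sum_{l\ge 2}|A_l \setminus B^1_j|$. Summing player $1$'s guaranteed $|A_1|$-on-average with the recursive guarantee, the overlap terms $|A_l\cap B^1_j|$ cancel against the ``extra'' coverage player $1$ gets beyond $|A_1|$, leaving at least $|A_1| + \sum_{l\ge 2}|A_l| - (\text{something that telescopes})$. Unrolling the induction across all $n-1$ rounds, each round contributes an additive loss of at most $\tfrac{v_l([m])}{k}\le \tfrac{\SW}{k}$... which is too lossy; instead I would set up the induction to prove directly that the expected welfare is at least $\tfrac12\SW - \tfrac{1}{k}\cdot(\text{total value available}) \ge (\tfrac12 - \tfrac1k)\SW$, by carrying the ``$1/2$'' through the recursion rather than re-incurring it each round.

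Concretely, the cleanest route is to prove by induction on $n$ the statement: \emph{Protocol~\ref{pro:bin} on $n$ players achieves expected welfare at least $\tfrac12 \SW(v_1,\ldots,v_n) - \tfrac{1}{k}v_1([m])$}, where $v_1$ is the first player to move. The base case $n=1$ is trivial (give everything to player $1$; here $\SW = v_1([m])$ and $\tfrac12 v_1([m]) \ge \tfrac12\SW - \tfrac1k v_1([m])$ with room to spare). For the inductive step, condition on player $1$'s random clause $b^1_j=:b$; player $1$ gets $|B|$, and by the inductive hypothesis the continuation on item set $\overline{B}$ earns at least $\tfrac12 \SW(v_2|_{\overline B},\ldots,v_n|_{\overline B}) - \tfrac1k v_2(\overline B) \ge \tfrac12\big(\SW(v_1,\ldots,v_n) - |B| - \text{(loss from removing }B\text{)}\big) - \tfrac1k v_2([m])$, using that deleting the items $B$ from the optimal allocation costs at most $|B|$ in welfare (each deleted item lies in at most one $S_l$ and lowers $a^l$ by at most $1$). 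Taking the expectation over $j$ and invoking $\frac1k\sum_j|B^1_j| \ge v_1([m]) - \tfrac1k v_1([m])$ from Lemma~\ref{lem:bsketch}... hmm, one must be careful that the $-\tfrac1k v_2([m])$ terms don't accumulate; the fix is that at the moment player $2$ moves, $v_2$ is restricted to $\overline B$, and one absorbs all these lower-order terms into a single $-\tfrac1k\sum_l v_l([m])$ and bounds that by $-\tfrac{n}{k}\SW$ only if $n$ is constant — which is \emph{not} acceptable for the claimed bound. The main obstacle, and the part I expect to require the most care, is therefore exactly this accounting: showing that the per-round $O(1/k)$ sketching errors telescope into a single $O(1/k)\cdot\SW$ term rather than $O(n/k)\cdot\SW$. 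I believe this works because the ``leftover value'' that survives to round $l$ is at most $\SW$ minus what has already been collected, so the errors are charged against a geometrically (or at least monotonically) shrinking quantity; writing $R_l$ for the expected remaining optimal welfare at the start of round $l$, one gets a recursion of the form $\text{(welfare collected)} \ge \tfrac12(R_l - R_{l+1}) \cdot(\ldots)$ plus error $\tfrac1k R_l$, and $\sum_l R_l$ need not be bounded by $\SW$ — so the honest conclusion may be that the $1/k$ in the theorem statement should really be read with the constant depending on how the induction is organized, and the right move is to reprove Lemma~\ref{lem:bsketch}'s consequence in a scale-free form ($\frac1k\sum_j|B^1_j| \ge (1-\tfrac1k)v_1(\text{remaining})$, i.e. a \emph{multiplicative} $(1-1/k)$ loss) so that the losses compose multiplicatively to $(1-1/k)^{n-1} \ge 1 - (n-1)/k$, and then fold the $(n-1)$ into the statement or note $k$ is taken large relative to $n$.
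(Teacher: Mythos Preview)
Your proposal has a real gap, and it is not only the $1/k$ bookkeeping you flag at the end.

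First, the ``cruder consequence'' of Lemma~\ref{lem:bsketch} you invoke, namely $\frac1k\sum_j |B^1_j|\ge |A_1|-v_1([m])/k$, is not a consequence of that lemma and is in fact false. Take $v_1$ with two disjoint clauses of sizes $1$ and $3$; the $(k,1/2)$-sketch (for large $k$) puts weight $1/4$ on the small clause and $3/4$ on the large one, so $\sum_i x_i=2.5<3=|A_1|$ when $A_1$ is the large clause. The inequality in Lemma~\ref{lem:bsketch} has the quadratic term $-x_i^2$ with the \emph{wrong sign} for the move you describe; you cannot ``drop nonnegative quadratic terms'' to get a linear bound on $\sum_i x_i$. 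Relatedly, your residual-welfare bound is off: after player~$1$ takes $B$, the $(n-1)$-player problem on $\overline{B}$ has optimum at least $|A'\setminus B|$ with $A'=A_2\cup\cdots\cup A_n$, not $\SW-|B|$ (player~$1$'s share $|A_1|$ is gone entirely, not just the part in $B$). With the correct residual bound your induction needs $\sum_i x_i-\tfrac12\sum_{i\in A'}x_i\ge\tfrac12|A_1|$, which cannot be derived without the quadratic term.

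The paper's proof keeps the full quadratic inequality from Lemma~\ref{lem:bsketch} and uses it multiplicatively inside the induction. Writing $C(n)-1/k$ for the guarantee on $n$ players, one lower-bounds the expected welfare by $\sum_i x_i + (C(n{-}1)-\tfrac1k)\bigl(|A'|-\sum_{i\in A'}x_i\bigr)$, then \emph{adds} $C(n{-}1)$ times the (nonpositive) defect in Lemma~\ref{lem:bsketch}. After regrouping, the $x_i$-terms for $i\in A_1\cup A'$ become the quadratic $(1-2C)x_i+Cx_i^2$ with $C=C(n{-}1)$, whose minimum is $-\frac{(2C-1)^2}{4C}$; completing the square yields $C(n)\ge 1-\tfrac{1}{4C(n-1)}$, whose fixed point is $1/2$ (and $C(2)\ge 3/4$ gives the base case). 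The quadratic is doing all the work here: it is precisely what turns ``player~$1$ might grab items in $A'$'' into a bounded penalty. As for the $1/k$ term, in this setup it does \emph{not} accumulate: the $-\tfrac1k(|A'|-\sum_{i\in A'}x_i)$ from the inductive hypothesis and the $-\tfrac{1}{k}$ error from Lemma~\ref{lem:bsketch} combine into a single term that is bounded by $\SW/k$, so the loss stays $-\SW/k$ at every level. Your proposed multiplicative $(1-1/k)^{n-1}$ fix is unnecessary once the induction is organized this way.
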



\begin{proof}
Let $C(n)-1/k$ be the approximation ratio of Protocol \ref{pro:bin} on $n$ players. We prove the theorem by induction on $n$. By Theorem \ref{pro:bi2}, $C(2) \geq 3/4$. For any $n \geq 3$, let $S_1, S_2, ...,S_n$ be the allocation that achieves the optimal social welfare. For $l \in [n]$, let $a_l$ be the clause of $v_l$ such that $a_l(S_l) = v_l(S_l)$ and $A_l = \{i:a_l(\{i\}) = 1\} \cap S_l$. Therefore $\SW(v_1,...,v_n) = |A_1| + \cdots + |A_n|$. 

Let $x_i = \frac{b^1_1(\{i\}) + \cdots +b^1_k(\{i\})}{k}$. By Lemma \ref{lem:bsketch}, we have
\[
\sum_{i=1}^m (x_i - x_i^2) +  \sum_{i \in A_1} x_i \geq |A_1|-\frac{|A_1| + \sum_{i=1}^m x_i}{2k}.
\]
Notice this not exactly from the statement of Lemma \ref{lem:bsketch}, but it is explicit from the proof of Lemma \ref{lem:bsketch}.

 Let $B_j = \{i|b^1_j(\{i\})=1\}$. In Protocol \ref{pro:bin}, player 1 will get welfare $\frac{1}{k} \cdot \sum_{j=1}^k |B_j| = \sum_{i=1}^m x_i$ in expectation. Let $A' = A_2 \cup \cdots \cup A_n$. After player 1 takes all the items in $B_j$, the other players can at least get welfare $|A' \backslash B_j|$ if we allocate items optimally. By induction, in Protocol \ref{pro:bin}, players 2 to $n$ will get welfare $(C(n-1) - 1/k) \cdot |A' \backslash B_j|$ conditioned on player 1 gets $B_j$. So in expectation, players 2 to $n$ will get welfare at least $\frac{1}{k} \cdot \sum_{j=1}^k (C(n-1) - 1/k) \cdot |A' \backslash B_j| = (C(n-1) - 1/k) \cdot(|A'| - \sum_{i \in A'} x_i)$. So in expectation, Protocol \ref{pro:bin} gets welfare at least
\begin{eqnarray*}
 &&\sum_{i=1}^m x_i + (C(n-1) - 1/k) \cdot(|A'| - \sum_{i \in A'} x_i) \\
 &\geq& \sum_{i=1}^m x_i + (C(n-1) - 1/k) \cdot(|A'| - \sum_{i \in A'} x_i) \\
 &&+ C(n-1) \left( |A_1|-\frac{|A_1| + \sum_{i=1}^m x_i}{2k} - \sum_{i=1}^m (x_i - x_i^2) -  \sum_{i \in A_1} x_i \right)\\
&=& C(n-1) \cdot (|A'| + |A_1|) + \sum_{i \in A_1 \cup A'} \left( \left(1-2C(n-1)\right) x_i + C(n-1)x_i^2\right)\\
&& +  \sum_{i \not \in A_1 \cup A'} \left((1-C(n-1)) x_i + C(n-1) x_i^2\right)  - \frac{1}{k} \left(|A'| - \sum_{i \in A'} x_i + C(n-1) \cdot \frac{|A_1| + \sum_{i=1}^m x_i}{2} \right) \\
&\geq&C(n-1) \cdot (|A'| + |A_1|)  +  \sum_{i=1}^m C(n-1) \left(\left(x_i - \frac{2C(n-1)-1}{2C(n-1)}\right)^2 - \frac{(2C(n-1)-1)^2}{4C(n-1)} \right) \\
&& + 0 - \frac{ (|A_1| + |A'|) + (\sum_{i \not \in A'} x_i + |A'|)}{2k}\\
&\geq& \left(C(n-1) -\frac{(2C(n-1)-1)^2}{4C(n-1)} - \frac{1}{k}\right) \cdot \SW(v_1,...,v_n)\\
&=& \left(1 - \frac{1}{4C(n-1)} -\frac{1}{k}\right) \cdot \SW(v_1,...,v_n).
\end{eqnarray*}
So we have
\[
C(n) \geq 1 - \frac{1}{4C(n-1)} = \frac{1}{2} + \frac{C(n-1) - \frac{1}{2}}{2C(n-1)}.
\]
This means $C(n-1) \geq 1/2$ would imply $C(n) \geq 1/2$. As $C(2) \geq 3/4$, we know that $C(n) \geq 1/2$ for all $n \geq 2$. 
\end{proof}



\section{Protocols for XOS parties}
\label{sec:xosapp}

\subsection{Simultaneous protocol for two XOS parties (implies part of Theorem~\ref{thm:mainlower})}\label{sec:xos}
\begin{algorithm}[ht]
	\floatname{algorithm}{Protocol}
        \caption{Deterministic, simultaneous protocol for 2-party combinatorial auctions with  XOS valuations}
        \label{pro:xos2}
    \begin{algorithmic}[1]
	\STATE Alice sends the $(k,1/3)$-sketch $(b^1_1,...,b^1_k)$ of her valuation $v_1$ and Bob sends the  $(k,1/3)$-sketch $(b^2_1,...,b^2_k)$ of his valuation $v_2$ simultaneously to the auctioneer.
	\STATE \textbf{For allocation problem:} Auctioneer finds $j,j'$ maximize $\SW(b^1_j, b^2_{j'})$ and allocates items using the allocation that achieves $\SW(b^1_j, b^2_{j'})$. 
	\STATE \textbf{For decision problem:} Let $X$ be the parameter in the decision problem. Auctioneer finds $j,j'$ maximize $\SW(b^1_j, b^2_{j'})$. If $\SW(b^1_j, b^2_{j'}) \geq (23/32-1/k)X$, say "yes" ($\SW(v_1,v_2) \geq X$). If $\SW(b^1_j, b^2_{j'}) < (23/32-1/k)X$,  say "no". 
             \end{algorithmic}
\end{algorithm}

\begin{theorem}
\label{thm:simxos}
Protocol \ref{pro:xos2} gives a $(23/32 -1/k)$-approximation to the 2-party XOS allocation problem and the 2-party XOS decision problem. 
\end{theorem}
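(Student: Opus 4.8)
I would follow the same structure as the warmup proofs (Theorems~\ref{thm:warmup1} and~\ref{thm:warmup2}), but now leveraging the full $(k,1/3)$-sketch via Lemma~\ref{lem:sketch} instead of just a $(2,1/2)$-sketch. Fix the optimal allocation $S,T=[m]\setminus S$ for $\SW(v_1,v_2)$, let $a$ be the clause of $v_1$ achieving $a(S)=v_1(S)$ and $a'$ the clause of $v_2$ achieving $a'(T)=v_2(T)$, so $\SW(v_1,v_2)=a(S)+a'(T)$. Write $(b^1_1,\dots,b^1_k)$ for Alice's sketch and $(b^2_1,\dots,b^2_k)$ for Bob's, and set $x_{i,u}=\frac{1}{k}\sum_j \1_{b^1_j(\{i\})\ge u}$ and the analogous $y_{i,u}$ for Bob. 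The goal is to lower-bound $\max_{j,j'}\SW(b^1_j,b^2_{j'})$ by $(23/32-1/k)\,\SW(v_1,v_2)$; since a $\max$ is at least any weighted average, I would exhibit a suitable nonnegative combination of terms $\SW(b^1_j,b^2_{j'})$ (with total weight $1$) whose value is provably at least $(23/32-1/k)(a(S)+a'(T))$. This immediately gives the allocation guarantee, and the decision guarantee follows by the same threshold argument as in Protocol~\ref{pro:warmup2}: if $\SW(v_1,v_2)\ge X$ the best pair certifies $\SW(b^1_j,b^2_{j'})\ge(23/32-1/k)X$ so we say ``yes'' correctly, and if we say ``yes'' then $\SW(v_1,v_2)\ge\SW(b^1_j,b^2_{j'})\ge(23/32-1/k)X$.

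\textbf{Key steps.} First, for a fixed clause $b^2_{j'}$ of Bob and the random clause $b^1_j$ of Alice, I would show (via Claim-type reasoning and the structure of $\SW$ of two additive functions) that $\E_j[\SW(b^1_j,b^2_{j'})]$ can be expressed in terms of the $x_{i,u}$'s and $b^2_{j'}$; concretely, for each item one keeps whichever of the two clause-values is larger, so $\SW(b^1_j,b^2_{j'})=\sum_i\max(b^1_j(\{i\}),b^2_{j'}(\{i\}))$ and averaging over $j$ gives an integral expression $\sum_i\int_0^\infty \max(x_{i,u},\1_{b^2_{j'}(\{i\})\ge u})\,du \ge \sum_i\int_0^\infty(x_{i,u}+(1-x_{i,u})\1_{b^2_{j'}(\{i\})\ge u})\,du$. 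Second, I would apply Lemma~\ref{lem:sketch} to Alice's sketch with the pair $(a,a')$ where I take the ``sub-clause'' $a'$ of $a$ restricted to $S$ (i.e.\ $a'(\{i\})=a(\{i\})$ for $i\in S$ and $0$ otherwise), yielding a bound of the form $\sum_i\int_0^\infty(x_{i,u}-\tfrac{2}{3}x_{i,u}^2)\,du+\tfrac{2}{3}\sum_{i\in S}\int_0^{a(\{i\})}x_{i,u}\,du\ge a(S)-\tfrac{2v_1([m])}{3k}$, and symmetrically for Bob with $(a',\text{restriction of }a'\text{ to }T)$. Third — and this is the crux — I would choose the weights on the pairs $\SW(b^1_j,b^2_{j'})$ so that when everything is summed, the cross terms $x_{i,u}\cdot\1_{b^2_{j'}(\{i\})\ge u}$ and $\sum_{i\in S}\int_0^{a(\{i\})}x_{i,u}$, etc., line up with the linear and quadratic terms coming out of Lemma~\ref{lem:sketch}, and then bound the residual quadratic form $\sum_i\int_0^\infty(c_1 x_{i,u}+c_2 x_{i,u}^2+\dots)\,du$ from below by completing the square (as in the $(x_i-\tfrac12)^2\ge0$ step of Theorem~\ref{thm:warmup1}, but here with the minimizing point $\tfrac{2C-1}{2C}$-type value). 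The constant $23/32$ should fall out of optimizing this square with $\alpha=1/3$: the relevant optimization is essentially $\max$ over the weighting of $\big(1-\tfrac{(2/3\cdot\text{stuff})^2}{\dots}\big)$, and $23/32 = 3/4 - 1/32$ suggests the loss $1/32$ is exactly the slack in the completed square given the non-binary clauses.

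\textbf{Main obstacle.} The hard part is Step three: finding the right family of pairs $(j,j')$ to average over and the right weights, so that the linear/quadratic terms produced by the two applications of Lemma~\ref{lem:sketch} (one per player) combine with the $\max$-lower-bound into a single completed square that yields exactly $23/32-1/k$. In the warmup this was the choice of the five pairs $\SW(b_1,b_4),\SW(b_1,b_5),\SW(b_1,b_6),\SW(b_2,b_4),\SW(b_3,b_4)$; here with general $k$ one presumably averages $\SW(b^1_j,b^2_{j'})$ over $j\in[k]$ for a ``best'' fixed $j'$ (and symmetrically), and the bookkeeping of which integrals $\int_0^{a(\{i\})}$ versus $\int_0^\infty$ appear where is delicate because of the asymmetric roles of $S$ and $T$. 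I expect one also needs the observation (used in the warmup for the binary case, Claim~\ref{claim:binaryadditive}) that the optimal split of two additive functions is the pointwise-max, together with monotonicity of the sketch objective, to handle the replacement arguments cleanly. The constant-chasing at the end — verifying that completing the square with $\alpha=1/3$ gives $23/32$ and not something worse — is routine but must be done carefully, and Lemma~\ref{lem:xoslb} (referenced in the excerpt) presumably shows $23/32$ is the best this method can give, which is a useful sanity check that I would not need to reprove.
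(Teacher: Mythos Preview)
Your high-level plan --- apply Lemma~\ref{lem:sketch} to each player's $(k,1/3)$-sketch, lower-bound some convex combination of $\SW(b^1_j,b^2_{j'})$, then complete the square --- is exactly the paper's approach. But two concrete pieces are missing, and your ``main obstacle'' paragraph guesses them wrong.

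\textbf{The averaging.} The paper does not look for clever weights or a ``best fixed $j'$''; it simply takes the \emph{uniform} average over all $k^2$ pairs. Using $\max(a,b)=\int_0^\infty(\1_{a\ge u}+\1_{b\ge u}-\1_{a\ge u}\1_{b\ge u})\,du$ and averaging over both $j$ and $j'$ gives the symmetric bilinear form
\[
\frac{1}{k^2}\sum_{j,j'}\SW(b^1_j,b^2_{j'})=\sum_i\int_0^\infty\bigl(x_{i,u}+y_{i,u}-x_{i,u}y_{i,u}\bigr)\,du.
\]
(Your intermediate expression $\max(x_{i,u},\1_{\cdots})$ after averaging only over $j$ is not correct --- expectation and $\max$ do not commute that way --- though the final $x_{i,u}+(1-x_{i,u})\1_{\cdots}$ you land on is right, and averaging it over $j'$ gives exactly the line above.) One then writes this as $\tfrac{3}{4}\sum\int(x_{i,u}+y_{i,u})+\sum\int(\tfrac14(x_{i,u}+y_{i,u})-x_{i,u}y_{i,u})$ and applies Lemma~\ref{lem:sketch} (scaled by $3/4$) once per player to the first piece.

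\textbf{The completed square.} The loss $1/32$ does not come from a one-variable square like $(x_i-\tfrac12)^2$ or $\bigl(x_i-\tfrac{2C-1}{2C}\bigr)^2$ as in the binary/multi-party proofs. After substituting in Lemma~\ref{lem:sketch}, the residual for $i\in S$, $u\le a'_1(\{i\})$ is $\tfrac12 x_{i,u}^2+\tfrac12 y_{i,u}^2+\tfrac14(x_{i,u}+y_{i,u})-\tfrac12 x_{i,u}-x_{i,u}y_{i,u}$, which is $\tfrac12(x_{i,u}-y_{i,u}-\tfrac14)^2-\tfrac1{32}$, and symmetrically $\tfrac12(y_{i,u}-x_{i,u}-\tfrac14)^2-\tfrac1{32}$ for $i\in T$. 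The cross term $-x_{i,u}y_{i,u}$ from the double average is what makes this two-variable square work, and the worst case is $x-y=\pm\tfrac14$, yielding $23/32=3/4-1/32$. Your single-variable-square ansatz would not close.
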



\begin{proof}
Let allocation ($S$, $T = [m] \backslash S$) achieves $\SW(v_1,v_2)$. Let $a_1$ be the clause of $v_1$ such that $a_1(S) = v_1(S)$ and $a_2$ be the clause of $v_2$ such that $a_2(T) = v_2(T)$. Let $a'_1, a'_2$ be two additive valuations such that $a'_1(R) = a_1(S \cap R)$ and $a'_2(R) = a_2(T \cap R)$ for all $R \subseteq [m]$. We have
\[
\SW(v_1,v_2) = v_1(S) + v_2(T) = a_1(S) + a_2(T) = a'_1(S) + a'_2(T). 
\]
Let $x_{i,u} = \frac{\sum_{j=1}^k \1_{b^1_j(\{i\}) \geq u}}{k}$ and $y_{i,u} = \frac{\sum_{j=1}^k \1_{b^2_{j}(\{i\}) \geq u}}{k}$. By Lemma \ref{lem:sketch}, we have
\[
\sum_{i=1}^m  \left(  \int_0^{+\infty}(x_{i,u} - \frac{2}{3} \cdot x_{i,u}^2)du+ \frac{2}{3} \cdot \int_0^{a'_1(\{i\})}x_{i,u} du \right)  \geq a'_1(S)-\frac{\frac{2}{3} \cdot v_1([m])}{k}
\]
and
\[
\sum_{i=1}^m  \left(  \int_0^{+\infty}(y_{i,u} - \frac{2}{3} \cdot  y_{i,u}^2)du+ \frac{2}{3} \cdot \int_0^{a'_2(\{i\})}y_{i,u} du \right)  \geq a'_2(T)-\frac{\frac{2}{3} \cdot v_2([m])}{k}.
\]
Therefore,
\begin{eqnarray*}
&& \frac{3}{4} \cdot \sum_{i=1}^m  \left(  \int_0^{+\infty}(x_{i,u} +y_{i,u}) du \right) \\
&\geq& \frac{3}{4} \left(a'_1(S) + a'_2(T)\right) + \frac{1}{2} \cdot \sum_{i=1}^m  \left(  \int_0^{+\infty}(x^2_{i,u} +y^2_{i,u}) du \right) \\
&& -\frac{1}{2} \cdot  \sum_{i \in S} \left(\int_0^{a'_1(\{i\})}x_{i,u} du\right) - \frac{1}{2} \cdot  \sum_{i \in T} \left(\int_0^{a'_2(\{i\})}y_{i,u} du\right)- \frac{v_1([m]) +  v_2([m])}{2k}.
\end{eqnarray*}
Then we have
\begin{eqnarray*}
&& \frac{1}{k^2} \cdot \sum_{j=1}^k\sum_{j'=1}^k \SW(b^1_j, b^2_{j'}) \\
&=& \frac{1}{k^2} \cdot \sum_{j=1}^k\sum_{j'=1}^k \sum_{i=1}^m \int_0^{+\infty}\left( \1_{b^1_j(\{i\})} +\1_{b^2_{j'}(\{i\})} - \1_{b^1_j(\{i\})} \cdot\1_{b^2_{j'}(\{i\})} \right) du  \\
&=&  \sum_{i=1}^m \int_0^{+\infty}\left( x_{i,u} + y_{i,u} - x_{i,u} \cdot y_{i,u} \right) du  \\
&=&  \frac{3}{4} \cdot \sum_{i=1}^m  \left(  \int_0^{+\infty}(x_{i,u} +y_{i,u}) du \right) +\sum_{i=1}^m \int_0^{+\infty}\left( \frac{1}{4}\cdot( x_{i,u} + y_{i,u}) - x_{i,u} \cdot y_{i,u} \right) du  \\
&\geq&  \frac{3}{4} \left(a'_1(S) + a'_2(T)\right) + \frac{1}{2} \cdot \sum_{i=1}^m  \left(  \int_0^{+\infty}(x^2_{i,u} +y^2_{i,u}) du \right) \\
&& -\frac{1}{2} \cdot  \sum_{i \in S} \left(\int_0^{a'_1(\{i\})}x_{i,u} du\right) - \frac{1}{2} \cdot  \sum_{i \in T} \left(\int_0^{a'_2(\{i\})}y_{i,u} du\right)- \frac{v_1([m]) +  v_2([m])}{2k}\\
&&+\sum_{i=1}^m \int_0^{+\infty}\left( \frac{1}{4}\cdot( x_{i,u} + y_{i,u}) - x_{i,u} \cdot y_{i,u} \right) du\\
&=& \frac{3}{4} \left(a'_1(S) + a'_2(T)\right)  - \frac{v_1([m]) +  v_2([m])}{2k} \\
&& +\sum_{i \in S} \left(\int_0^{a'_1(\{i\})} \left(\frac{1}{2}\left(x_{i,u} - y_{i,u} - \frac{1}{4}\right)^2 - \frac{1}{32}\right)du + \int_{a'_1(\{i\}}^{+\infty}\left(\frac{1}{2}(x_{i,u} - y_{i,u})^2 + \frac{1}{4}(x_{i,u} + y_{i,u}) \right)du\right) \\
&& +\sum_{i \in T} \left(\int_0^{a'_2(\{i\})} \left(\frac{1}{2}\left(y_{i,u} -x_{i,u} - \frac{1}{4}\right)^2 - \frac{1}{32}\right)du + \int_{a'_2(\{i\}}^{+\infty}\left(\frac{1}{2}(x_{i,u} - y_{i,u})^2 + \frac{1}{4}(x_{i,u} + y_{i,u}) \right)du\right) \\
&\geq& \frac{3}{4} \left(a'_1(S) + a'_2(T)\right)  - \frac{v_1([m]) +  v_2([m])}{2k} - \frac{1}{32}  \left(a'_1(S) + a'_2(T)\right) \\
&\geq& (\frac{23}{32} -\frac{1}{k}) \cdot \SW(v_1,v_2).
\end{eqnarray*}
So what Protocol \ref{pro:xos2} gets is at least  
\[
\max_{j,j' \in [k]} \SW(b^1_j, b^2_{j'}) \geq  \frac{1}{k^2} \cdot \sum_{j=1}^k\sum_{j'=1}^k \SW(b^1_j, b^2_{j'}) \geq (\frac{23}{32} -\frac{1}{k}) \cdot \SW(v_1,v_2).
\]
\end{proof}

While we don't know whether or not $23/32$ is in general tight for simultaneous protocols for XOS valuations, the analysis of Protocol~\ref{pro:xos2} (and protocols like it) is indeed tight:
\begin{lemma}
\label{lem:xoslb}
For any constant $\varepsilon > 0$, there exist binary XOS valuations $v_1$ and $v_2$ with sufficiently large $m$ such that simultaneous protocols in the following form won't get $\SW(b^1_j, b^2_{j'})$ larger than $(23/32 + 2\varepsilon)\SW(v_1,v_2)$:
\begin{enumerate}
\item Alice sends the $(k_1,\alpha_1)$-sketch $(b^1_1,...,b^1_{k_1})$ of her valuation $v_1$ and Bob sends the  $(k_2,\alpha_2)$-sketch $(b^2_1,...,b^2_{k_2})$ of his valuation $v_2$ simultaneously to the auctioneer. ($0\leq \alpha_1,\alpha_2 \leq 1/2$, $k_1,k_2$ can be any positive integer)
\item Auctioneer finds $j,j'$ maximize $\SW(b^1_j, b^2_{j'})$ and allocates items using the allocation that achieves $\SW(b^1_j, b^2_{j'})$. 
\end{enumerate}
\end{lemma}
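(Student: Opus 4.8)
The plan is to manufacture the extremal instance by reverse-engineering the proof of Theorem~\ref{thm:simxos}. There, after applying Lemma~\ref{lem:sketch} to $v_1$ and to $v_2$, completing squares, and passing to the average of $\SW(b^1_j,b^2_{j'})$ over $j,j'$, the \emph{only} slack thrown away is the sum of the non-negative terms $\tfrac12(x_{i,u}-y_{i,u}-\tfrac14)^2$ and $\tfrac12(y_{i,u}-x_{i,u}-\tfrac14)^2$ on the supports of $a'_1$ (inside $S$) and $a'_2$ (inside $T$), the terms $\tfrac12(x_{i,u}-y_{i,u})^2+\tfrac14(x_{i,u}+y_{i,u})$ off those supports, and an $O(1/k)$ error. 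So I want an instance for which the forced sketches satisfy $x_i-y_i=+\tfrac14$ on the ``Alice-optimal'' items, $y_i-x_i=+\tfrac14$ on the ``Bob-optimal'' items, $x_i=y_i=0$ elsewhere, and for which $\max_{j,j'}|b^1_j\cup b^2_{j'}|$ is essentially the average (all relevant unions are about the same size); keeping $v_i([m])=\Theta(\SW(v_1,v_2))$ makes the $O(1/k)$ term harmless (and it only helps an upper bound anyway).

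Concretely I would take $m$ large (in terms of $\varepsilon$ and $k_1,k_2$) and give each $v_i$ two kinds of clauses. First, a large family of \textbf{dispersed} clauses, all of a common size $s$ that is the \emph{maximum} clause size in $v_i$, laid out in a combinatorial design with a prescribed intersection pattern: small pairwise overlaps among the dispersed clauses of a single player, but heavy overlap between every dispersed clause of $v_1$ and every dispersed clause of $v_2$ (e.g.\ via a common core together with structured spread-parts, or via a random design analyzed through the Generalized Chernoff Bound), chosen so that every union of a dispersed clause of $v_1$ with a dispersed clause of $v_2$ has size at most $(\tfrac{23}{32}+\varepsilon)W$, where $W:=\SW(v_1,v_2)$. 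Second, a single pair of \textbf{witness} clauses $a_1\in v_1$, $a_2\in v_2$, each strictly smaller than $s$ (and containing the common core, if one is used), with $|a_1\cup a_2|=W$ realizing the optimum. The verification then has two parts. (i) For every $k\ge1$ and every $\alpha\in[0,\tfrac12]$, every $(k,\alpha)$-sketch of $v_i$ uses only dispersed clauses: for $\alpha=0$ because the $(k,0)$-sketch is any $k$ largest clauses, hence dispersed; for $\alpha>0$ because swapping any copy of the witness in a selection for a dispersed clause with a fresh spread-part strictly increases $\sum_i x_i$ (the witness is smaller) while changing $\sum_i x_i^2$ by only $O(s/k^2)$ (the witness already covers the core, so the core contribution to $\sum_i x_i^2$ is unchanged), so the $\sum_i x_i$-gain wins. (ii) Hence the auctioneer's value $\max_{j,j'}\SW(b^1_j,b^2_{j'})=\max_{j,j'}|b^1_j\cup b^2_{j'}|$ is a union of a dispersed clause of $v_1$ with a dispersed clause of $v_2$, which by construction is at most $(\tfrac{23}{32}+\varepsilon)W\le(\tfrac{23}{32}+2\varepsilon)\,\SW(v_1,v_2)$.

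The main obstacle is the design required by (ii): one must build a family of equal-size sets in which the $v_1$-sets and the $v_2$-sets always overlap heavily (to keep all cross-unions below $(\tfrac{23}{32}+\varepsilon)W$) while the family is still dispersed enough \emph{inside} each $v_i$ that the sketch is provably driven onto it rather than onto the witness, \emph{uniformly over all $k$ and all $\alpha\le\tfrac12$}, and simultaneously to tune the core size, the spread-part sizes, the witness sizes and $W$ so that $(a_1,a_2)$ is genuinely an optimal pair and the ratio $(\tfrac{23}{32}+\varepsilon)W/W$ really is the worst the auctioneer can do. A probabilistic construction (random dispersed clauses over a carefully chosen ground structure) together with a union bound over the exponentially many clauses, in the spirit of the lower-bound construction of Section~\ref{sec:lb}, is likely the cleanest way to get all of these intersection constraints to hold at once. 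The delicate boundary cases to re-check are $k=1$ (where the sketch is just the largest clause, so the witness must not be the largest) and $\alpha$ near $\tfrac12$ with $k$ large, which is exactly the regime in which a careless construction lets the witness slip into the sketch.
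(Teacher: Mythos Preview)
Your high-level plan is the paper's plan: a pair of ``witness'' clauses $a_0^1,a_0^2$ realizing $\SW(v_1,v_2)=m$, a family of random ``dispersed'' clauses whose Alice--Bob unions are all $\le(23/32+2\varepsilon)m$, and a swap argument showing any $(k,\alpha)$-sketch excludes the witness. The paper also uses a probabilistic construction plus Chernoff/union bound, exactly as you anticipate.

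There are two genuine gaps in your sketch, though. First, the exchange step: your claim that swapping a witness for a dispersed clause changes $\sum_i x_i^2$ by only $O(s/k^2)$ is wrong. The first-order change is $\sum_i 2x_i\delta_i$, which is the \emph{same} order $O(s/k)$ as the gain in $\sum_i x_i$, so the swap is not automatically profitable. Your fix via ``fresh spread-parts'' also breaks once $k$ exceeds the number of dispersed clauses, and the lemma must hold for \emph{all} $k$. The paper handles both issues at once: it uses only $t=128/\varepsilon$ dispersed clauses, picks by pigeonhole an $a_l^1$ appearing at most $k/t$ times among $b_2,\dots,b_k$, and then controls the cross terms via engineered pairwise intersections --- specifically $|a_0^1\cap a_j^1|\ge(5/16+7\varepsilon/16)m$ versus $|a_l^1\cap a_j^1|\le(5/16+3\varepsilon/4)m$ for $j\ne l$ --- so each cross term loses at most $5\varepsilon m/16$ while the size term gains at least $7\varepsilon m/16$.

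Second, you do not explain how to hit the ratio $23/32$. This is where the real parameter-setting lives: the paper partitions $[m]$ into four equal blocks and draws each dispersed clause with blockwise densities $(p_1,p_2,p_3,p_4)$ for Alice and the cyclic shift $(p_3,p_4,p_1,p_2)$ for Bob, governed by $\gamma=5/8+\sqrt{3}/8$. This single choice simultaneously makes each dispersed clause slightly larger than the witness (size $\approx(1/2+\varepsilon/2)m$), forces Alice--Bob dispersed intersections to $\approx 9m/32$ (hence unions $\approx 23m/32$), and keeps Alice--Alice dispersed intersections within $O(\varepsilon m)$ of the witness--dispersed intersections (both $\approx 5m/16$), which is precisely what the swap argument needs. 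A ``common core plus spread'' design is unlikely to satisfy all of these constraints at once, since a core large enough to force small cross-unions also inflates the witness and the witness--dispersed intersections.
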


\begin{proof}
Let's assume $\varepsilon < 1/8$. We can do this because the result for smaller $\varepsilon$ will imply the result for larger $\varepsilon$. Let $m$ be a multiple of 4 and we will specify how large $m$ should be later in the proof. Let parameter $\gamma = 5/8 + \sqrt{3} /8$. Let $A_1= \{1,...,m/4\}$, $A_2 = \{m/4+1,...,m/2\}$, $A_3 = \{m/2+1,...,3m/4\}$ and $A_4 = \{3m/4+1,...,m\}$. Let $t =128 / \varepsilon$. We construct $v_1$ and $v_2$ by the following procedure: 
\begin{enumerate}
\item Let $a_0^1$ be the binary additive valuation such that $a_0^1(\{i\}) = 1$ if $i \in A_1\cup A_2$ and $a_0^1(\{i\}) = 0$ if $i \in A_3\cup A_4$. Let $a_0^2$ be the binary additive valuation such that $a_0^2(\{i\}) = 0$ if $i \in A_1\cup A_2$ and $a_0^2(\{i\}) = 1$ if $i \in A_3\cup A_4$.
\item Let $a_1^1,...,a_t^1, a_1^2,...,a_t^2$ be $2t$ binary additive valuations. For $l = 1,...,t$, $j = 1,..4$ and $i \in A_j$,  set $a_l^1(\{i\})$ to be 1 with probability $p_j$ independently. For $l = 1,...,t$, $j = 1,..4$ and $i \in A_j$,  set $a_l^2(\{i\})$ to be 1 with probability $q_j$ independently. Here $p_1 = 5/4-\gamma + \varepsilon$, $p_2 = \gamma +\varepsilon$, $p_3 = 1 - \gamma$, $p_4 = \gamma -1/4$, $q_1 = 1-\gamma$, $q_2 = \gamma -1/4$, $q_3 = 5/4-\gamma +\varepsilon$, $q_4 = \gamma +\varepsilon$. 
\item Let $v_1$ be the binary XOS valuation with $t+1$ clauses $a_0^1,...,a_t^1$. Let $v_2$ be the binary XOS valuation with $t+1$ clauses $a_0^2,...,a_t^2$. 
\end{enumerate}
We first show that with positive probability, the following conditions are all satisfied:
\begin{enumerate}
\item $(1/2 + 7\varepsilon/16)m \leq a_j^1([m])\leq (1/2+\varepsilon)m$, $\forall j \in [t]$.
\item $(1/2 + 7\varepsilon/16)m \leq a_j^2([m]) \leq (1/2+\varepsilon)m$, $\forall j \in [t]$.
\item $(5/16 + 7\varepsilon/16)m \leq a_j^1(A_1 \cup A_2)\leq m/2$, $\forall j \in [t]$.
\item $(5/16 + 7\varepsilon/16)m \leq a_j^2(A_3 \cup A_4) \leq m/2$, $\forall j \in [t]$.
\item $\sum_{i=1}^m a_j^1(\{i\}) a_l^1(\{i\}) \leq (5/16 + 3\varepsilon/4) m$ ,$\forall j,l\in [t], j\neq l$.
\item  $\sum_{i=1}^m a_j^2(\{i\}) a_l^2(\{i\}) \leq (5/16 + 3\varepsilon/4) m$ ,$\forall j,l\in [t], j\neq l$. 
\item $\sum_{i=1}^m a_j^1(\{i\}) a_l^2(\{i\}) \geq 9m /32$, $\forall j,l\in [t]$. 
\end{enumerate}

The mean values of these terms are:
\begin{enumerate}
\item $\E[ a_j^1([m])]  =  \E[a_j^2([m])] = (1/2+\varepsilon/2)m$, $\forall j \in [t]$.
\item $\E[a_j^1(A_1 \cup A_2)]=\E[ a_j^2(A_3 \cup A_4)] = (5/16 + \varepsilon/2)m$, $\forall j \neq [t]$.
\item $\E[\sum_{i=1}^m a_j^1(\{i\}) a_l^1(\{i\}) ] = \E[\sum_{i=1}^m a_j^2(\{i\}) a_l^2(\{i\})] = (5/16 + 5\varepsilon/8 + \varepsilon^2/2) m\leq (5/16 + 11\varepsilon/16)m$, $\forall j,l\in [t], j\neq l$.
\item $\E[\sum_{i=1}^m a_j^1(\{i\}) a_l^2(\{i\})] = (9/32 +3\varepsilon / 8)m$, $\forall j,l\in [t]$. 
\end{enumerate}
By Chernoff bound, each of these conditions is satisfied with probability $1 - e^{-\Omega(\varepsilon^2m)}$. There are $2t^2 + 7t$ conditions. So by union bound, all of them are satisfied with probability $\geq 1 - (2t^2 + 7t) e^{-\Omega(\varepsilon^2m)}$. By picking $m$ large enough, this probability will be positive. So we can find $v_1$ and $v_2$ sampled from our procedure and satisfy all these conditions. 

Now we assume $v_1$ and $v_2$ satisfy all these conditions. We have $\SW(v_1,v_2) = \SW(a_0^1,a_0^2) = m$. On the other hand, $\forall j,l\in [t]$, 
\[
\SW(a_j^1, a_l^2) = \sum_{i=1}^m \left(-a_j^1(\{i\}) a_l^2(\{i\}) + a_j^1(\{i\})+a_l^2(\{i\}) \right) \leq 2\cdot (1/2+\varepsilon)m - 9m/32 \leq (23/32+2\varepsilon)m.
\]
So in order to prove the lemma, it suffices to show that both $a_0^1$ and $a_0^2$ cannot not appear in Alice and Bob's sketches for arbitrary positive integers $k_1,k_2$ and $0\leq \alpha_1,\alpha_2 \leq 1/2$. Wlog, we will only show $a_0^1$ cannot not appear in the  $(k,\alpha)$-sketch of $v_1$ for arbitrary positive integer $k$ and $0\leq \alpha \leq 1/2$. 

Let $(b_1,...,b_k)$ to be a $(k,\alpha)$-sketch of $v_1$. Let $f(b_1,...,b_k)$ be the term that this sketch is trying to maximize by choosing $b_1,...,b_k$ over $a_0^1,...,a_t^1$. Recall that
\begin{eqnarray*}
f(b_1,...,b_k) &=& \sum_{i=1}^m \left(\frac{\sum_{j=1}^k b_j(\{i\})}{k} - \alpha \left(\frac{\sum_{j=1}^k b_j(\{i\})}{k}\right)^2\right) \\
&=& \frac{1}{k} \sum_{i=1}^m \left( \sum_{j=1}^k (1-\alpha/k) b_j(\{i\}) -\alpha \sum_{j' \in [k], j'\neq j} b_j(\{i\})b_{j'}(\{i\})/ k \right).
\end{eqnarray*}
Suppose $a_0^1$ appears in the sketch. Wlog assume $b_1 = a_0^1$. Let $b_1' = a_l^1$ such that $|\{j| 2\leq j\leq k, b_j = a_l^1\}| \leq k/t$. We can always find such $l$ by averaging argument. We have
\begin{eqnarray*}
&&f(b_1',b_2,...,b_k) - f(b_1,...,b_k) \\
&=& \frac{1}{k}\sum_{i=1}^m \left(  (1-\alpha/k) (b_1'(\{i\})-b_1(\{i\})) + 2\alpha \sum_{2\leq j\leq k} b_j(\{i\})(b_1(\{i\})-b_1'(\{i\}))/ k \right)
\end{eqnarray*}
We know that 
\[
\sum_{i=1}^m b_1'(\{i\})-b_1(\{i\}) \geq (1/2 + 7\varepsilon/16)m - m/2 = 7\varepsilon m/16.
\]
We have to do a case analysis for term $\sum_{i=1}^m b_j(\{i\})(b_1(\{i\})-b_1'(\{i\}))$:
\begin{enumerate}
\item If $b_j = a_0^1$, then 
\[
\sum_{i=1}^m b_j(\{i\})(b_1(\{i\})-b_1'(\{i\})) \geq m/2 - m/2 = 0.
\]
\item If $b_j = a_l^1$, then 
\[
\sum_{i=1}^m b_j(\{i\})(b_1(\{i\})-b_1'(\{i\})) \geq -\sum_{i=1}^m a_l^1(\{i\}) = -(1/2+\varepsilon)m.
\]
\item If $b_j \neq a_l^1,a_0^1$, then
\[
\sum_{i=1}^m b_j(\{i\})(b_1(\{i\})-b_1'(\{i\})) \geq (5/16 + 7\varepsilon/16)m- (5/16 + 3\varepsilon/4) m = -5\varepsilon m /16.
\]
\end{enumerate}
If $k = 1$, 
\[
f(b_1') - f(b_1) = (1-\alpha) \sum_{i=1}^m b_1'(\{i\})-b_1(\{i\})  > 0.
\]
Now we assume $k >1$. We have
\begin{eqnarray*}
&&f(b_1',b_2,...,b_k) - f(b_1,...,b_k) \\
&\geq& \frac{m}{k} \left( (1-\alpha/k) (7\varepsilon/16) -2\alpha (5\varepsilon /16) -  2\alpha(1/2+\varepsilon)/t \right) \\
&\geq&  \frac{m}{k} \left( (3/4) (7\varepsilon/16) -(5\varepsilon /16) -  (1/2+\varepsilon)/t \right) \\
&\geq&  \frac{\varepsilon m}{k} (1 /64 - 1/ \varepsilon t) > 0.
\end{eqnarray*}
Now we get a contradiction. So $a_0^1$ cannot be in the sketch. 
\end{proof}

\subsection{Two-round protocol for two XOS parties} \label{sec:xos2r}

\begin{algorithm}
	\floatname{algorithm}{Protocol}
        \caption{Two-round, deterministic protocol for 2-party combinatorial auctions with XOS valuations}
        \label{pro:2rxos2}
    \begin{algorithmic}[1]
	\STATE Alice computes the $(k,1/2)$-sketch $(b_1,...,b_k)$ of her valuation $v_1$ and sends it to Bob. Bob find $j$ that maximize $SW(b_j, v_2)$, and sends $SW(b_j,v_2)$ together with the allocation that achieves $SW(b_j,v_2)$ to the auctioneer.
	\STATE \textbf{For allocation problem:} The auctioneer allocates items according to the allocation provided by Bob.
	\STATE \textbf{For decision problem:} Let $X$ be the parameter in the decision problem. If $SW(b_j, v_2) \geq (3/4-1/k)X$, say "yes" ($\SW(v_1,v_2) \geq X$). If $SW(b_j, v_2) < (3/4-1/k)X$, say "no". 
             \end{algorithmic}
\end{algorithm}

\begin{theorem}
\label{thm:2rxos2}
Protocol \ref{pro:2rxos2} gives a $(3/4-1/k)$-approximation to the 2-party XOS allocation problem and the 2-party XOS decision problem. 
\end{theorem}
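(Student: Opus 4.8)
The plan is to reduce everything to the single inequality $\max_{j\in[k]}\SW(b_j,v_2)\geq(3/4-1/k)\,\SW(v_1,v_2)$. Given this, the allocation claim is immediate, since the allocation Bob reports achieves welfare at least $\SW(b_j,v_2)$ (recall $b_j$ is a clause of $v_1$, so Alice's true value only helps), and the decision claim follows exactly as in the proofs of Theorems~\ref{thm:warmup2} and~\ref{thm:simxos}, using in addition the trivial bound $\SW(b_j,v_2)\leq\SW(v_1,v_2)$. To prove the inequality I would lower-bound the maximum by the average $\frac1k\sum_{j=1}^k\SW(b_j,v_2)$. Fix an optimal partition $(S,T)$ with $T=[m]\setminus S$, let $a_1$ be the clause of $v_1$ with $a_1(S)=v_1(S)$ and $a_2$ the clause of $v_2$ with $a_2(T)=v_2(T)$, and let $a_1',a_2'$ be the additive valuations with $a_1'(\{i\})=a_1(\{i\})$ for $i\in S$ and $0$ otherwise, and $a_2'(\{i\})=a_2(\{i\})$ for $i\in T$ and $0$ otherwise. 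Then $\SW(v_1,v_2)=a_1'([m])+a_2'([m])$ and the supports of $a_1',a_2'$ are disjoint.

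First I would establish a one-sided lower bound on each $\SW(b_j,v_2)$ that exploits the fact that in this two-round protocol Bob replies with his true valuation: for any $P\subseteq[m]$, giving $P$ to Alice and $[m]\setminus P$ to Bob yields welfare at least $b_j(P)+v_2([m]\setminus P)\geq b_j(P)+a_2([m]\setminus P)\geq b_j(P)-a_2'(P)+a_2'([m])$, using that $a_2$ is a clause of $v_2$, that $a_2'\leq a_2$ pointwise, and that $a_2'$ is additive. Optimizing over $P$ (easy, since $b_j$ and $a_2'$ are both additive) gives $\SW(b_j,v_2)\geq\sum_{i}(b_j(\{i\})-a_2'(\{i\}))^+ + a_2'([m])$. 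Writing the positive part as $\int_{a_2'(\{i\})}^{\infty}\1_{b_j(\{i\})\geq u}\,du$, averaging over $j$, and setting $x_{i,u}=\frac1k\sum_{j}\1_{b_j(\{i\})\geq u}$ as in Lemma~\ref{lem:sketch}, this becomes $\frac1k\sum_j\SW(b_j,v_2)\geq\sum_i\int_0^{\infty}x_{i,u}\,du-\sum_i\int_0^{a_2'(\{i\})}x_{i,u}\,du+a_2'([m])$.

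Next I would feed in the sketch guarantee. Applying Lemma~\ref{lem:sketch} with $\alpha=1/2$, $v=v_1$, $a=a_1$, $a'=a_1'$ bounds $\sum_i\int_0^{\infty}x_{i,u}\,du$ from below by $\sum_i\int_0^{\infty}x_{i,u}^2\,du-\sum_i\int_0^{a_1'(\{i\})}x_{i,u}\,du+a_1'([m])-v_1([m])/k$. Substituting, and using that $a_1',a_2'$ have disjoint supports (so for each $i$ at most one of $a_1'(\{i\}),a_2'(\{i\})$ is nonzero, and $\int_0^{a_1'(\{i\})}x_{i,u}\,du+\int_0^{a_2'(\{i\})}x_{i,u}\,du=\int_0^{c_i}x_{i,u}\,du$ where $c_i:=a_1'(\{i\})+a_2'(\{i\})$), I am left with the per-coordinate estimate $\int_0^{\infty}x_{i,u}^2\,du-\int_0^{c_i}x_{i,u}\,du\geq\int_0^{c_i}(x_{i,u}^2-x_{i,u})\,du=\int_0^{c_i}\bigl((x_{i,u}-\tfrac12)^2-\tfrac14\bigr)\,du\geq-c_i/4$, where I discarded the tail $\int_{c_i}^{\infty}x_{i,u}^2\,du\geq0$. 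Summing over $i$ with $\sum_i c_i=\SW(v_1,v_2)$ and using $v_1([m])\leq\SW(v_1,v_2)$ then gives $\frac1k\sum_j\SW(b_j,v_2)\geq\tfrac34\SW(v_1,v_2)-\tfrac1k\SW(v_1,v_2)$, which completes the argument.

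The step needing the most care is the choice of the restricted additive valuations $a_1',a_2'$: they must simultaneously satisfy the hypothesis of Lemma~\ref{lem:sketch} on Alice's side, make the optimal split of $b_j$ against Bob cost only the additive term $a_2'(P)$, and have disjoint supports so that the two one-sided integral penalties fuse into a single $\int_0^{c_i}$ with $\sum_i c_i$ exactly $\SW(v_1,v_2)$ — which is precisely what makes the completing-the-square estimate add up to the ratio $3/4$. Everything else is routine; note in particular that, unlike in Protocol~\ref{pro:xos2}, there is no cross term between the two parties' sketches here because Bob answers with his true $v_2$, and this is what lets us take $\alpha=1/2$ and obtain $3/4$ rather than $23/32$ (and, being entirely pointwise in the $a_\cdot'$'s, the same computation specializes to the binary case).
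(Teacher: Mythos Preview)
Your proof is correct and follows essentially the same route as the paper's: fix an optimal split, restrict the optimal clauses to $a_1',a_2'$, lower-bound $\SW(b_j,v_2)$ by $\SW(b_j,a_2')=a_2'([m])+\sum_i\int_{a_2'(\{i\})}^{\infty}\1_{b_j(\{i\})\geq u}\,du$, average over $j$, apply Lemma~\ref{lem:sketch} with $\alpha=1/2$ on Alice's side, and complete the square on $\int_0^{c_i}(x_{i,u}^2-x_{i,u})\,du$. Your derivation of the bound on $\SW(b_j,v_2)$ via optimizing the cut $P$ is just an explicit unpacking of the paper's one-line step $\SW(b_j,v_2)\geq\SW(b_j,a_2')$; everything else matches line for line.
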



\begin{proof}
Let allocation ($S$, $T = [m] \backslash S$) achieves $\SW(v_1,v_2)$. Let $a_1$ be the clause of $v_1$ such that $a_1(S) = v_1(S)$ and $a_2$ be the clause of $v_2$ such that $a_2(T) = v_2(T)$. Let $a'_1, a'_2$ be two additive valuations such that $a'_1(R) = a_1(S \cap R)$ and $a'_2(R) = a_2(T \cap R)$ for all $R \subseteq [m]$. We have
\[
\SW(v_1,v_2) = v_1(S) + v_2(T) = a_1(S) + a_2(T) = a'_1(S) + a'_2(T). 
\]

Let $x_{i,u} = \frac{\sum_{j=1}^k \1_{b_j(\{i\}) \geq u}}{k}$. By Lemma \ref{lem:sketch}, we have
\[
\sum_{i=1}^m  \left(  \int_0^{+\infty}(x_{i,u} - x_{i,u}^2)du+  \int_0^{a'_1(\{i\})}x_{i,u} du \right)  \geq a'_1(S)-\frac{ v_1([m])}{k}.
\]
Then we have
\begin{eqnarray*}
&&\frac{1}{k} \sum_{j=1}^k \SW(b_j, v_2)\\
 &\geq& \frac{1}{k} \sum_{j=1}^k \SW(b_j,a'_2)  \\
&=& a'_2([m]) + \sum_{i=1}^m   \int_{a'_2(\{i\})}^{+\infty}x_{i,u}du \\
&=&  a'_2([m]) + \sum_{i=1}^m   \int_{0}^{+\infty}x_{i,u}du  -\sum_{i \in T}   \int_{0}^{a'_2(\{i\})}x_{i,u}du \\
&\geq & a'_2(T) + a'_1(S) +\sum_{i=1}^m   \int_0^{+\infty}x^2_{i,u} d_u- \sum_{i \in S}   \int_{0}^{a'_1(\{i\})}x_{i,u}du -\sum_{i \in T}   \int_{0}^{a'_2(\{i\})}x_{i,u}du -\frac{ v_1([m])}{k}\\
&=& \SW(v_1,v_2) +\sum_{i \in S}   \int_{0}^{a'_1(\{i\})}\left((x_{i,u}-\frac{1}{2})^2-\frac{1}{4}\right)du +  \sum_{i \in T}   \int_{0}^{a'_2(\{i\})}\left((x_{i,u}-\frac{1}{2})^2-\frac{1}{4}\right)du -\frac{ v_1([m])}{k} \\
&\geq&\SW(v_1,v_2) -\frac{1}{4} (a'_2(T) + a'_1(S)) - \SW(v_1,v_2)/k \\
&=& (\frac{3}{4} - \frac{1}{k}) \SW(v_1,v_2).
\end{eqnarray*}

\end{proof}



\section{Truthful mechanism for multiple binary XOS parties}
\label{sec:mechanismsapp}
In this section, we will show how to transform Protocol \ref{pro:bi2} and Protocol \ref{pro:bin} into truthful mechanisms. Since Protocol \ref{pro:bi2} is a special case of Protocol \ref{pro:bin}, we will only do the transformation for Protocol \ref{pro:bin}. 

\begin{algorithm}[ht]
	\floatname{algorithm}{Mechanism}
        \caption{Truthful mechanism for multi-party combinatorial auctions with binary XOS valuations}
       \label{mech:bin}
    \begin{algorithmic}[1]
    	\FOR {$l = 1,...,n-1$}
	\STATE The $l$-th player broadcasts $k$ binary additive valuations $(b^l_1,...,b^l_k)$. 
	\STATE The auctioneer picks $j$ uniformly randomly from $\{1,...,k\}$ and broadcasts $j$. For each item $i$ left, the auctioneer allocates it to the $l$-th player if $b^l_j(\{i\}) =1$ and charges the $l$-th player $\frac{\sum_{j' =1}^k b^l_{j'}(\{i\})}{2k}$.
       	\ENDFOR
	\STATE Allocate the left items to the $n$-th player.
       \end{algorithmic}
\end{algorithm}

\begin{theorem}
\label{thm:truthful}
In Mechanism \ref{mech:bin}, for any $l \in \{1,...,n-1\}$ and any history before the $l$-player, if $(b^l_1,...,b^l_k)$ maximizes her expected utility, then $(b^l_1,...,b^l_k)$ is a $(k, 1/2)$-sketch of $v_l$ for items left.
\end{theorem}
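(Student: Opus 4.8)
The plan is to show that player $l$'s expected utility, as a function of her reported bundles $(b^l_1,\ldots,b^l_k)$, is exactly $\tfrac{1}{k}$ times the sketch objective (up to an additive constant not depending on her report), so that maximizing utility is the same as maximizing the sketch objective. First I would fix $l$ and the history, so the set of remaining items — call it $M'$ — is determined, and player $l$'s true valuation restricted to $M'$ is some binary XOS $v_l$ with clauses $a_1,\ldots,a_t$. Observe that player $l$ is only ever allocated items in $M'$ on this round, and that whatever she does not take is passed on to later players; crucially, the mechanism's allocation and payments for player $l$ depend only on her own report $(b^l_1,\ldots,b^l_k)$ and the internal coin $j$, not on anyone else's future behavior. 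So from player $l$'s perspective her utility is a well-defined function of her report alone, and we may analyze it in isolation. Note also that whether or not she follows the protocol affects only her own utility on this round (later players' reports are unaffected by her report in the sense relevant to her own payoff), which is exactly what ``dominant strategy'' needs.

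The key computation: when the auctioneer draws index $j$, player $l$ receives exactly the items $i \in M'$ with $b^l_j(\{i\})=1$, so her realized value is $v_l(\{i : b^l_j(\{i\})=1\})$. Here I would invoke the binary-XOS structure: because every clause of $v_l$ is binary additive, $v_l$ of any set $B$ equals $\max_{s} |\,\{i\in B : a_s(\{i\})=1\}\,|$; but the relevant quantity turns out to telescope more simply. Actually the clean route is to directly write the expected utility as
\[
\E_j\!\left[\,v_l(B^l_j)\,\right] - \E_j\!\left[\sum_{i \in B^l_j} \frac{\sum_{j'=1}^k b^l_{j'}(\{i\})}{2k}\right]
= \frac{1}{k}\sum_{j=1}^k v_l(B^l_j) - \sum_{i \in M'} \frac{\big(\sum_{j'} b^l_{j'}(\{i\})\big)^2}{2k^2},
\]
where $B^l_j = \{i : b^l_j(\{i\})=1\}$ and the second equality uses that each $i$ contributes its charge $\tfrac{1}{2k}\sum_{j'}b^l_{j'}(\{i\})$ with probability $\tfrac{1}{k}\sum_{j} b^l_j(\{i\})$. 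Setting $x_i = \tfrac{1}{k}\sum_{j'} b^l_{j'}(\{i\})$, the payment term is $\tfrac{k}{2}\sum_i x_i^2$, i.e. $\tfrac{1}{k}\cdot\big(\tfrac{k^2}{2}\sum_i x_i^2\big)$. The one genuine subtlety is the value term: a priori $\tfrac{1}{k}\sum_j v_l(B^l_j)$ is $\tfrac{1}{k}\sum_j \max_s |B^l_j \cap \{a_s=1\}|$, which is \emph{not} obviously $\sum_i x_i$. However, a utility-maximizing player would never report a $b^l_j$ that is not itself (the support of) a clause of $v_l$ — replacing any reported $b^l_j$ by the clause $a_s$ achieving $v_l(B^l_j)$ on its own support weakly increases $v_l(B^l_j)$ (it becomes exactly $|a_s|\ge |B^l_j\cap\{a_s=1\}| = v_l(B^l_j)$ once we also note we may as well take the full support) while only possibly changing the payment term; one must check the net effect is an improvement. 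This ``w.l.o.g.\ each $b^l_j$ is a clause'' reduction is, I expect, the main obstacle, and I would handle it by a careful exchange argument showing that for any report there is a report supported entirely on clauses of $v_l$ with at least as high expected utility — at which point $v_l(B^l_j) = b^l_j(M') = \sum_i b^l_j(\{i\})$ so $\tfrac1k\sum_j v_l(B^l_j)=\sum_i x_i$.

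Once that reduction is in place, the expected utility of any optimal report equals
\[
\sum_{i \in M'}\Big( x_i - \tfrac{1}{2} x_i^2 \Big) \;+\; \Big(\tfrac{1}{k} \!-\! 1\Big)\!\cdot 0,
\]
wait — more precisely it equals $\sum_i(x_i - \tfrac12 x_i^2)$ exactly, since $\tfrac{k}{2}\sum x_i^2 = \tfrac12\sum x_i^2 \cdot k$… I would recompute: with $x_i=\tfrac1k\sum_{j'}b^l_{j'}(\{i\})$, $\sum_i \tfrac{(\sum_{j'}b^l_{j'}(\{i\}))^2}{2k^2} = \tfrac12\sum_i x_i^2$, so the expected utility is precisely $\sum_{i\in M'}(x_i - \tfrac12 x_i^2)$, which is exactly the $(k,1/2)$-sketch objective of $v_l$ on the remaining items (Definition~\ref{def:bsketch}). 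Therefore any report maximizing player $l$'s expected utility maximizes the sketch objective, i.e.\ is a $(k,1/2)$-sketch of $v_l$ (restricted to items left), which is the claim. The write-up should present the exchange argument first, then the two-line expectation computation, then read off the conclusion; the payments were reverse-engineered precisely so the quadratic ``regularizer'' in the sketch definition appears as the expected payment.
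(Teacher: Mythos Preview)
Your overall architecture matches the paper's exactly: compute the expected utility, reduce to the case where every reported $b^l_j$ is a clause of $v_l$, and then observe that the utility equals the $(k,1/2)$-sketch objective $\sum_i (x_i - \tfrac12 x_i^2)$. Your expected-payment computation (after your self-correction to $\tfrac12\sum_i x_i^2$) is right.

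The genuine gap is the exchange argument, which you identify as ``the main obstacle'' but then leave as a promissory note. Two points. First, ``at least as high expected utility'' is not enough for the theorem as stated: you must show that any report containing a non-clause can be \emph{strictly} improved, so that \emph{every} maximizer consists of clauses (otherwise a non-clause report could tie with a sketch and be a maximizer that is not a sketch). Second, the replacement you suggest --- swap a non-clause $b^l_j$ for the clause $a_s$ achieving $v_l(B^l_j)$ --- does work, but the verification is the whole content and cannot be skipped. Concretely, the utility change under this swap is
\[
\sum_{i \in A_s \setminus B^l_j}\Big(1 - x_i - \tfrac{1}{2k}\Big) \;+\; \sum_{i \in B^l_j \setminus A_s}\Big(x_i - \tfrac{1}{2k}\Big),
\]
and each summand is at least $\tfrac{1}{2k}$ because $i\in A_s\setminus B^l_j$ forces $x_i\le 1-\tfrac1k$ while $i\in B^l_j\setminus A_s$ forces $x_i\ge \tfrac1k$; since $B^l_j\neq A_s$ the total is strictly positive. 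The paper carries this out via a two-case analysis (first shrinking $B^l_j$ to $B^l_j\cap A_s$, then expanding to $A_s$), but the single-swap computation above suffices. Once you actually write this down, your proof is complete and identical in spirit to the paper's.
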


Note that this is not tight - a simple observation shows how to turn the $(1-1/e)$-approximation of~\cite{Feige06, DobzinskiS06} into a truthful mechanism for binary XOS bidders. One noteworthy difference is that in Mechanism~\ref{mech:bin}, following the intended protocol is \emph{strictly} more profitable than any deviation. In the previous mechanisms, all bidders get zero utility no matter what they do. Mechanism~\ref{mech:bin} is also considerably simpler than the previous mechanisms.


\begin{proof}
Wlog, it suffices to prove the theorem for the first player. 

Let $x_i = \frac{b^1_1(\{i\}) + \cdots +b^1_k(\{i\})}{k}$ and $B_j = \{i | b^1_j \{ i\} = 1\}$. The expected utility of the first player is 
\[
\sum_{j=1}^k v_1(B_j) -\frac{k}{2} \cdot  \sum_{i = 1}^m x_i^2.
\]
Assume $(b^1_1,...,b^1_k)$ maximizes the first player's expected utility. We are going to first show that $b^1_1, ...,b^1_k$ are all clauses of $v_1$. We prove by contradiction. Wlog suppose $b^1_1$ is not a clause of $v_1$. We want to show that $(b^1_1,...,b^1_k)$ does not maximize the first player's expected utility. Let $a_t$ be a clause of $v_1$ such that $v_1(B_j) = a_t(B_j)$. Let $A_t =  \{i |a_t( \{ i\}) = 1\}$. There are two cases:
\begin{enumerate}
\item $B_j \subseteq A_t$: If the first player reports $(a_t,b^1_2,...,b^1_k)$ instead of $(b^1_1,...,b^1_k)$, her expected utility will be increased by 
\[
a_t(A_t) - a_t(B_j) + \frac{k}{2} \cdot  \sum_{i \in A_t \backslash B_j} \left(x_i^2 -(x_i + 1/k)^2\right) = \frac{k}{2} \cdot  \sum_{i \in A_t \backslash B_j} (2/k - 2x_i/ k - 1/k^2). 
\]
For each $i \in A_t \backslash B_j$, $x_i \leq 1 - 1/k$, so $\sum_{i \in A_t \backslash B_j} (2/k - 2x_i/ k - 1/k^2) > 0$. So the first player will get strictly more utility if she reports $(a_t,b^1_2,...,b^1_k)$ instead of $(b^1_1,...,b^1_k)$.
\item $B_j$ is not a subset of $A_t$: Let $A' = B_j \cap A_t$. Let $a'$ be the binary additive valuation such that $a'(\{i\}) =\1_{i \in A'}$ for $i = 1,...,m$. We know that  $a' \neq b^1_1$. If the first player reports $(a',b^1_2...,b^1_k)$ instead of $(b^1_1,...,b^1_k)$, her expected utility will be increased by 
\[
a_t(A') - a_t(B_j) + \frac{k}{2} \cdot  \sum_{i \in B_j \backslash A'} \left(x_i^2 -(x_i - 1/k)^2\right).
\]
We know that $a_t(A') = a_t(B_j)$, so above term is strictly larger than 0. So the first player will get strictly more utility if she reports $(a',b^1_2,...,b^1_k)$ instead of $(b^1_1,...,b^1_k)$.
\end{enumerate}
Now we have shown that $b^1_1, ...,b^1_k$ are all clauses of $v_1$. The expected utility of the first player can be written as
\[
\sum_{j=1}^k b^1_j(B_j) -\frac{k}{2} \cdot  \sum_{i = 1}^m x_i^2 = k \cdot \sum_{i=1}^m \left(x_i - \frac{1}{2} \cdot x_i^2\right).
\]
This is $k$ times what we want to maximize in the $(k,1/2)$-sketch. So when $(b^1_1,...,b^1_k)$ maximizes the first player's expected utility, it's also a $(k,1/2)$-sketch of $v_1$.
\end{proof}



\section{Missing Proof of Section \ref{sec:lb}}
\label{sec:lbapp}

We now provide a little more detail for the construction. Any claims in the first six bullets are straight-forward without any background in information theory. Formalizing the final bullet is the only tricky part, but will probably appear straight-forward to experts in information theory. The real interest lies in the construction itself.

\begin{itemize}
\item First, draw $S$ to be a uniformly random set of size $m/2$. Draw $T$ uniformly random among all sets of size $m/2$ with $S \cap T = m/3$. Alice will know $S$ (but not $T$, only that $S \cap T = m/3$), and Bob will know $T$ (but not $S$).
\item Each of Alice's (exponentially many) clauses $a_i$ are drawn uniformly random among sets such that $|a_i \cap S| = m/3$ and $|a_i \cap \bar{S}| = m/6$. In other words, Alice's sets are all of size $m/2$, and all intersect $S$ more than a uniformly random set. 
\item Each of Bob's (exponentially many) clauses $b_i$ are drawn uniformly random among sets such that $|b_i \cap T| = m/3$ and $|b_i \cap \bar{T}| = m/6$. 
\item At this point, with very high probability, the optimal welfare is $(3/4-1/108)m$. 
\item Now, notice that there exist (many possible) $a^0_0$ such that $|a^0_0 \cap S| = m/3 = |a^0_0 \cap T|$. There also exist (many possible) $a^1_0$ such that $|a^1_0 \cap S| = m/3 = |\bar{a}^1_0\cap T|$. So with probability $1/2$, add the clause $a^0_0$ to both Alice and Bob's input. With probability $1/2$, add the clause $a^1_0$ to Alice's clauses and $b^1_0 = \bar{a}^1_0$ to Bob's.
\item We can prove that in case $0$, we haven't improved the welfare at all, while in case $1$ the welfare has improved to $m$. So in order to obtain better than a $(3/4-1/108)$ approximation for the decision problem, Alice and Bob must figure out whether they are in case 0 or case 1.
\item But doing so requires someone to communicate useful information about either $a_0$ or $b_0$, which is impossible since all sets appear a priori indistinguishable and there are exponentially many of them. 
\end{itemize}

\begin{theorem}[Restatement of Theorem \ref{thm:lb}]
\label{thm:lbapp}
For any constant $\varepsilon >0$, there exists a distribution over binary XOS valuations such that no simultaneous protocol with communication cost less than $ e^{2Cm/9}$ can have an $\alpha$-approximation to the 2-party BXOS decision problem with probability larger than $\frac{1}{2} +  2e^{-Cm/9}$. Here $\alpha=3/4 - 1/108 + \varepsilon$ and  $C = 2\varepsilon^2$.
\end{theorem}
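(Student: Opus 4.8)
The plan is to realize the construction sketched in the bullets above, verify two welfare estimates, and then run a needle-in-haystack argument tailored to simultaneous protocols.

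First, the hard distribution. Draw $S\subseteq[m]$ uniformly of size $m/2$, then $T$ uniformly of size $m/2$ subject to $|S\cap T|=m/3$; this partitions $[m]$ into blocks $S\cap T$, $S\setminus T$, $T\setminus S$, $\overline{S\cup T}$ of sizes $m/3,m/6,m/6,m/3$. Alice draws $N$ i.i.d.\ ``decoy'' clauses, each uniform subject to $|a_i\cap S|=m/3$ and $|a_i\cap\bar S|=m/6$; Bob draws $N$ decoys relative to $T$ the same way, with $N=e^{\Theta(\varepsilon^2 m)}$ and the constant chosen later so that $N$ dwarfs $e^{2Cm/9}$ while a union bound over $\poly(N)$ events still costs only $e^{-Cm/9}$. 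Finally flip a fair bit $Z$: if $Z=0$, append to \emph{both} lists an extra clause $a_0^0$ with $|a_0^0|=m/2$ and $|a_0^0\cap S|=|a_0^0\cap T|=m/3$; if $Z=1$, append $a_0^1$ (with $|a_0^1|=m/2$, $|a_0^1\cap S|=m/3$, $|a_0^1\cap T|=m/6$) to Alice's list and its complement $\bar a_0^1$ to Bob's. We run the decision problem with threshold $X=m$; since $a_0^1\cup\bar a_0^1=[m]$, when $Z=1$ we have $\SW=m\ge X$, so a correct $\alpha$-approximate protocol must answer ``yes''. One small but crucial observation: conditioned on $S$ alone, $a_0^0$ (and likewise $a_0^1$) has exactly the same law as a decoy $a_i$, because the extra constraint involving $T$ reweights every decoy equally by exchangeability; symmetrically, $a_0^0$ and $\bar a_0^1$ are genuine Bob-decoys conditioned on $T$.

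Next, the welfare when $Z=0$. As $\SW=\max_{i,j}|A_i\cup B_j|$ for binary-additive sets, I would show this is at most $(3/4-1/108+\varepsilon)m$ with probability at least $1-e^{-Cm/9}$. Using the four-block decomposition one computes $\E[|A_i\cup B_j|]=\tfrac{20}{27}m=(3/4-1/108)m$ for every decoy--decoy cross pair, and the cross pairs involving $a_0^0$ have strictly smaller expectation --- the planted sets were defined with exactly this in mind. Each $|A_i\cup B_j|$ is a sum of indicators negatively correlated within each block, so a Chernoff-type tail bound gives deviation $>\varepsilon m$ with probability $e^{-\Omega(\varepsilon^2 m)}$, and a union bound over the $O(N^2)$ cross pairs keeps the total failure probability at $\le e^{-Cm/9}$. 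Hence a correct protocol must answer ``no'' when $Z=0$, outside an event of probability $\le e^{-Cm/9}$.

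Finally, the indistinguishability step, which is the heart of the matter. By averaging over the protocol's random coins it suffices to treat a deterministic simultaneous protocol against this distribution: Alice sends $M_A=f_A(I_A)$, Bob sends $M_B=f_B(I_B)$, each fewer than $e^{2Cm/9}$ bits, and the referee outputs $g(M_A,M_B)$; the goal is $\Pr[g=Z]\le\tfrac12+e^{-Cm/9}$, which together with the previous paragraph gives the claimed $\tfrac12+2e^{-Cm/9}$. By the observation above, conditioned on $S$ Alice's whole list is $N+1$ i.i.d.\ decoys no matter the value of $Z$, so $M_A\perp Z\mid S$ and $I(Z;M_A)=0$; symmetrically $I(Z;M_B)=0$. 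The danger is that the referee still extracts $Z$ from the \emph{pair} $(M_A,M_B)$: when $Z=1$ the planted clauses form a coupled pair $(a,\bar a)$, so $(I_A,I_B)$ jointly depends on $Z$, and $g$ must be prevented from detecting that coupling. This is where all the hypotheses enter --- the planted clause occupies a uniformly random, exchangeable slot among $N\gg 2^{|M_A|}$ candidates, so $M_A$ barely moves the referee's posterior over which clause is special or what set it is, hence conveys essentially nothing about which set to look for in Bob's message; and because the protocol is simultaneous, Bob cannot use $M_A$ to locate his own special clause either. Concretely I would expand $I(Z;M_A M_B)=I(Z;M_A)+I(Z;M_B\mid M_A)=I(Z;M_B\mid M_A)$ and bound the remaining term using this conditional-independence structure together with $|M_A|\ll\log N$, then convert small mutual information into small distinguishing advantage. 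This last step is the main obstacle: showing that two short messages, sent simultaneously, cannot surface the planted correlation even though each message is on its own statistically independent of $Z$, which requires carefully using that the planted clauses are buried among $e^{\Theta(\varepsilon^2 m)}$ i.i.d.\ decoys and that, by simultaneity, no information flows between Alice and Bob before the messages are committed.
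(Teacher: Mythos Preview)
Your construction and welfare computation are the paper's: same block structure on $(S,T)$, same decoy law, same $\E[|A_i\cup B_j|]=20m/27$, same Chernoff-plus-union-bound; the planted-pair cases also check out. The gap is entirely in the indistinguishability step, which you flag as ``the main obstacle'' and then leave at the level of intuition --- and the specific decomposition you propose leads into the hardest corner rather than out of it.

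You write $I(Z;M_AM_B)=I(Z;M_A)+I(Z;M_B\mid M_A)=I(Z;M_B\mid M_A)$ and hope to bound the last term. But conditioning on $M_A$ is exactly what couples $Z$ to Bob's side: $M_A$ leaks something about Alice's planted clause, which when $Z=1$ is the complement of Bob's, so there is no clean conditional-independence statement for $(Z,M_B)$ given $M_A$, and the bare fact $|M_A|\ll\log N$ does not by itself control this conditional mutual information. The paper takes a different and cleaner route. First condition on $(S,T)$, which is independent of $Z$, so $I(Z;\Pi_1\Pi_2)\le I(Z;\Pi_1\Pi_2\mid ST)$. Then replace $Z$ by the planted pair $(U_1,U_2)$ of which it is a function. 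The key structural fact --- and this is what your sketch is missing --- is that \emph{given $S,T,U_1,U_2$}, Alice's and Bob's remaining inputs are independent; this lets one split the joint term into $I(\Pi_1;U_1U_2\mid ST)+I(\Pi_2;U_1U_2\mid ST)$ and then peel off $T$ (resp.\ $S$) via $v_1\perp T\mid S,U_1$ to reach $I(\Pi_1;U_1\mid S)+I(\Pi_2;U_2\mid T)$. Only now does the needle-in-haystack bound apply: given $S$, the planted clause is one of $l$ clauses that are i.i.d.\ from $D_S$ (precisely your ``same law as a decoy'' observation), so a direct-sum argument gives $I(\Pi_1;U_1\mid S)\le\tfrac{1}{l}\sum_j I(\Pi_1;A_j\mid S)\le |\Pi_1|/l$, and Fano converts this to the advantage bound. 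In short: condition on $(S,T)$ and route through the planted sets \emph{before} separating Alice from Bob; attempting to decouple by conditioning on one party's message first does not close.
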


\begin{proof}
This proof uses mutual information (Definition \ref{def:muinfo}) and other information theory tools. We also use generalized Chernoff bound for negatively correlated random variables. We mention these tools in Appendix \ref{sec:prelimapp}. 

To start the proof, we sample Alice and Bob's valuations $v_1$ and $v_2$ from the below procedure. It's basically following the ideas above. Sampling in this specific way will make the later part of the proof easier. Here $M$ is the bit we want to hide in Alice and Bob's inputs. Later we will show $M$ is equal to the output of the decision problem with high probability.
\begin{enumerate}
\item Sample $S$ and $T$ uniformly randomly from all the pairs of sets that satisfy $S,T \subseteq [m]$, $|S|=|T| = m/2$, $|S \cap T| = m/3$. 
\item Sample bit $M$ uniformly randomly from $\{0,1\}$. Let $S^c = [m] \backslash S$ and $T^c = [m] \backslash T$. 
\begin{enumerate}
\item If $M = 1$ then sample $U_1$ uniformly randomly from all the sets that satisfy $S\cap T^c \subseteq U_1$, $|U_1 \cap (S \cap T)| = m/6$, $|U_1\cap (S^c \cap T^c)| = m/6$ and $|U_1 \cap (T \cap S^c)| = 0$. And let $U_2 = [m] \backslash U_1$. 
\item If $M = 0$ then sample $U_1$ uniformly randomly from all the sets that satisfy $S\cap T \subseteq U_1$, $|U_1\cap (S^c \cap T^c)| = m/6$, $|U_1 \cap (S \cap T^c)| = 0$ and $|U_1 \cap (T \cap S^c)| = 0$. And let $U_2 = U_1$. 
\end{enumerate}
\item Let $l = e^{4Cm/9}$. Sample $J_1,J_2$ uniformly randomly from $[l]$. Set $A_{J_1} = U_1$ and $B_{J_2} = U_2$. 
\begin{enumerate}
\item Define $D_S$ to be the uniform distribution over all sets $X$ such that $X \subseteq[m]$, $|X \cap S| = m/3$ and $|X \cap S^c| = m/6$. For $j =1,...,l$ and $j \neq J_1$ sample $A_j$ from $D_S$. 
\item Similarly define $D_T$ to be the uniform distribution over all sets $X$ such that $X \subseteq[m]$, $|X \cap T| = m/3$ and $|X \cap T^c| = m/6$. For $j =1,...,l$ and $j \neq J_2$ sample $B_j$ from $D_T$. 
\end{enumerate}
\item Finally we set $v_1$ and $v_2$ as following:
\begin{enumerate}
\item For $j=1,...,l$, define $a_j$ to be the binary additive valuation such that $a_j(\{i\}) =  \1_{i \in A_j}$ for all $i \in [m]$. And set $v_1$ to be the binary XOS valuation with clauses $a_1,...,a_l$. 
\item For $j=1,...,l$, define $b_j$ to be the binary additive valuation such that $b_j(\{i\}) =  \1_{i \in B_j}$ for all $i \in [m]$. And set $v_2$ to be the binary XOS valuation with clauses $b_1,...,b_l$. 
\end{enumerate}
\end{enumerate}
Since we are working on a specific distribution, wlog we can assume the simultaneous protocol is deterministic. Let $\Pi_1$ be the message sent by Alice and $\Pi_2$ be the message sent by Bob in the simultaneous protocol. The rest of the proof proceeds in two steps. We will first show that $\Pi_1$ and $\Pi_2$ together convey very little information about $M$ to the auctioneer using information theoretic argument. After that, we will show that if the auctioneer has very little information about $M$, she cannot solve BXOS decision problem with large probability. 

For the first step, we prove the following lemma. It's basically stating that $I(\Pi_1\Pi_2 ; M)$ - the mutual information (Definition \ref{def:muinfo}) between $\Pi_1\Pi_2$ (messages sent by Alice and Bob) and the bit $M$ is no more than the length of messages over the number of sets Alice/Bob has in its input. Here's an overview of the proof. The proof can be broken down into two parts. In the first part we show $I(\Pi_1\Pi_2 ; M)$ is upper bounded by $I(\Pi_1;U_1|S)+ I(\Pi_2;U_2|T)$. As defined above, $U_1$ and $U_2$ are special sets which contain information about $M$. The proof of the first part is roughly just to use independence of random variables and the fact that $M$ is a function of $U_1$ and $U_2$ to transit through mutual information terms. In the second part of the proof, we show $I(\Pi_1;U_1|S) \leq \frac{|\Pi_1|}{l}$. The idea is that when only $S$ is given, $U_1$ looks similar to other sets in Alice's input. We can prove that the amount of information $\Pi_1$ conveys about $U_1$ is the same as the amount of information $\Pi_1$ conveys about some other set in Alice's input. As $\Pi_1$'s entropy is at most its length, we can conclude $I(\Pi_1;U_1|S) \leq \frac{|\Pi_1|}{l}$.

\begin{lemma}
\label{lem:lbit}
\[
I(\Pi_1\Pi_2 ; M) \leq \frac{|\Pi_1| + |\Pi_2|}{l} =  e^{-2Cm/9}.
\]
\end{lemma}
\begin{proof}
Since $M$ is independent with $S$ and $T$, we have
\[
I(\Pi_1\Pi_2;M) \leq I(\Pi_1\Pi_2ST;M) = I(ST;M) +I(\Pi_1\Pi_2;M|ST) =  0 + I(\Pi_1\Pi_2;M|ST). 
\]
Since $M$ is a function of $U_1$ and $U_2$ ($M = \1_{U_1=U_2}$), we have
\[
I(\Pi_1\Pi_2;M|ST) \leq I(\Pi_1\Pi_2;U_1U_2|ST) = I(\Pi_1;U_1U_2|ST)+I(\Pi_2;U_1U_2|ST\Pi_1).
\]
Notice that 
\begin{eqnarray*}
I(\Pi_2;U_1U_2|ST\Pi_1) &=& I(\Pi_2;U_1U_2\Pi_1|ST) - I(\Pi_2;\Pi_1|ST) \\
&\leq& I(\Pi_2;U_1U_2\Pi_1|ST)\\
&=& I(\Pi_2;U_1U_2|ST) + I(\Pi_2;\Pi_1|STU_1U_2) \\
&\leq& I(\Pi_2;U_1U_2|ST) + I(v_2;v_1|STU_1U_2) \\
&=&  I(\Pi_2;U_1U_2|ST).
\end{eqnarray*}
The second last step is because $\Pi_1$ is a function of $v_1$ and $\Pi_2$ is a function of $v_2$. The last step is because after $S,T,U_1,U_2$ are sampled, $v_1$ and $v_2$ are sampled using independent randomness. 

It's easy to check that when $S,T$ are given, $U_2$ is a function of $U_1$. So we have
\begin{eqnarray*}
I(\Pi_1;U_1U_2|ST) &=& I(\Pi_1;U_1|ST) \\
&=& I(\Pi_1;U_1T|S) - I(\Pi_1;T|SU_1) \\
&\leq& I(\Pi_1;U_1T|S) \\
&=& I(\Pi_1;U_1|S) + I(\Pi_1;T|SU_1) \\
&\leq&I(\Pi_1;U_1|S) + I(v_1;T|SU_1) \\
&=&I(\Pi_1;U_1|S).
\end{eqnarray*}
The second last step is because $\Pi_1$ is a function of $v_1$. The last step is because when fixing $S$ and $U_1$, $v_1$ is independent with $T$. Similarly we also get 
\[
I(\Pi_2;U_1U_2|ST) \leq I(\Pi_2;U_2|T).
\]
So far, we get 
\begin{eqnarray*}
I(\Pi_1\Pi_2;M) &\leq& I(\Pi_1\Pi_2;M|ST) \leq I(\Pi_1;U_1U_2|ST) +I(\Pi_2;U_1U_2|ST\Pi_1) \\
 &\leq& I(\Pi_1;U_1U_2|ST) +I(\Pi_2;U_1U_2|ST) \leq I(\Pi_1;U_1|S)+ I(\Pi_2;U_2|T).
\end{eqnarray*}
Finally, we just need to bound $I(\Pi_1;U_1|S)$ and  $I(\Pi_2;U_2|T)$. First it's easy to check that when $S$ is fixed, $A_{J_1}$ is also distributed as $D_S$. Therefore given $S$, fixing $J_1$ does not change $v_1$'s distribution . So $I(v_1;J_1|S) = 0$. As $\Pi_1$ is a function of $v_1$, we also have $I(\Pi_1;J_1|S) = 0$. Therefore,
\[
I(\Pi_1;U_1|S) = I(\Pi_1;A_{J_1}|S) \leq I(\Pi_1;A_{J_1} J_1|S) = I(\Pi_1;J_1|S) + I(\Pi_1;A_{J_1}|SJ_1) = I(\Pi_1;A_{J_1}|SJ_1). 
\]
We also have
\[
I(\Pi_1; A_1A_2...A_l|S) \leq H(\Pi_1|S) \leq |\Pi_1|. 
\]
On the other hand, we know that for all $j=1,...,l-1$, $I(A_1...A_j;A_{j+1}|S) = 0$. By Fact \ref{fact:it2}, we have
\[
I(\Pi_1;A_1A_2...A_l|S) = \sum_{j=1}^l I( \Pi_1;A_j| SA_1...A_{j-1}) \geq \sum_{j=1}^l I(\Pi_1; A_j|S).
\]
Then we have
\[
I(\Pi_1;U_1|S) \leq I(\Pi_1;A_{J_1}|SJ_1)= \frac{1}{l} \cdot \sum_{j=1}^l I(\Pi_1; A_j|S) \leq \frac{|\Pi_1|}{l}. 
\]
Similarly we have
\[
 I(\Pi_2;U_2|T) \leq \frac{|\Pi_2|}{l}. 
\]
To sum up, we get
\[
I(\Pi_1\Pi_2;M) \leq I(\Pi_1;U_1|S)+ I(\Pi_2;U_2|T) \leq  \frac{|\Pi_1| + |\Pi_2|}{l}. 
\]
\end{proof}
Now we are going to show that the auctioneer cannot solve the decision problem with large probability based on Lemma \ref{lem:lbit}. Let $D$ be the output from the auctioneer for the BXOS decision problem. ($D=1$ if the answer is "yes", $D=0$ if the answer is no.) Since $D$ is a function of $\Pi_1$ and $\Pi_2$, we have
\[
I(D;M) \leq I(\Pi_1\Pi_2;M) \leq  e^{-2Cm/9}.
\]
Let $e = \Pr[D \neq M]$. By Fano's inequality ,
\[
 1- (e-1/2)^2\geq H(e) \geq H(M|D) = H(M) - I(M;D) \geq 1 - e^{-2Cm/9}. 
\]
So $e \geq 1/2 - e^{-Cm/9}$. This basically states that if mutual information between $M$ and $D$ is small, then the probability of $M\neq D$ is large.

Finally we will show that since $\Pr[D \neq M]$ is large, the algorithm fails to solve the decision problem with large probability. Let $W$ be the indicator variable such that $W =1$ if the one of the following three events happens and $W=0$ otherwise:
\begin{enumerate}
\item $\alpha  m \leq \SW(v_1,v_2) < m$. 
\item $\SW(v_1,v_2) < \alpha m$ and $M = 1$.
\item $\SW(v_1,v_2) \geq  m$ and $M = 0$. 
\end{enumerate}
We know that if $W = 0$, then $M$ is the unique correct output of the decision problem. So the probability that auctioneer fails to solve the decision problem is at least $\Pr[M \neq D] - \Pr[W = 1]$.

Now we need to upper bound $Pr[W=1]$. 
\begin{enumerate}
\item When $M = 1$, we know that $\SW(v_1,v_2) \geq \SW(a_{J_1},b_{J_2}) = |U_1 \cup U_2| = m$. So $\Pr[W = 1|M = 1] =0$. 
\item When $M = 0$, we know that $\SW(v_1,v_2) = \max_{j_1,j_2 \in [l]} \SW(a_{j_1},b_{j_2})$. Therefore, 
\[
\Pr[W= 1|M=0] \leq \Pr[\SW(v_1,v_2) > \alpha m |M=0] \leq \sum_{j_1,j_2 \in [l] } \Pr[\SW(a_{j_1},b_{j_2})\geq\alpha m|M=0].
\]
Define $X_i = \1_{i \in A_{j_1} \cup B_{j_2}}$. We are going to upper bound $\Pr[\SW(a_{j_1},b_{j_2})\geq \alpha m|M=0]$ in four different cases. 
\begin{enumerate}
\item When $j_1 = J_1$, $j_2 = J_2$: $\SW(a_{j_1},b_{j_2}) = |U_1| = m/2$. In this case $\Pr[\SW(a_{j_1},b_{j_2})\geq\alpha m|M=0] = 0$. 
\item When $j_1 = J_1$ and $j_2 \neq J_2$: Fix $S$ and $T$. We know $X_i = 1$ when $i \in U_1$. $\E[X_i] = 2/3$ when $i \in T \cap S^c$. $\E[X_i] = 1/3$ when $i \in T^c \backslash U_1$. So
\[
\E[\SW(a_{j_1},b_{j_2})] = \E[\sum_{i=1}^m X_i] = m /2 + (2/3) \cdot (m/6) + (1/3) \cdot (m/3) = 13m/18. 
\]
Although $X_i$'s are not independent, but it is easy to check that they are negatively correlated. By generalized chernoff bound for negative correlated random variables, we get
\[
\Pr[\sum_{i=1}^m X_i \geq \alpha m ] \leq exp( -2(\alpha - 13/18)^2 m) \leq e^{-Cm}. 
\]
Thus in this case, we have $Pr[\SW(a_{j_1},b_{j_2})\geq\alpha m|M=0] \leq e^{-Cm}$.
\item When $j_2 = J_2$ and $j_1 \neq J_1$: This is similar to the previous case, we can get $\Pr[\SW(a_{j_1},b_{j_2})\geq\alpha m|M=0] \leq e^{-Cm}$.
\item When $j_1 \neq J_1$ and $j_2 \neq J_2$: Fix $S$ and $T$. We know $\E[X_i] = 8/9$ when $i \in S \cap T$. $\E[X_i] = 7/9$ when $i \in (T \cap S^c) \cup (S \cap T^c)$. $\E[X_i] = 5/9$ when $i \in S^c \cap T^c$. So
\[
\E[\SW(a_{j_1},b_{j_2})] = \E[\sum_{i=1}^m X_i] = 20m/27. 
\]
Although $X_i$'s are not independent, but it is easy to check that they are negatively correlated. By generalized chernoff bound for negative correlated random variables, we get
\[
\Pr[\sum_{i=1}^m X_i \geq \alpha m ] \leq exp( -2(\alpha - 20/27)^2 m) =e^{-Cm}. 
\]
Thus in this case, we have $\Pr[\SW(a_{j_1},b_{j_2})\geq\alpha m|M=0] \leq e^{-Cm}$.
\end{enumerate}
Therefore, we have 
\[
\Pr[W= 1|M=0] \leq l^2 \cdot e^{-Cm} =  e^{-Cm/9}.
\]
To sum up, the auctioneer fails with probability at least
\[
\Pr[M \neq D] - \Pr[W = 1] = 1/2 - 2 e^{-Cm/9}.
\]
\end{enumerate}
\end{proof}


\bibliographystyle{alpha}
\bibliography{bib} 
\end{document}